\newcounter{note}[section] 
\theoremstyle{plain}
\newtheorem{thm}{\protect\theoremname}
  \theoremstyle{plain}
  \newtheorem{prop}[thm]{\protect\propositionname}
  \theoremstyle{plain}
  \newtheorem{lem}[thm]{\protect\lemmaname}
  \theoremstyle{definition}
  \newtheorem{defn}[thm]{\protect\definitionname}
    \theoremstyle{plain}
  \newtheorem{cor}[thm]{\protect\corollaryname}
  \providecommand{\corollaryname}{Corollary}
  \providecommand{\definitionname}{Definition}
  \providecommand{\lemmaname}{Lemma}
  \providecommand{\propositionname}{Proposition}
\providecommand{\theoremname}{Theorem}
\begin{document}
\date{}
\thispagestyle{empty}

\global\long\def\A{\mathcal{A}}
 \global\long\def\B{\mathcal{B}}
 \global\long\def\L{\mathcal{L}}
 \global\long\def\Lb{\bar{\L}}
 \global\long\def\R{\mathcal{R}}
 \global\long\def\polylog{\mathrm{polylog\,}}
 \global\long\def\P{\mathcal{P}}
 \global\long\def\C{\mathcal{C}}
 \global\long\def\int{\mathrm{int}}
 \global\long\def\Rs{\R_{\mathrm{save}}}
 \global\long\def\Q{\mathcal{Q}}
\global\long\def\Rb{\bar{\R}}
\global\long\def\F{\mathcal{F}}
\global\long\def\Le{\L_{\mathrm{ext}}}
\global\long\def\Lg{\L_{\mathrm{grid}}}
\global\long\def\Lc{\L_{\mathrm{cut}}}

\global\long\def\eps{\varepsilon}

\title{Approximation Schemes for Maximum Weight \\Independent Set of Rectangles}
\author{
Anna Adamaszek\footnote{Max-Planck-Institut f\"ur Informatik, Saarbr\"ucken, Germany,
       \texttt{\{anna,awiese\}@mpi-inf.mpg.de}}
\and
Andreas Wiese\footnotemark[1]
}

\maketitle
\begin{abstract}
In the Maximum Weight Independent Set of Rectangles (MWISR) problem
we are given a set of $n$ axis-parallel rectangles in the 2D-plane,
and the goal is to select a maximum weight subset of pairwise non-overlapping
rectangles. Due to many applications, e.g. in data mining, map labeling
and admission control, the problem has received a lot of attention
by various research communities. We present the first $(1+\eps)$-approximation
algorithm for the MWISR problem with quasi-polynomial running time
$2^{\mathrm{poly}(\log n/\eps)}$. In contrast, the best known polynomial time
approximation algorithms for the problem achieve superconstant approximation
ratios of $O(\log\log n)$ (unweighted case) and $O(\log n/\log\log n)$
(weighted case). 

Key to our results is a new geometric dynamic program which recursively
subdivides the plane into polygons of bounded complexity. We provide
the technical tools that are needed to analyze its performance. In particular, we present a 
method of partitioning the plane into small and simple areas such that the rectangles of an optimal
solution are intersected in a very controlled manner. Together with a novel application
of the weighted planar graph separator theorem due to Arora et al.~\cite{Arora1998}
this allows us to upper bound our approximation ratio by $1+\eps$. 

Our
dynamic program is very general and we believe that it will be useful
for other settings. In particular, we show that, when parametrized
properly, it provides a \emph{polynomial time }$(1+\eps)$-approximation
for the special case of the MWISR problem when each rectangle is relatively large in at least one dimension.
Key to this analysis is a method to tile
the plane in order to approximately describe the topology of these rectangles
in an optimal solution.
This technique might be a useful
insight to design better polynomial time approximation algorithms or even 
a PTAS for the MWISR problem. In particular, note that our
results imply that the MWISR problem is not $\mathsf{APX}$-hard, unless
$\mathsf{NP}\subseteq\mathsf{DTIME}(2^{\polylog(n)})$. 
\end{abstract}
\newpage{}

\setcounter{page}{1}


\section{Introduction}

One of the most fundamental problems in combinatorial optimization
is the \textsc{Independent Set} problem: given an undirected graph,
find a set of pairwise non-adjacent vertices with maximum total weight.
While the general problem is essentially intractable (it is $\mathsf{NP}$-hard
to approximate with a factor $n^{1-\eps}$ for any $\eps>0$~\cite{zuck07}),
many special cases allow much better approximation ratios.

One extensively studied setting are graphs which stem from geometric
shapes in the 2D-plane. Given a set of geometric objects in the plane,
the goal is to find a set of pairwise non-overlapping objects with
maximum total weight. Depending on the complexity of these shapes,
the approximation factors of the best known polynomial time algorithms
range from $1+\eps$ for fat objects~\cite{EJS2001}, to $n^{\eps}$
for arbitrary shapes~\cite{FoxPach2011}. Observe that the latter
is still much better than the complexity lower bound of $n^{1-\eps}$
for arbitrary \textsc{Independent Set} instances.

Interestingly, there is a very large gap between the best known approximation
factors when the considered objects are squares of arbitrary sizes
and when they are rectangles. For squares, a $(1+\eps)$-approximation
has been known for several years~\cite{EJS2001}. For the rectangles,
the best known approximation factors are $O(\log n/\log\log n)$ for
the general case~\cite{ChanHarPeled2009}, and $O(\log\log n)$ for
the cardinality case~\cite{CC2009}. Importantly, no constant factor
approximation algorithms are known for rectangles, while the best
known hardness result is $\mathsf{NP}$-hardness~\cite{fowler1981optimal,imai1983finding}.
These gaps remain despite a lot of research on the problem~\cite{AKS1998,berman2001improved,CC2009,chan2004note,ChanHarPeled2009,fowler1981optimal,imai1983finding,KMP1998,lewin2002routing,N2000}, which is particularly motivated by its many applications in areas
such as channel admission control~\cite{lewin2002routing}, map labeling~\cite{AKS1998,doerschler1992rule},
and data mining~\cite{fukuda2001data,KMP1998,lent1997clustering}.

Since even for arbitrary shapes the best known hardness result is
$\mathsf{NP}$-hardness, it seems that more sophisticated algorithmic
techniques and/or complexity results are needed to fully understand
the \textsc{Independent Set} problem in the geometric setting.


\subsection{Related Work}

The maximum weight independent set of rectangles problem has been
widely studied. There are several $O(\log n)$ approximation algorithms
known~\cite{AKS1998,KMP1998,N2000}, and in fact the hidden constant
can be made arbitrarily small since for any $k$ there is a $\left\lceil \log_{k}n\right\rceil $-approximation
algorithm due to Berman et al.~\cite{berman2001improved}. Eventually,
a $O(\log n/\log\log n)$-approximation algorithm has been presented by Chan and Har-Peled~\cite{ChanHarPeled2009}.
Some algorithms have been studied which perform better for special
cases of~MWISR. 
There is a $4q$-approximation algorithm due to Lewin-Eytan, Naor,
and Orda~\cite{lewin2002routing} where $q$ denotes the size of
the largest clique in the given instance. In case that the optimal independent set has size
$\beta n$ for some $\beta\le1$, Agarwal and Mustafa present an algorithm
which computes an independent set of size $\Omega(\beta^{2}n)$~\cite{agarwal2006independent}.

In a break-through result, Chalermsook and Chuzhoy give a $O(\log\log n)$-approximation
algorithm for the cardinality case~\cite{CC2009}, which is based
on the natural LP-relaxation of the problem. In fact, it is a challenging
open problem to determine the exact integrality gap of the LP. Currently
the best known upper bounds for it are $O(\log n/\log\log n)$~\cite{ChanHarPeled2009}
for the weighted case, and $O(\log\log n)$~\cite{CC2009} for the
cardinality case.
The best known lower bounds on the integrality gap are $3/2$~\cite{CC2009} and 2~\cite{MWISR-LP-gap-2-Soto},
both already for the cardinality case. There is a strong connection
between the integrality gap of the LP and the maximum ratio between the coloring- and the clique-number
of a set of rectangles, see~\cite{chalermsook2011coloring}
and references therein. 

Interestingly, for the special case when all given rectangles are
squares of arbitrary sizes, the problem is much better understood.
There is a polynomial time $(1+\eps)$-approximation algorithm by Erlebach, Jansen and Seidel~\cite{EJS2001},
which works even for the more general case of arbitrary fat objects.
For the unweighted squares, and also for the more general setting
of unweighted pseudo-disks, even a simple local search algorithm gives
a PTAS~\cite{ChanHarPeled2009}. 

Although the complexity is well-understood in the setting of squares,
for rectangles it is still widely open.
In particular, the techniques of the above approximation schemes for
squares do not carry over to rectangles. The PTAS from~\cite{EJS2001}
requires that every horizontal or vertical line intersects
only a bounded number of objects of the optimal solution that are relatively large in at least
one dimension. For rectangles, this number can be up to $\Theta(n)$ which
is too much. For local search, one can easily construct examples showing
that for any size of the local search neighborhood (which gives quasi-polynomial
running time) the optimum is missed by an arbitrarily large (superconstant) factor.


For arbitrary shapes in the plane (which can be modeled as a set of
line segments) Agarwal and Mustafa~\cite{agarwal2006independent}
give an algorithm which finds an independent set of size $\sqrt{OPT/\log(2n/OPT)}$
which yields a worst case approximation factor of $n^{1/2+o(1)}$.
This was improved by Fox and Pach to $n^{\eps}$ for any $\eps>0$~\cite{FoxPach2011}.
Note that already for lines with at most one bend (i.e., lines forming an ``L'') the 
natural LP-relaxation suffers from an integrality gap of $\Omega(n)$.

To the best of our knowledge, no inapproximability result is known
for MWISR (and not even for arbitrary shapes in the 2D-plane). In
particular, an important open problem is to construct a polynomial time constant factor
approximation algorithm for MWISR.


\subsection{Our Contribution and Techniques}

We present the first $(1+\eps)$-approximation algorithm for the Maximum
Weight Independent Set of Rectangles problem with a quasi-polynomial
running time of $2^{\mathrm{poly}(\log n/\eps)}$. In contrast, the
best known polynomial time approximation algorithms achieve approximation
ratios of $O(\log n/\log\log n)$ for the weighted case~\cite{ChanHarPeled2009},
and $O(\log\log n)$ for the cardinality case~\cite{CC2009}. We
are not aware of any previous algorithms for the problem with quasi-polynomial
running time which would give better bounds than the above mentioned
polynomial time algorithms. Our quasi-PTAS rules out the possibility
that the problem is $\mathsf{APX}$-hard, assuming that $\mathsf{NP}\nsubseteq\mathsf{DTIME}(2^{\polylog(n)})$,
and thus it suggests that it should be possible to obtain significantly
better polynomial time approximation algorithms for the problem. In
addition, we present a PTAS for the case that each rectangle is $\delta$-large
in at least one dimension, i.e., if at least one of its edges has
length at least $\delta N$ for some constant $\delta>0$, assuming
that in the input only integer coordinates within $\{0,...,N\}$ occur.

Key to our results is a new geometric dynamic program \emph{GEO-DP}
whose DP-table has one entry for each axis-parallel polygon $P$ with
at most $k$ edges, where $k$ is a fixed parameter. Such a cell corresponds
to a subproblem where the input consists only of the input rectangles contained
in~$P$. The algorithm solves each such subproblem by trying every
possible subdivision of $P$ into at most $k$ polygons with again
at most $k$ edges each, and selects the partition with maximum weight
according to the DP-cells of all subproblems.

For analyzing this algorithm, we show that there is a recursive sequence
of partitions such that the rectangles of $OPT$ intersected within
that sequence have a total weight of at most $\eps\cdot OPT$. For
our QPTAS, we first provide a method to tile the plane into polygons
such that each rectangle of the optimal solution is intersected only
$O(1)$ times. Using a new stretching method for the input area, we
can guarantee that each face of our partition either contains only
rectangles of relatively small total weight, or contains at most one
rectangle. With a planar separator theorem from~\cite{Arora1998}
we can find a cut in the partition such that the intersected rectangles
have only marginal weight and both sides of the cut contain rectangles
whose total weight is upper bounded by $\frac{2}{3}OPT$. When using
these cuts in every iteration, the recursion terminates after $O(\log n / \eps)$
levels and we show that by setting $k:=(\frac{\log n}{\eps})^{O(1)}$
we obtain an approximation ratio of $1+\eps$ in quasi-polynomial
time.

We demonstrate the potential of our new algorithm by proving that
it yields a \emph{polynomial }time $(1+\eps)$-approximation algorithm
for the special case when each rectangle is large in at least one
dimension, as defined above. For this result, we employ a finer partition
of the plane which ensures that in the initial partition only large
rectangles with small total weight are intersected. Even more, each
face of the subdivision is either a path or a cycle of a small width
(strictly smaller than the longer edge of each large rectangle). Using
this, we show that GEO-DP solves each resulting subproblem within
an accuracy of $1+\eps$ by using only subpolygons with at most
a constant number of edges. Therefore, we prove that GEO-DP parametrized
by $k:=(1/\eps)(1/\delta)^{O(1)}$ gives a polynomial time $(1+\eps)$-approximation
for $\delta$-large rectangles (for any constant $\delta > 0$). In fact, this yields
a PTAS for the case that the lengths of the longer edges of the rectangles
differ by at most a constant factor (when the parameter $k$ is chosen appropriately). We would like to point out that
the initial partition might be a useful ingredient for constructing
a PTAS for the general problem since it sparsely describes the topology
of the large rectangles while losing only an $\eps$-fraction
of their total weight.

We can well imagine that our algorithmic approach finds applications
for solving the \textsc{Independent Set} problem for more general
geometric shapes. Given the large gaps in terms of approximation and
hardness results for the \textsc{Independent Set} problem in such
settings
we hope that our new techniques will help to bridge these gaps. 
Finally, we would like to note that our DP might well yield
a constant factor approximation or even a PTAS for MWISR when parametrized
by a sufficiently large parameter $k$ independent of $n$, e.g., $k=(1/\eps)^{O(1)}$.
We leave this as an open question.


\subsection{Problem Definition}
We are given a set of $n$ axis-parallel rectangles $\R=\{R_{1},...,R_{n}\}$
in the $2$-dimensional plane. Each rectangle $R_{i}$ is specified by
two opposite corners $(x_{i}^{(1)},y_{i}^{(1)}) \in\mathbb{N}^{2}$ and 
$(x_{i}^{(2)},y_{i}^{(2)})\in\mathbb{N}^{2}$, with $x_{i}^{(1)} < x_{i}^{(2)}$ and 
$y_{i}^{(1)} < y_{i}^{(2)}$, and a weight $w(R_{i})\in\mathbb{R}^{+}$. 
We define the area of a rectangle as the open set $R_{i}:=\{(x,y)|x_{i}^{(1)}<x<x_{i}^{(2)}\wedge y_{i}^{(1)}<y<y_{i}^{(2)}\}$.
The goal is to select a subset of rectangles $\R'\subseteq\R$ such that for any two
rectangles $R,R'\in\R'$ we have $R \cap R' = \emptyset$. Our objective is to maximize the total weight 
of the selected rectangles
$w(\R'):=\sum_{R\in\R'}w(R)$. 
For each rectangle $R_{i}$
we define its \emph{width }$g_{i}$ by $g_{i}:=x_{i}^{(2)}-x_{i}^{(1)}$ and its \emph{height }$h_{i}$ by $h_{i}=y_{i}^{(2)}-y_{i}^{(1)}$. 

By losing at most a (multiplicative) factor of $(1-\eps)^{-1}$ in the objective, we assume
that $1\le w(R)\le n/\eps$ for each rectangle $R\in\R$. First, we
scale the weights of all rectangles such that $\max_{R\in\R}w(R)=n/\eps$.
Since then $OPT\ge n/\eps$, all rectangles $R'$ with
$w(R')<1$ can contribute a total weight of at most $n\cdot1=\eps\cdot\frac{n}{\eps}\le\eps\cdot OPT$.
We remove them from the instance which reduces the weight of the optimal
solution by at most $\eps\cdot OPT$.


\section{\label{sec:Algorithm}The Algorithm GEO-DP}

Our results are achieved by using a new geometric dynamic programming
algorithm which we call GEO-DP and which we define in this section.
The algorithm is parametrized by a value $k\in\mathbb{N}$ which affects both
the running time and the achieved approximation ratio. In brief,
the algorithm has a DP-cell for each axis-parallel polygon $P$ with at most
$k$ edges, which represents the subproblem
consisting of all rectangles contained in $P$. When computing a
(near-optimal) solution for this subproblem, GEO-DP tries all possibilities
to subdivide $P$ into at most $k$ polygons with at most $k$
edges each and recurses.

Without loss of generality we assume that $x_{i}^{(1)},y_{i}^{(1)},x_{i}^{(2)},y_{i}^{(2)}\in\{0,...,2n-1\}$
for each $R_{i}\in\R$. If this is not the case, we transform the
instance in polynomial time into a combinatorially equivalent instance
with the latter property.

Fix a parameter $k\in\mathbb{N}$. Let $\P$ denote the set of all
polygons within the $[0,2n-1]\times[0,2n-1]$ input square whose corners have only integer coordinates and which have at most
$k$ axis-parallel edges each. Whenever we speak of a polygon, we allow
it to have holes and we do not require it to be simple. In particular, by the edges of a polygon
with holes we mean both the outer edges and the edges bounding the
holes. We introduce a DP-cell for each polygon $P\in\P$, where a cell corresponding to $P$ stores a near-optimal
solution $sol(P)\subseteq\R_P$ where $\R_P$ denotes the set of all rectangles from 
$\R$ which are contained in $P$. 

\begin{prop}\label{prop:DP-cells}
The number of DP-cells is at most $n^{O(k)}$. 
\end{prop}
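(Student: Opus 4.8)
The plan is a direct counting argument: I will show that each polygon in $\P$ is specified by a bounded amount of data, and that there are only $n^{O(k)}$ possibilities for that data. First I would note that an axis-parallel polygon with at most $k$ edges has at most $k$ vertices (its corners), since along each closed boundary cycle the number of edges equals the number of vertices, and by our convention the $k$-edge budget already accounts for hole-bounding edges as well. Hence even a polygon with holes or several connected components has at most $k$ corners in total, each of which is a point of $\{0,\dots,2n-1\}^2$, a set of size $(2n)^2\le 4n^2$.

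Next I would argue that a polygon in $\P$ is completely determined by (i) the ordered list of its corners in the order in which they are encountered while traversing the boundary cycles one after another, together with (ii) the ``cycle structure'', i.e.\ the information of where one boundary cycle ends and the next begins. Given (i) and (ii), the axis-parallel boundary is reconstructed by connecting consecutive corners of each cycle, so no two distinct polygons of $\P$ give the same data. The number of choices for (i) is at most $\sum_{j=0}^{k}(4n^2)^{j}\le (k{+}1)\,(4n^2)^{k}$, and the number of choices for (ii), given at most $k$ corners, is at most $2^{k}$ (a partition of a length-$\le k$ sequence into consecutive blocks). Multiplying, $|\P|\le (k{+}1)\,2^{k}\,(4n^2)^{k}$, and since there is exactly one DP-cell per polygon of $\P$, the number of DP-cells is at most $(k{+}1)\,2^{k}\,(4n^2)^{k}=n^{O(k)}$; here the factors $(k{+}1)\,2^{k}\,4^{k}$ that do not depend on $n$ are absorbed into $n^{O(k)}$ (for $n\ge 2$, or using that in all our applications $k=(\log n/\eps)^{O(1)}$).

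The only points that need a little care — and the closest thing to an obstacle — are the bookkeeping issues hidden in the phrase ``whenever we speak of a polygon, we allow it to have holes and we do not require it to be simple'': one must be sure that holes and disconnectedness do not inflate the corner count (they do not, because the $\le k$ bound is on the total number of edges including hole boundaries), and that the extra combinatorial ``wiring'' of corners into a valid axis-parallel polygon with the right nesting of holes contributes only a factor that is a function of $k$ alone. Since that factor is at most $2^{O(k)}$ (or, very crudely, $k^{O(k)}$), it is negligible against $(4n^2)^k$ and the stated bound $n^{O(k)}$ follows.
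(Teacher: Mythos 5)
Your argument is correct and is exactly the (implicit) counting the paper relies on: a polygon in $\P$ has at most $k$ corners, each a point of $\{0,\dots,2n-1\}^2$, and the remaining combinatorial structure contributes only a factor depending on $k$, giving $|\P|\le n^{O(k)}$. The paper states this proposition without proof, and your write-up supplies precisely the intended justification, including the correct observation that the $k$-edge budget already covers hole boundaries so the corner count does not inflate.
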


\begin{figure}
\begin{centering}
\includegraphics[scale=0.75]{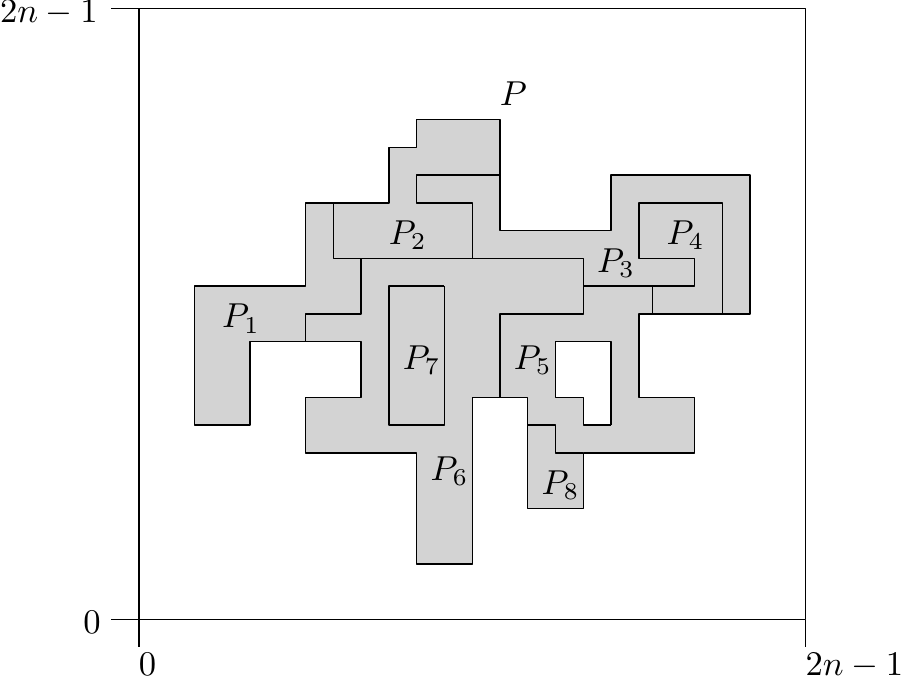}
\par\end{centering}
\caption{\label{fig:partition-P}The partition of a polygon $P$ (gray area) into at most $k$ smaller polygons, each with at most $k$ edges.}
\end{figure}

To compute the solution $sol(P)$ for some polygon $P \in \P$ we use the following
procedure. If $\R_P = \emptyset$ then we set $sol(P):=\R_P$
and terminate. Otherwise, we enumerate all possibilities to partition
$P$ into $k'$ polygons $P_{1},...,P_{k'}\in\P$ such
that $k'\le k$. See Figure~\ref{fig:partition-P} for a sketch.
Since by Proposition \ref{prop:DP-cells} we have
$|\P| \le n^{O(k)}$, the number of potential partitions we need to consider is upper bounded
by ${n^{O(k)} \choose k}=n^{O(k^{2})}$. Let $P_{1},...,P_{k'}$, where $k'\le k$, be a feasible partition (for
any enumerated set $\{P_{1},...,P_{k'}\}\subseteq\P$ this can be
verified efficiently since all polygons have axis-parallel edges with
integer coordinates in $\{0,...,2n-1\}$). For each polygon $P_{i}\in\{P_{1},...,P_{k'}\}$
we look up the DP-table value $sol(P_{i})$ and compute $\sum_{i=1}^{k'}w(sol(P_{i}))$.
We set $sol'(P):=\cup_{i=1}^{k'}sol(P_{i})$
for the partition $\{P_{1},...,P_{k'}\}$ which yields the maximum profit.
Now we define $sol(P) := sol'(P)$ if $w(sol'(P))> \max_{R\in\R_P}w(R)$,
and otherwise $sol(P):=\{R_{\max}\}$ for a rectangle $R_{\max} \in \R_P$
with maximum profit.
At the end, the algorithm outputs the value in the DP-cell which corresponds
to the polygon containing the entire input region $[0,2n-1]\times[0,2n-1]$.

Since $|\P| \le n^{O(k)}$ we get the following upper bound on the
running time of GEO-DP.

\begin{prop}\label{prop:DP-time}
\label{prop:GEO-DP-running-time}When parametrized by~$k$ the running time of GEO-DP is upper bounded by $n^{O(k^{2})}$. 
\end{prop}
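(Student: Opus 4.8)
The plan is a direct, cell-by-cell accounting of the work done by GEO-DP. First I would fix an order in which the DP-cells are filled so that the recursion is well defined: order the polygons $P \in \P$ by the number of unit grid cells of $[0,2n-1]\times[0,2n-1]$ that they contain. When GEO-DP computes $sol(P)$ from a partition into $P_1,\dots,P_{k'}$ with $k'\le k$, we may skip the trivial partition $\{P\}$ (it never raises the stored profit), so each $P_i$ contains strictly fewer grid cells than $P$ and $sol(P_i)$ is already available. Since this size parameter takes only $O(n^2)$ values, the whole table is built in $O(n^2)$ rounds, and it remains to bound the work spent on a single cell.

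For a fixed cell $P$, testing $\R_P=\emptyset$ costs $\mathrm{poly}(n,k)$ time, since deciding whether a rectangle is contained in an axis-parallel polygon with at most $k$ integer-coordinate edges takes $\mathrm{poly}(k)$ time. Otherwise GEO-DP enumerates every subset of $\P$ of size at most $k$ as a candidate partition. By Proposition~\ref{prop:DP-cells} we have $|\P| \le n^{O(k)}$, so the number of candidates is at most $k\cdot\binom{n^{O(k)}}{k}=n^{O(k^2)}$. For each candidate $\{P_1,\dots,P_{k'}\}$ the algorithm must (i) verify that it really is a partition of $P$, (ii) look up the at most $k$ values $sol(P_i)$, and (iii) sum their weights and compare with $\max_{R\in\R_P}w(R)$; steps (ii) and (iii) clearly cost $\mathrm{poly}(n,k)$.

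The one step that needs a concrete argument --- and hence the only place I would call an ``obstacle'', albeit a mild one --- is (i): checking that axis-parallel polygons $P_1,\dots,P_{k'}$ with integer coordinates in $\{0,\dots,2n-1\}$ tile $P$ exactly. I would overlay the $O(kn)$ grid lines through all of their edges, which cuts the input square into $\mathrm{poly}(n,k)$ rectangular atoms, and then check that every atom lying inside $P$ is covered by exactly one $P_i$ and that no atom outside $P$ is covered by any $P_i$; each such membership test runs in $\mathrm{poly}(k)$ time. Assembling the bounds, the work per cell is $n^{O(k^2)}\cdot\mathrm{poly}(n,k)=n^{O(k^2)}$, and multiplying by $|\P|\le n^{O(k)}$ cells over $O(n^2)$ rounds gives the stated total running time $n^{O(k)}\cdot n^{O(k^2)}\cdot O(n^2)=n^{O(k^2)}$.
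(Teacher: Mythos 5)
Your proof is correct and follows essentially the same counting argument the paper uses (the paper only remarks that $|\P|\le n^{O(k)}$ and that the number of candidate partitions per cell is $\binom{n^{O(k)}}{k}=n^{O(k^2)}$, with feasibility checkable in polynomial time). You simply make explicit the details the paper leaves implicit, namely the area-based ordering of the DP cells and the atom-overlay procedure for verifying a candidate partition.
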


For bounding the approximation ratio of GEO-DP for any parameter $k$, it is sufficient to consider only the special case that the input set $\R$ is already a feasible (optimal) solution. Therefore, we will assume this from now on.


\section{\label{sec:QPTAS}Quasi-Polynomial Time Approximation Scheme}

In this section we prove that GEO-DP achieves an approximation ratio
of $1+\eps$ when parametrized by $k=(\frac{\log n}{\eps})^{O(1)}$ and is thus a QPTAS for the MWISR problem (using Proposition~\ref{prop:DP-time}).

Key ingredient for our analysis is to show that for any set of feasible rectangles there is 
a \emph{balanced cheap cut}, i.e., a polygon which consists of only few edges, which intersects rectangles from $\R$ of marginal total weight, and which separates the rectangles from $\R$ into two parts of similar size. 
By applying such cuts recursively for $O(\log n/\eps)$ levels, we eventually obtain trivial subproblems. For proving that such good cuts always exist, we partition the plane into polygons in such a way that each rectangle is intersected only $O(1)$ times and each face of the partition consists either of exactly one rectangle or intersects rectangles of only small total weight. To ensure the latter, we apply a stretching procedure to the input before actually defining the partition. On the constructed partition we apply the weighted planar graph separator theorem from \cite{Arora1998} to obtain the cut.

\subsection{Balanced Cheap Cuts}

We introduce \emph{balanced $\alpha$-cheap $\ell$-cuts}, where $\alpha$ is a small positive value.  Intuitively, given
any set of non-overlapping rectangles $\bar{\R}$, such a cut is given
by a polygon $P$ with at most $\ell$ axis-parallel edges whose boundary
intersects rectangles with weight at most $\alpha\cdot w(\bar{\R})$
such that the interior and the exterior of $P$ each contain only
rectangles whose weight is at most ${2}/{3}\cdot w(\bar{\R})$.
 
\begin{defn}
Let $\ell\in\mathbb{N}$ and $\alpha\in\mathbb{R}$ with $0<\alpha<1$.
Let $\bar{\R}$ be a set of pairwise non-overlapping rectangles. A
polygon $P$ with axis-parallel edges is a \emph{balanced $\alpha$-cheap
$\ell$-cut} if: 
\begin{itemize}
\item $P$ has at most $\ell$ edges, 
\item for the set of all rectangles $\R'\subseteq\bar{\R}$ intersecting
the boundary of $P$ we have $w(\R')\le\alpha\cdot w(\bar{\R})$, 
\item for the set of all rectangles $\R_{\mathrm{in}}\subseteq\bar{\R}$
contained in $P$ it holds that $w(\R_{\mathrm{in}})\le2/3\cdot w(\bar{\R})$,
and 
\item for the set of all rectangles $\R_{\mathrm{out}}\subseteq\bar{\R}$
contained in the complement of $P$, i.e., in $\mathbb{R}^2 \setminus P$, it holds that $w(\R_{\mathrm{out}})\le2/3\cdot w(\bar{\R})$. 
\end{itemize}
\end{defn}

As we will show in the next lemma, GEO-DP performs well if for any
set of rectangles there exists a good cut.

\begin{lem}\label{lem:good-cut-suffices}
Let $\eps>0$. Let $\alpha>0$ with $\alpha<1/2$ and
$\ell\ge4$  be values such that for any set $\bar{\R}$ of pairwise
non-overlapping rectangles there exists a balanced\emph{ }$\alpha$-cheap
$\ell$-cut, or there is a rectangle $R\in\bar{\R}$ such that $w(R)\ge\frac{1}{3}\cdot w(\bar{\R})$.
Then GEO-DP has approximation ratio $(1+\alpha)^{O(\log(n/\eps))}$
when parametrized by $k=\ell^{2}\cdot O(\log^2(n/\eps))$. 
\end{lem}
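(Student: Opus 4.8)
The plan is to exhibit, for the (optimal) solution $\R$, a recursive partitioning scheme that GEO-DP is able to emulate, losing only a $(1-\alpha)$ fraction of the weight per level over $O(\log(n/\eps))$ levels. Since GEO-DP parametrized by $k$ computes, for every polygon $P\in\P$, a value with $w(sol(P))\ge\sum_{i}w(sol(P_i))$ for every admissible partition $\{P_1,\dots,P_{k'}\}$ of $P$ (at most $k'\le k$ parts, each a polygon with at most $k$ edges), it suffices to build such a scheme, bound its loss, and check that every polygon it produces has at most $k$ edges.

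I build a recursion tree of pairs $(P,\R_P)$ with root $P=S:=[0,2n-1]\times[0,2n-1]$, so $\R_S=\R$. A node is a leaf once $|\R_P|\le1$; otherwise we pick a cut $Q$ for $\bar\R:=\R_P$ using the hypothesis: if some $R\in\R_P$ has $w(R)\ge\tfrac13 w(\R_P)$ we let $Q$ be the closed rectangle $\bar R$ (it has $4\le\ell$ edges and, the rectangles being open and pairwise disjoint, its boundary is met by no rectangle of $\R_P$); otherwise $Q$ is a balanced $\alpha$-cheap $\ell$-cut. The children are $P_1:=P\cap Q$ and $P_2:=\overline{P\setminus Q}$. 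Any rectangle of $\R_P$ avoiding $\partial Q$ is connected, hence lies entirely inside or entirely outside $Q$, so $\R_{P_1},\R_{P_2}$ are disjoint with $\R_{P_1}\cup\R_{P_2}=\R_P\setminus\R'$, where $\R'\subseteq\R_P$ is the set of rectangles meeting $\partial Q$; moreover $w(\R')\le\alpha\,w(\R_P)$ (equal to $0$ in the heavy-rectangle case, where additionally $\R_{P_1}=\{R\}$ and $P_1$ is a leaf) and $w(\R_{P_i})\le\tfrac23 w(\R_P)$ for $i=1,2$. As $w(\R_P)$ shrinks by a factor at least $\tfrac32$ at each recursive step and $w(\R)\le n\cdot(n/\eps)$, a node at depth exceeding $\log_{3/2}(w(\R))$ has weight below $1$, hence is empty (each rectangle has weight at least $1$); thus the tree has depth at most $D:=O(\log(n/\eps))$.

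For the weight, an upward induction over the tree gives $w(sol(P))\ge(1-\alpha)^{d(P)}w(\R_P)$, where $d(P)$ is the height of the subtree rooted at $P$ (so $d(P)\le D$): trivial at a leaf, while at an internal node GEO-DP may use the partition $\{P_1,P_2\}$ (legal since $2\le k$ and, as shown below, $P_1,P_2$ have at most $k$ edges), whence $w(sol(P))\ge w(sol(P_1))+w(sol(P_2))\ge(1-\alpha)^{d(P)-1}\big(w(\R_{P_1})+w(\R_{P_2})\big)\ge(1-\alpha)^{d(P)}w(\R_P)$ (the heavy-rectangle case being an easy variant). At the root this yields $OPT/w(sol(S))\le(1-\alpha)^{-D}\le(1+\alpha)^{2D}=(1+\alpha)^{O(\log(n/\eps))}$, using $(1-\alpha)^{-1}\le(1+\alpha)^2$ for $\alpha<\tfrac12$, which is the claimed ratio.

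It remains --- and this is the heart of the argument --- to bound the number of edges of every polygon in the tree. A depth-$d$ polygon is an intersection $S\cap(\pm Q_1)\cap\dots\cap(\pm Q_d)$ of the input square with the cuts (or their complements) along its root path, i.e.\ of $d+1\le D+1$ axis-parallel polygons having at most $4+\ell d\le4+\ell D$ edges in total. Its boundary consists of portions of edges of the $Q_i$, and each of its vertices is either an original vertex of some $Q_i$ or a crossing of a horizontal edge of one $Q_i$ with a vertical edge of another; with at most $(4+\ell D)/2$ edges of each orientation, the number of such crossings --- hence of edges of the polygon --- is $O\big((\ell D)^2\big)=O\big(\ell^2\log^2(n/\eps)\big)$. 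Choosing the constant hidden in $k=\ell^2\cdot O(\log^2(n/\eps))$ at least this large makes $\{P_1,P_2\}$ an admissible partition at every node, completing the proof. The delicate point is precisely this estimate: each cut adds only $\ell$ edges and there are only $O(\log(n/\eps))$ levels, yet because distinct cuts can cross, cell complexity grows quadratically rather than linearly in $\ell\log(n/\eps)$, which is what forces the quadratic choice of $k$. (A minor technicality, which our concrete cut constructions take care of, is that $Q$ should have integer corner coordinates so that $P_1,P_2\in\P$.)
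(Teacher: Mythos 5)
Your proof is correct and follows essentially the same route as the paper's: recursively applying the cuts for $O(\log(n/\eps))$ levels, losing a $(1-\alpha)$-fraction of the weight per level, and bounding the complexity of the resulting cells by a quadratic crossing count over the $O(\ell\log(n/\eps))$ cut edges accumulated along a root-to-leaf path (the paper's Lemma~\ref{lem:polygons-complexity}). The only bookkeeping difference is that the paper explicitly splits $P\cap Q$ and its complement into connected components (whose number it bounds by the same crossing argument) so that each DP sub-polygon is a single polygon, a detail your argument absorbs without changing any bound.
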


\begin{proof}[Proof (sketch). ]
Starting with the $[0,2n-1]\times[0,2n-1]$ input square, we either find
a rectangle $R$ with $w(R)\ge\frac{1}{3}\cdot w(\bar{\R})$, or a
balanced $\alpha$-cheap $\ell$-cut. 
In either case we get a decomposition of the problem into subproblems, where each subproblem is defined by a polygon 
which contains rectangles whose total weight is at most
a $2/3$-fraction of $w(\R)$. Continuing for 
$O(\log n/\eps)$ recursion levels, we obtain subproblems consisting
of at most one rectangle each (as $1\le w(R)\le n/\eps$
for each $R\in\R$). 
Each appearing subproblem can be expressed
as the intersection of $O(\log n/\eps)$ polygons with at
most $\ell+4$ edges each (we have to add $4$ since the boundary of the input square can become the outer boundary of a polygon). Thus, the boundary of each considered subproblem consists of $O(\ell^2 \cdot \log^2(n/\eps))$ edges. Additionally, we can show that each subproblem gives rise to $O(\ell^2 \cdot \log^2(n/\eps))$ subproblems in the next recursion level.
As GEO-DP tries all partitions of a
polygon into at most $k$ polygons, each with at most $k$ edges, at each recursion level it will consider the partition corresponding to the cut. 
Since at each level we lose rectangles whose weight is at most an $\alpha$-fraction of the total weight of the rectangles from the current recursion level, we obtain the claimed approximation ratio.
\end{proof}

In the remainder of this section we prove that for any set $\bar{\R}$ and any $\delta >0$
there is a balanced $O(\delta)$-cheap $\left(\frac{1}{\delta}\right)^{O(1)}$-cut
or there is a rectangle $R\in\bar{\R}$ with $w(R)\ge\frac{1}{3}\cdot w(\bar{\R})$. This implies our main result
when choosing $\delta:=\Theta(\eps/ \log(n/\eps))$.
Note that from now on our reasoning does not need to be (algorithmically) constructive.

\subsection{Stretching the Rectangles}

For the purpose of finding a good cut, we are free to stretch or squeeze
the rectangles of $\R$. 
We do this as a preprocessing step in order
to make them well-distributed.

\begin{defn}
A set of rectangles $\Rb$ with integer coordinates in $\{0,...,N\}$
is \emph{well-distributed }if for any $\gamma>0$ and for any $t\in\{0,...,N\}$
we have that all rectangles contained in the area $[0,N]\times[t,t+\gamma\cdot N]$
have a total weight of at most $2\gamma\cdot w(\R)$. We require
the same for the rectangles contained in the area $[t,t+\gamma\cdot N]\times[0,N]$. 
\end{defn}

We say that two sets of rectangles $\R, \Rb$ are \emph{combinatorially
equivalent} if we can obtain one from the other by stretching or squeezing the input area, possibly non-uniformly (see Appendix~\ref{apx:QPTAS} for a formal definition).

\begin{lem}\label{lem:weighted-stretching}
Let $\R$ be a set of rectangles with arbitrary weights.
There is a combinatorially equivalent set $\bar{\R}$ using only integer coordinates
in $\{0,...,4 \cdot |\R| \}$ which is well-distributed.
\end{lem}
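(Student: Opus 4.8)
The plan is to stretch the plane one axis at a time, and along each axis to place the "coordinate grid lines" adaptively so that no thin horizontal (resp.\ vertical) strip carries too much weight. First I would handle the horizontal direction. Project every rectangle $R \in \R$ onto the $y$-axis, obtaining the interval $[y_i^{(1)}, y_i^{(2)}]$. I sweep a horizontal line upward and greedily cut whenever the weight of rectangles whose projection is \emph{entirely below} the current line (and above the previous cut) first exceeds some threshold $\tau \cdot w(\R)$; this produces a family of horizontal break lines. The key combinatorial point is that at most $1/\tau$ such cuts are created, so all the distinct $y$-coordinates occurring among rectangle corners are grouped into $O(1/\tau)$ "slabs," and I then re-map each slab to unit (or $O(1)$) integer height so that the total vertical extent becomes $O(1/\tau)$ — choosing $\tau$ a small constant (something like $1/|\R|$ is far more than needed; in fact a constant works, but to land in $\{0,\dots,4|\R|\}$ after doing both axes I would pick the thresholds so the cumulative stretch factor is $4$). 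Crucially, stretching/squeezing slabs along the $y$-axis is a combinatorially equivalent transformation in the sense defined in Appendix~\ref{apx:QPTAS}: it only moves $y$-coordinates monotonically, so no containment or disjointness relations among rectangles change, and weights are untouched. Then I repeat the identical construction in the $x$-direction on the already-stretched instance.

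The heart of the argument is verifying the well-distributed property after this transformation. Fix $\gamma > 0$ and a strip $[0,N] \times [t, t+\gamma N]$. The rectangles contained in this strip have their full $y$-extent inside an interval of length $\gamma N$; since each original slab was re-scaled to have height roughly $(\gamma_0) N$ where $\gamma_0 \le \tau$ corresponds to weight at most $\tau w(\R)$, a strip of height $\gamma N$ meets at most $\lceil \gamma / \gamma_0\rceil + 1 \le \gamma/\gamma_0 + 2$ slabs. But a rectangle \emph{contained} in the strip is in particular contained in the union of the slabs it touches, and — here is where the greedy choice pays off — a rectangle contained in the union of slabs $[\text{cut}_j, \text{cut}_{j+1}]$ has its bottom edge at or above $\text{cut}_j$, so its weight was "counted" when we decided whether to place $\text{cut}_{j+1}$; by the stopping rule the rectangles entirely inside a single slab have total weight at most $\tau w(\R)$. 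Summing over the $O(\gamma/\gamma_0)$ slabs the strip can intersect, and being a little careful about rectangles straddling a cut line (those are not contained in the strip at all, so they contribute nothing), the total is at most $2\gamma \cdot w(\R)$ once $\gamma_0$ is chosen as $\tau/1$ times the unit re-scaling. The factor $2$ in the definition is exactly the slack that absorbs the "$+2$ extra slabs" boundary effect and the rounding to integer coordinates; I would tune the re-scaling constant so these all fit.

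The main obstacle I anticipate is bookkeeping the interaction between the two axes and the integrality requirement simultaneously. After stretching in $y$ I have an instance on, say, $\{0,\dots,2|\R|\}$ in the $y$-direction; then stretching in $x$ I want to land in $\{0,\dots,2|\R|\}$ there too, so that the common bounding box is $\{0,\dots, 4|\R|\}$ — but I must check that the $x$-stretching does not disturb the $y$-well-distributedness (it does not, since it only moves $x$-coordinates and never changes which rectangles are contained in a given horizontal strip, nor their weights) and vice versa, so the two properties hold jointly. A second, more routine subtlety is that between consecutive distinct corner-coordinates there may be no room to insert the greedy cut exactly where the weight threshold is crossed; this is harmless because combinatorial equivalence only cares about the order of coordinates, so I can first pass to the "compressed" coordinate system where the corner coordinates are $1,2,3,\dots$ and distances between them are meaningless, then expand each gap by an integer amount. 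Everything else — the count of slabs, the weight bound per slab, the choice of constants — is elementary once the greedy sweep is set up, so I would state the sweep carefully and relegate the arithmetic to a short verification, with the formal definition of combinatorial equivalence deferred to Appendix~\ref{apx:QPTAS} as the excerpt already promises.
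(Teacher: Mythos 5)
There is a genuine gap in the step that verifies well-distributedness, and it is located exactly where you wave your hands: the parenthetical ``rectangles straddling a cut line \dots are not contained in the strip at all, so they contribute nothing'' is false. The strip $[0,N]\times[t,t+\gamma N]$ in the definition is an \emph{arbitrary} window, not one aligned with your cuts; if it covers two consecutive slabs it fully contains every rectangle whose projection straddles their common boundary, and your greedy rule places no bound whatsoever on the weight of such rectangles. Worse, the rule can overshoot on a single slab: take all $n$ rectangles with the identical $y$-projection $[c-1,c+1]$, each of weight $w(\R)/n$. Your sweep sees cumulative weight $0$ until the line passes $c+1$, at which point it jumps to $w(\R)$ in one step, so the single slab ending there contains \emph{all} of $\R$; after you rescale that slab to height roughly $\tau N$, a strip of height $\tau N$ contains weight $w(\R)\gg 2\tau\, w(\R)$, violating the definition. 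The underlying problem is that a fixed threshold $\tau$ can only certify the bound at one scale and only for the specific intervals between consecutive cuts, whereas well-distributedness is a statement about every interval $[a,b]$ at every scale $\gamma$ simultaneously: you need the stretched distance $\bar b-\bar a$ to be at least $(2n/w(\R))$ times the total weight of rectangles whose projection lies inside $[a,b]$, for \emph{all} pairs $a<b$.

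The paper achieves exactly this per-interval guarantee by a pointwise rather than slab-wise stretch: each coordinate $j$ is shifted by $\bigl\lceil w(\R_x(j))\cdot\frac{2|\R|}{w(\R)}\bigr\rceil$, where $\R_x(j)$ is the set of rectangles with left coordinate strictly below $j$. Every rectangle contained in $[a,b]$ starts at some coordinate in $[a,b)$, hence contributes its full weight to $w(\R_x(b))-w(\R_x(a))$, which immediately gives the per-interval lower bound on $\bar b-\bar a$; monotonicity of the shift gives combinatorial equivalence, and the cumulative shift is at most $2|\R|$, giving the range $\{0,\dots,4|\R|\}$. If you want to salvage your construction, replacing ``cut into slabs and rescale uniformly'' by ``stretch each unit gap in proportion to the weight of rectangles starting there'' is the fix --- but that is precisely the paper's proof, and at that point the greedy sweep is doing no work. (A secondary issue, which you do partially pre-empt with the gap-expansion remark: collapsing a slab containing several distinct coordinates to unit height would destroy the strict-inequality part of combinatorial equivalence, so any repair must keep distinct coordinates distinct.)
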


\begin{proof}[Proof (sketch). ]
Without loss of generality we assume that $x_{i}^{(1)},y_{i}^{(1)},x_{i}^{(2)},y_{i}^{(2)}\in\{0,...,2n-1\}$ for each $R_{i}\in\R$, where $n=|\R|$. We stretch the original input square in such a way that the lengths of its sides double (i.e., increase by another $2|\R|$), and the distance between two original consecutive $x$-coordinates ($y$-coordinates) $x_i$ and $x_i+1$ increases proportionally to the weight of all rectangles "starting" at the $x$-coordinate ($y$-coordinate) $x_i$. We then need to introduce some rounding, as we want the new coordinates of all the rectangles to be integral. The new set of rectangles is clearly equivalent to the original one. 

Consider any vertical stripe $S$ in the modified input square with left- and rightmost $x$-coordinates $x'_a$ and $x'_b$, respectively. 
There is a set of $x$-coordinates $x_1,x_2,...,x_m$ from the original instance which were mapped to values $x'_1,x'_2,...,x'_m \in [x'_a,x'_b]$. All rectangles contained in $S$ must have their respective leftmost $x$-coordinates in $\{x'_1,x'_2,...,x'_{m-1}\}$. The total weight of rectangles with leftmost coordinate $x'_i$ is proportional to $x'_{i+1}-x'_i$, for each $i$. Hence, the total weight of rectangles contained in $S$ is proportional to $x'_m-x'_1 \le x'_b-x'_a$. 
The same is true for horizontal stripes, and so the modified input instance is well-distributed.
\end{proof}

Observe that there is a balanced $\alpha$-cheap $\ell$-cut for any values $\alpha$ and $\ell$ in the stretched instance if and only if there is such a cut in the original instance.
Thus, suppose from now on that we have a well-distributed set of pairwise non-intersecting rectangles $\R$, using integer coordinates
in $\{0,...,N\}$ for some integer $N$, and a value $\delta>0$ such that $1/\delta \in \mathbb{N}$. As we do not require any special bound on the value of $N$, we can scale up all coordinates of the rectangles by a factor of $(1/ \delta)^2$, and therefore we can assume that $\delta^2 N$ is an integer. 


\subsection{Partitioning the Plane}

We define a procedure to partition the input square $I:= [0,N] \times [0,N]$. This partition is defined by only $(1/\delta)^{O(1)}$ lines, and it has the properties that each rectangle in $\R$ is intersected only $O(1)$ times and each face either surrounds exactly one rectangle or it has non-empty intersection with rectangles with small total weight of at most $O(\delta^2 w(\R))$.

We call a rectangle $R_{i}\in\R$ \emph{large} if $h_{i}>\delta^{2} N$ or $g_{i}>\delta^{2} N$, and \emph{small} if $h_{i} \le \delta^{2} N$ and $g_{i} \le \delta^{2} N$. We denote the subsets of $\R$ consisting of large and small rectangles by $\R_L$ and $\R_S$, respectively. 
We call a rectangle $\R_i \in \R$ \emph{vertical} if $h_i > g_i$, and \emph{horizontal} if $h_i \le g_i$. We say that a line $L$ \emph{cuts} a rectangle $R \in \R$, if $R \setminus L$ has two connected components.

We now present the construction of the partition. It will consist of a set of straight axis-parallel lines $\L$  
contained in the input square $I$, and containing the boundary of $I$. A connected component of $I \setminus \L$ is called a \emph{face}, and the set of faces is denoted by $\F(\L)$. Note that the faces are open polygons and for any $F \in \F(\L)$ and $L \in \L$ we have $F \cap L = \emptyset$.

\paragraph{Grid.}
We subdivide the input square $I$ into $\frac{1}{\delta^{2}}\times\frac{1}{\delta^{2}}$
\emph{grid cells}, where each grid cell is a square of size $\delta^{2} N \times \delta^{2} N$. 
Formally, for each $i,j\in \{0,...,1/\delta^2-1\}$ we have a grid cell $[i\cdot \delta^{2} N,(i+1)\cdot \delta^{2} N] \times    [j\cdot \delta^{2} N,(j+1)\cdot \delta^{2} N]$.
As $\delta^2 N$ is an integer, the corners of the grid cells have integer coordinates. The lines subdividing the input square into the grid cells are called \emph{grid lines}.

We say that a rectangle $R \in \R$ \emph{intersects} a grid cell $Q$, if $R \cap Q \neq \emptyset$ (recall that rectangles have been defined as open sets). Each rectangle $R \in \R_L$ intersects at least two grid cells, and each rectangle $R \in \R_S$ intersects at most four grid cells. We say that a rectangle $R \in \R$ \emph{crosses} a grid cell $Q$, if $R$ intersects $Q$ and $R$ has non-empty intersection with two opposite edges of $Q$. Notice that small rectangles do not cross any grid cells.

\paragraph{Rectangle faces.}
For each large vertical rectangle which is cut by a vertical grid line, and for each large horizontal rectangle which is cut by a horizontal grid line, we add the edges of the rectangle to the set of lines $\L$ (see Figure \ref{fig:maze}a). The added edges are called \emph{rectangle edges}, and the faces corresponding to such rectangles are called \emph{rectangle faces}.

\begin{figure}
\begin{centering}

\centerline{
\includegraphics[width=0.3\textwidth]{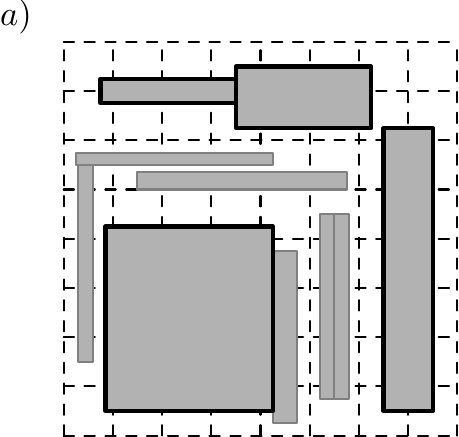}
\hskip0.1\textwidth
\includegraphics[width=0.3\textwidth]{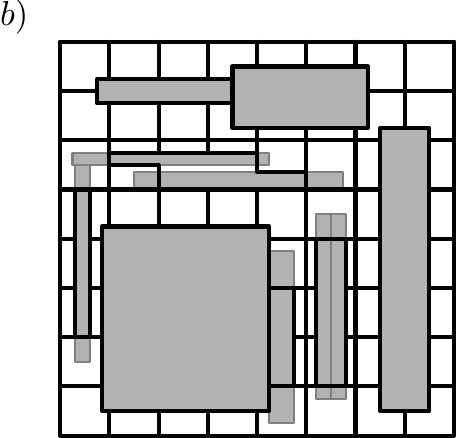}
}
\par\end{centering}
\caption{\label{fig:maze}Creating the partition of the input square. The large rectangles $\R_L$ are depicted in gray. The bold edges represent the lines from $\L$.}
\end{figure}

\begin{lem}\label{lem:rectangle-faces}
The number of rectangle faces is at most $2(1/\delta)^4$.
\end{lem}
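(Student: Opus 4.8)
The plan is to bound the number of large rectangles that can generate a rectangle face, since each such rectangle contributes exactly one face. Recall that a rectangle face arises from a large vertical rectangle cut by a vertical grid line, or a large horizontal rectangle cut by a horizontal grid line. I would first observe that, by symmetry, it suffices to count the large vertical rectangles cut by some vertical grid line and multiply the bound by $2$. There are exactly $1/\delta^2 - 1$ interior vertical grid lines (the lines $x = i\cdot\delta^2 N$ for $i \in \{1,\dots,1/\delta^2-1\}$), so the count reduces to bounding, for each fixed vertical grid line $\ell$, the number of large vertical rectangles that $\ell$ cuts.

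The key step is the following claim: for any fixed vertical line $\ell$, the number of large vertical rectangles in $\R$ cut by $\ell$ is at most $1/\delta^2$. To see this, note that a large vertical rectangle $R_i$ has height $h_i > g_i$ and, being large, has $h_i > \delta^2 N$ (the "large" condition combined with "vertical" forces the height to be the large dimension). All rectangles cut by the fixed vertical line $\ell$ are pairwise non-overlapping and each intersects $\ell$, hence their $y$-extents $(y_i^{(1)}, y_i^{(2)})$ are pairwise disjoint subintervals of $[0,N]$. Since each such rectangle, if large and vertical, has $y$-extent of length greater than $\delta^2 N$, at most $N/(\delta^2 N) = 1/\delta^2$ of them fit disjointly in $[0,N]$. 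This gives the per-line bound.

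Putting the pieces together: the number of large vertical rectangles cut by \emph{some} vertical grid line is at most (number of vertical grid lines) $\times$ (large vertical rectangles cut per line) $\le (1/\delta^2)\cdot(1/\delta^2) = (1/\delta)^4$. By the symmetric argument for large horizontal rectangles cut by horizontal grid lines, the total number of rectangle faces is at most $2(1/\delta)^4$, as claimed. One technical point to check carefully is whether a single large rectangle could be cut by more than one grid line and thus be double-counted — but since each rectangle generates at most one rectangle face regardless of how many grid lines cut it, this only helps, and the crude product bound above already accounts for the worst case.

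I expect the main (though still minor) obstacle to be making precise the "at most $1/\delta^2$ disjoint large vertical rectangles per line" estimate, specifically verifying that "large and vertical" genuinely forces $h_i > \delta^2 N$: if $R_i$ is large then $h_i > \delta^2 N$ or $g_i > \delta^2 N$, and if additionally $h_i > g_i$, then in the case $g_i > \delta^2 N$ we get $h_i > g_i > \delta^2 N$ as well, so in all cases $h_i > \delta^2 N$. With that settled the packing argument is immediate, and the rest is the bookkeeping described above.
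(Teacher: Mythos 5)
Your proof is correct and follows essentially the same route as the paper's: bound the number of (interior) grid lines in each direction by $1/\delta^2$, and bound the number of large vertical (resp.\ horizontal) rectangles cut by a single such line by $1/\delta^2$ via the disjointness of their $y$-extents (resp.\ $x$-extents), each of length greater than $\delta^2 N$. The extra care you take in checking that ``large and vertical'' forces $h_i>\delta^2 N$ is a detail the paper leaves implicit, but your argument matches theirs.
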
  

Notice that if a large vertical rectangle is not contained in a rectangle face then it is contained in a single column of grid cells. 
Similarly, if a large horizontal rectangle is not contained in a rectangle face then it is contained in a single row of grid cells.

\paragraph{Lines within the grid cells.}
We now consider each grid cell $Q$ separately, and proceed as follows. Let $\R_Q$ denote the set of rectangles from $\R$ which are not contained in the rectangle faces, and which cross $Q$. Notice that, as the rectangles from $\R$ are pairwise non-overlapping, $\R_Q$ cannot contain both vertical and horizontal rectangles.
\begin{itemize}
\item If $\R_Q = \emptyset$, we add to $\L$ the whole boundary of $Q$, with the exception of the fragments which are in the interior of the rectangle faces  (see Figure \ref{fig:maze-in-cell}a). 

Notice that the boundaries of the rectangle faces are in $\L$, so in this case $\L \cap Q$ is connected.

\item If $\R_Q$ consists of vertical rectangles, let $L_{\ell}$ and $L_r$ be the leftmost and the rightmost vertical edge of a rectangle from $\R_Q$. We add to $\L$ the lines $L_{\ell} \cap Q$ and $L_r \cap Q$, and the boundary of $Q$ with the exception of the fragments which are between $L_{\ell}$ and $L_r$, or in the interior of the rectangle faces (see Figure \ref{fig:maze-in-cell}b).

Notice that the lines added to $\L$ while considering the grid cell $Q$ do not intersect any rectangles from $\R_Q$.

\item If $\R_Q$ consists of horizontal rectangles, we proceed as before, considering horizontal lines instead of vertical (see Figure \ref{fig:maze-in-cell}c).
\end{itemize}

\begin{figure}
\begin{centering}
\includegraphics[width=0.7\textwidth]{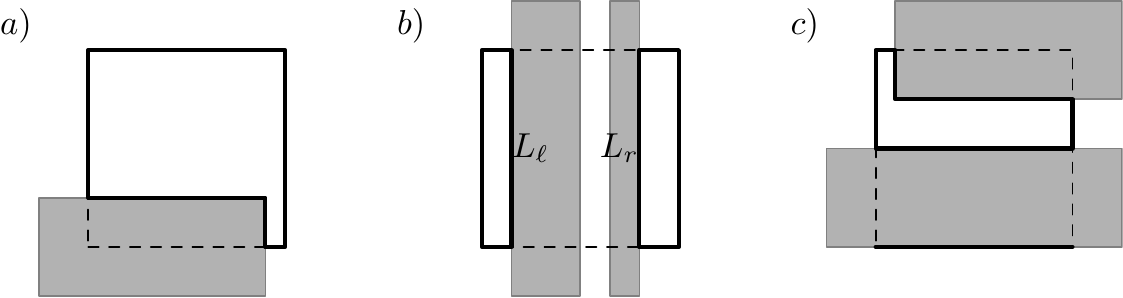}
\par\end{centering}
\caption{\label{fig:maze-in-cell}Constructing the lines within a single grid cell $Q$. The depicted gray rectangles are either in $\R_Q$ or they are rectangles from $\R_L$ which have their own respective rectangle face. The bold lines correspond to the lines of $\L$ within the grid cell.
}
\end{figure}

Notice that the lines from $\L$ can \emph{overlap} (i.e., we can have $L_1, L_2 \in \L$ s.t. $L_1 \cap L_2$ is an interval), but they do not \emph{intersect properly} (i.e., if for $L_1, L_2 \in \L$ we have $L_1 \cap L_2 = \{p\}$, then $p$ is an endpoint of at least one of the lines $L_1, L_2$). The lines from $\L$ cover the boundary of the input square $I$. An example of the partition can be seen in Figure \ref{fig:maze}b.

\paragraph{Graph $G(\L)$.} We now construct a graph $G(\L) = (V,E)$ embedded in the input square $I$, representing the partition $\L$. Any point $p \in I$ becomes a vertex of $G(\L)$ if and only if there is at least one line $L \in \L$ with an endpoint in $p$. For any pair of vertices $v,w \in V$ for which there is a line $L \in \L$ such that $\{v,w\} \in L$, and for which no vertex lies on the straight line strictly between $v$ and $w$,
we add an edge $vw$ to $G(\L)$ (i.e., edges of $G(\L)$ represent subdivisions of lines in $\L$). As $\bigcup_{L \in \L} L = \bigcup_{e \in E} e$, the faces of $G(\L)$ are exactly $\F(\L)$.
The claim of the next lemma is directly implied by the construction.

\begin{lem}\label{lem:G-some-properites}
The graph $G(\L)$ is planar and has $O((1/\delta)^4)$ vertices and $O((1/\delta)^4)$ edges.
\end{lem}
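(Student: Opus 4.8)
The plan is to verify both the planarity claim and the $O((1/\delta)^4)$ bounds on $|V|$ and $|E|$ directly from the construction of $\L$, since the lemma states that this is ``directly implied by the construction.'' Planarity is essentially immediate: every line in $\L$ is a straight axis-parallel segment inside $I$, and by the remark preceding the lemma the segments of $\L$ do not intersect properly (any crossing point is an endpoint of one of the two segments involved). Hence the embedding of $G(\L)$ obtained by drawing each edge $vw$ as the straight segment between $v$ and $w$ has no two edges crossing in their interiors, so $G(\L)$ is a plane graph; in particular it is planar. I would state this in one or two sentences.

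For the vertex and edge counts, I would bound the number of segments in $\L$ and then the number of their endpoints. The segments of $\L$ come from three sources: (i) the grid lines, of which there are $O(1/\delta^2)$, contributing $O(1/\delta^2)$ grid-cell boundary pieces, hence $O(1/\delta^2)$ endpoints from the grid alone; (ii) the rectangle edges of the rectangle faces — by Lemma~\ref{lem:rectangle-faces} there are at most $2(1/\delta)^4$ rectangle faces, each bounded by $4$ rectangle edges, giving $O((1/\delta)^4)$ segments and $O((1/\delta)^4)$ endpoints; and (iii) the lines added within each grid cell $Q$, where for each of the $1/\delta^2$ grid cells we add at most two interior segments ($L_\ell \cap Q$ and $L_r \cap Q$) plus $O(1)$ pieces of the boundary of $Q$, for a total of $O(1/\delta^2)$ additional segments. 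Summing, $\L$ consists of $O((1/\delta)^4)$ straight segments, so the number of segment endpoints — and hence $|V|$ — is $O((1/\delta)^4)$. For $|E|$: each edge of $G(\L)$ is a maximal subdivision piece of some segment of $\L$ that contains no vertex in its interior; a single segment is subdivided into at most $|V|$ pieces, but more carefully each edge is incident to two of the $|V|$ vertices and, since $G(\L)$ is a plane graph in which each vertex has bounded degree (each vertex lies on $O(1)$ of the axis-parallel segments, hence has $O(1)$ incident edges), we get $|E| = O(|V|) = O((1/\delta)^4)$. Alternatively, one can invoke Euler's formula together with the bound on $|V|$ once planarity is established.

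The only mild subtlety — and the step I would be most careful about — is making sure the within-cell construction really contributes only $O(1)$ new endpoints per grid cell: one has to check that adding ``the boundary of $Q$ with the exception of the fragments between $L_\ell$ and $L_r$ or in the interior of the rectangle faces'' creates only a constant number of new segment pieces and endpoints per cell, which holds because each edge of $Q$ is cut into $O(1)$ intervals by the at most two interior vertical (or horizontal) lines and by the $O(1)$ rectangle faces it can meet along that edge. Actually the count of rectangle faces meeting a single grid cell edge need only be bounded crudely by the total number of rectangle faces; even using the global bound $2(1/\delta)^4$ naively per cell would give $O((1/\delta)^6)$, so to get $O((1/\delta)^4)$ one should instead observe that each rectangle face contributes its $O(1)$ edges to $\L$ only once (globally, not once per cell), and the grid-cell boundaries it ``cuts out of'' are accounted for within the $O(1/\delta^2)$ grid segments. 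With that bookkeeping in place, the three contributions sum to $O((1/\delta)^4)$ and the lemma follows.
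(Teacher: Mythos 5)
Your planarity argument and your degree-based bound $|E|=O(|V|)$ are fine and match the paper. The gap is in the vertex count, precisely at the step you yourself flag as delicate: bounding how finely the grid-cell boundaries get fragmented. When the construction adds ``the boundary of $Q$ except the fragments inside rectangle faces,'' every incidence between a rectangle face and a grid-cell edge creates new segment endpoints, so the number of vertices contributed by the grid is $O\bigl((1/\delta)^4 + I\bigr)$ where $I$ is the total number of pairs (grid-cell edge $e$, rectangle face $F$) with $\int(F)\cap e\neq\emptyset$. Your proposed resolution --- that each rectangle face contributes its four edges only once and that the fragmented grid boundaries are ``accounted for within the $O(1/\delta^2)$ grid segments'' --- does not bound $I$: a single tall rectangle face can cross $\Theta(1/\delta^2)$ horizontal grid lines, and with up to $2(1/\delta)^4$ rectangle faces a naive count of $I$ is far worse than $(1/\delta)^4$. (Two smaller slips point the same way: there are $(1/\delta)^4$ grid cells, not $1/\delta^2$, and the per-cell boundary pieces alone already number $\Theta((1/\delta)^4)$, not $O(1/\delta^2)$.)

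What is actually needed --- and what the paper uses --- is a \emph{per-cell} bound: each grid cell is intersected by at most $4$ rectangle faces, so each cell contributes only $O(1)$ new vertices, for $O((1/\delta)^4)$ in total. You state the right claim (``the $O(1)$ rectangle faces it can meet along that edge'') but then retract it in favor of the global accounting, which is the wrong move. The per-cell bound does hold, and the reason is geometric: a rectangle face is a large rectangle (one side exceeding $\delta^2 N$, the cell width) that straddles a grid line of the matching orientation; if its interior meets an edge $e$ of a cell $Q$, a short case analysis shows its interior must contain an endpoint of $e$, i.e., a corner of $Q$. Since the rectangle faces are pairwise disjoint open sets and $Q$ has four corners, at most four of them meet $Q$. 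Supplying this observation closes the gap; without it, your accounting does not yield $O((1/\delta)^4)$.
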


The following lemmas will be needed to show the existence of a balanced cut in $G(\L)$.

\begin{lem}\label{lem:rec-intersect-constant}
Each rectangle from $\R$ can be intersected by at most four edges of the graph $G(\L)$.
\end{lem}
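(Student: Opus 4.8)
The plan is to split on the type of the rectangle $R$ and, in each case, to pin down exactly which edges of $G(\L)$ can contain a point of the \emph{open} set $R$. The one recurring tool is that $\R$ is pairwise non-overlapping, which I would package as the following principle: if $s$ is a segment that is part of an edge of some rectangle $R^{*}\in\R$ — that is, a within-cell line $L_{\ell}\cap Q$ or $L_{r}\cap Q$ (or its horizontal analogue), or an edge of a rectangle owning a rectangle face — and $s$ meets the interior of a rectangle $R'\in\R$, then a point just on the $R^{*}$-side of $s$ lies in the interior of both $R^{*}$ and $R'$, so $R^{*}=R'$. Applied pointwise, this gives two facts I will use repeatedly: no within-cell line and no edge of a rectangle owning a rectangle face ever passes through the interior of a \emph{different} rectangle of $\R$; and, although the two endpoints of a within-cell line lie on grid lines, they lie in the interior of no rectangle of $\R$ (the same overlap argument, carried out at the grid line carrying the endpoint).

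Suppose first that $R$ is large. If $R$ owns a rectangle face, then any edge of $G(\L)$ meeting the interior of $R$ is a portion of a grid line, a within-cell line, or an edge of another rectangle owning a rectangle face; the last two are excluded by the principle, and grid-line fragments lying in the interior of a rectangle face are explicitly discarded in every case of the per-cell construction, so $R$ is met by no edge. If $R$ owns no rectangle face, assume after exchanging the axes that $R$ is vertical; then $R$ lies in one column of grid cells, and, since $h_i>\delta^{2}N$, it crosses all but possibly the first and the last of the consecutive cells $Q_{j_{1}},\dots,Q_{j_{2}}$ of that column that it meets. Because $R$ lies in one column, no vertical grid line has its $x$-coordinate strictly inside $(x_i^{(1)},x_i^{(2)})$, and by the principle neither does any vertical within-cell line nor any vertical edge of a faced rectangle; hence only horizontal grid-line fragments can meet the interior of $R$. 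For $j_{1}+1<j<j_{2}$ both $Q_{j-1}$ and $Q_{j}$ are crossed by $R$, so $R\in\R_{Q_{j-1}}\cap\R_{Q_{j}}$, whence $L_{\ell}$ lies weakly left of $x_i^{(1)}$ and $L_{r}$ weakly right of $x_i^{(2)}$ for each of these two cells, so the fragment of the grid line separating $Q_{j-1}$ from $Q_{j}$ that lies in the $x$-range of $R$ is ``between $L_{\ell}$ and $L_{r}$'' in both cells and is never put into $\L$. Thus only the two horizontal grid lines $y=(j_{1}+1)\delta^{2}N$ and $y=j_{2}\delta^{2}N$ adjacent to the extreme rows can survive, and neither is subdivided inside the interior of $R$: a subdividing vertex there would be a corner of a grid cell (impossible, as $R$ lies in one column), an endpoint of a within-cell line (impossible by the principle), or a corner of a faced rectangle (impossible by non-overlap). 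So $R$ is met by at most two edges of $G(\L)$.

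Now suppose $R$ is small. Then $R$ intersects at most four grid cells, so at most one vertical and at most one horizontal grid line have their coordinate strictly inside $R$, and by the principle no within-cell line and no faced-rectangle edge meets the interior of $R$. Each of these (at most two) grid lines can be cut, within $R$, into at most two edges of $G(\L)$, and only at a corner of a grid cell lying in the interior of $R$; but there is at most one such corner, namely the common corner of the at most four cells that $R$ meets (since $R$ straddles at most one vertical and at most one horizontal grid line). Hence $R$ is met by at most $2+2=4$ edges of $G(\L)$, which, together with the large cases, proves the lemma.

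The only delicate point, I expect, is the bookkeeping in the large-without-face case: keeping track of which grid-line fragments actually survive the two exclusion rules of the per-cell construction (``between $L_{\ell}$ and $L_{r}$'' and ``inside a rectangle face''), and being careful that a single geometric line of $\L$ may be broken into several edges of $G(\L)$ — which is precisely why the ``no subdividing vertex in the interior of $R$'' step is needed. Everything else, including the small case and the claim that faced rectangles are ignored, reduces cleanly to the non-overlap principle stated at the outset, so the work is in making that principle and the subdivision argument watertight rather than in any computation.
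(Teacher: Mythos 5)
Your proof is correct and follows essentially the same route as the paper's: both reduce the problem to the observations that, by non-overlap, only grid-cell-boundary fragments of $\L$ can meet the interior of a rectangle and only grid corners can be subdividing vertices there, and both then split into the small, faced, and single-row/column cases, using the ``between $L_{\ell}$ and $L_r$'' exclusion to kill the internal boundaries of a crossing rectangle. Your write-up is more detailed but not materially different.
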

\begin{proof}[Proof (sketch). ] 
The only edges of $G(\L)$ intersecting rectangles from $\R$ lie on grid cell boundaries. We can show that any rectangle $R \in \R$ can be intersected by at most one edge of $G(\L)$ at each grid cell boundary. 

As any $R \in \R_S$ intersects at most four grid cell boundaries, the lemma holds for small rectangles. The lemma clearly holds for any rectangle $R \in \R_L$ contained in a rectangle face. The remaining case are large rectangles contained in a single row or column of grid cells. From the construction of $\L$ for single grid cells we can show that such a rectangle $R \in \R_L$ can be intersected only at the two extremal grid cell boundaries within $R$. 
\end{proof}

\begin{lem}\label{lem:rec-intersect-faces}
Let $F \in \F(\L)$. The boundary of $F$ intersects rectangles from $\R$ of total weight at most $8 \delta^2 w(\R)$.
If $F$ is not a rectangle face, then $F$ has non-empty intersection with rectangles from $\R$ of total weight at most  $ 8\delta^2 w(\R)$.
\end{lem}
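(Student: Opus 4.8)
The plan is to bound the weight of rectangles touching a face $F$ by using the well-distributed property together with the geometric structure of $\F(\L)$. First I would handle the easy direction: the boundary of $F$ consists of pieces of grid lines, rectangle edges, and the in-cell lines $L_\ell \cap Q$, $L_r \cap Q$. By construction, the in-cell lines never cut any rectangle of $\R_Q$, rectangle edges bound exactly one (large) rectangle, and the only edges actually intersecting rectangles of $\R$ lie on grid cell boundaries (as noted in the proof of Lemma~\ref{lem:rec-intersect-constant}). So it suffices to bound, for a single grid line segment on $\partial F$, the total weight of rectangles of $\R$ that it intersects. A grid line segment of length $\delta^2 N$ lies inside a horizontal or vertical stripe of the form $[0,N]\times[t,t+\delta^2 N]$ (or its transpose), so by well-distributedness the rectangles contained in that stripe have weight at most $2\delta^2 w(\R)$. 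The subtlety is that a rectangle intersecting the segment need not be contained in that stripe. I would deal with this by observing that such a rectangle is small (large rectangles crossing a grid line either get their own rectangle face or, if not cut by that particular grid line, are contained in a single row/column and are not intersected at an interior grid line by Lemma~\ref{lem:rec-intersect-constant}), hence has height at most $\delta^2 N$ in the relevant direction, so it lies in the doubled stripe $[0,N]\times[t-\delta^2 N, t+2\delta^2 N]$, whose contained rectangles have weight at most $6\delta^2 w(\R)$ by well-distributedness. Since $\partial F$ is composed of $O(1)$ such segments, but with a careful accounting one sees at most one such ``loaded'' grid segment per side, I would aggregate to get the claimed $8\delta^2 w(\R)$ bound (the constant $8$ absorbing the contributions from the at most four grid cell boundaries a rectangle can meet, as in Lemma~\ref{lem:rec-intersect-constant}).

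For the second statement — that a non-rectangle face $F$ has non-empty intersection with rectangles of total weight at most $8\delta^2 w(\R)$ — the key point is that such a face is contained in a single grid cell $Q$, and in fact in the region of $Q$ left after removing the in-cell lines. By the case analysis of the construction: if $\R_Q = \emptyset$ then $F$ is (essentially) all of $Q$ minus rectangle faces, so rectangles intersecting $F$ are either contained in $Q$ or cross $Q$ but lie in a rectangle face (contributing nothing to $F$), or are small rectangles poking in from an adjacent cell. If $\R_Q$ consists of vertical rectangles, then after removing $L_\ell \cap Q$ and $L_r \cap Q$, the face $F$ is either the central slab between $L_\ell$ and $L_r$ (which contains no crossing rectangle other than the vertical ones of $\R_Q$, and those — wait, they do intersect it) or one of the two side slabs. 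I need to be careful here: the vertical rectangles of $\R_Q$ do intersect the central face. But their \emph{horizontal} extent is bounded: each crosses $Q$ top-to-bottom but has width at most... no, a vertical rectangle can be wide. The cleaner route: any face $F$ contained in a grid cell $Q$ is contained in a vertical or horizontal stripe of width $\delta^2 N$ (since $Q$ itself is a $\delta^2 N \times \delta^2 N$ square), so every rectangle intersecting $F$ is contained in the stripe of $Q$ fattened by $\delta^2 N$ on each side in one direction — again a stripe of width $3\delta^2 N$ — giving weight at most $6\delta^2 w(\R) \le 8\delta^2 w(\R)$ by applying well-distributedness in the appropriate direction.

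The main obstacle I anticipate is the bookkeeping around which rectangles a face ``touches'' versus which its boundary intersects, and making sure that large rectangles contained in a single row/column do not contribute more than allowed — this is exactly where Lemma~\ref{lem:rec-intersect-constant} (at most one intersecting edge per grid cell boundary, and only at extremal boundaries for such large rectangles) must be invoked to avoid overcounting. A secondary subtlety is the use of well-distributedness: a rectangle contained in $[0,N]\times[t,t+\gamma N]$ contributes, but a rectangle merely intersecting that stripe need not; one must always pass to a slightly fattened stripe that genuinely contains all relevant (necessarily thin, in the appropriate direction) rectangles before invoking the bound. Once these are handled, the numeric constant is loose enough that a crude sum over the $O(1)$ boundary pieces and the $\le 4$ grid boundaries per rectangle yields $8\delta^2 w(\R)$ with room to spare.
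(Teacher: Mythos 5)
Your proposal has a genuine gap in its central structural claim: you assert that every non-rectangle face is contained in a single grid cell, and this is false. The fragments of a grid cell boundary lying strictly between $L_{\ell}$ and $L_r$ (in a cell crossed by vertical rectangles) are deliberately \emph{not} added to $\L$, so the face between $L_\ell$ and $L_r$ can extend through many consecutive grid cells (this is exactly the situation in Figure~\ref{fig:stripes}b). The paper's proof spends most of its effort on precisely this case: it first establishes two structural facts (a grid corner not covered by $\L$ lies in a rectangle face; a non-corner point of a grid edge not covered by $\L$ and not in a rectangle face forces the cell to be crossed by rectangles of the perpendicular orientation) and uses them to show that a non-rectangle face spanning several cells must stay inside a \emph{single row or column} of cells, whence all rectangles meeting it live in a stripe of width $\delta^2 N$ and well-distributedness gives $2\delta^2 w(\R)$. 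Your proposal contains no argument for why a face cannot wander through both rows and columns, and without one the whole bound collapses. (Relatedly, your claim that $\partial F$ consists of $O(1)$ grid segments fails for these long faces.)

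There is a second, quantitative problem in your ``cleaner route'' for faces that \emph{are} contained in a single cell $Q$: a single fattened stripe of width $3\delta^2 N$ does not contain all rectangles meeting $F$. A large vertical rectangle confined to the column of $Q$ can poke into $F$ while having height far exceeding $3\delta^2 N$, and simultaneously a large horizontal rectangle confined to the row of $Q$ can poke into $F$ while having width far exceeding $3\delta^2 N$; no single stripe in one direction covers both. The paper resolves this by covering the relevant rectangles with the \emph{union} of two perpendicular stripes (one of width $3\delta^2 N$ for the small and row-confined rectangles, one of width $\delta^2 N$ for the column-confined ones), which is exactly where the constant $8 = 6 + 2$ comes from. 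You actually noticed this tension mid-argument (``the vertical rectangles of $\R_Q$ do intersect the central face'') but then discarded it rather than resolving it; the resolution is the two-stripe decomposition. Your treatment of the boundary statement is salvageable, but the paper's route --- observe that any rectangle met by $\partial F$ also meets the open face $F$ (and that rectangle-face boundaries meet no rectangles at all), so the boundary bound follows from the face bound --- avoids the per-segment bookkeeping entirely.
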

\begin{proof}[Proof (sketch). ]
The lemma clearly holds for rectangle faces. Let $F \in \F(\L)$ be face which is not a rectangle face. If $F$ is contained in one grid cell $Q$
then one can show that all rectangles intersecting $F$ must be contained in the area defined by grid column and grid row containing $Q$ together with the two adjacent grid rows (see Figure~\ref{fig:stripes}a). On the other hand, if $F$ spans several grid cells, one can show that all rectangles intersecting it must be contained in a single grid row or column  (see Figure~\ref{fig:stripes}b). In both cases, the claim follows since $\R$ is well-distributed.
\end{proof}


\subsection{Defining the Cut}

For obtaining our desired cut, we apply the following theorem from
\cite{Arora1998} for the graph $G(\L)$. A \emph{V-cycle} $C$ is a Jordan curve in the embedding of a given
planar graph $G$ which might go along the edges of $G$ and also might
cross faces of $G$. The parts of $C$ crossing an entire face of
$G$ are called \emph{face edges.}

\begin{thm}[\cite{Arora1998}]\label{thm:cycle-separator}
\label{thm:V-cycle}Let $G$ denote a planar, embedded graph with weights on the vertices and faces and
with costs on the edges. Let $W$ denote the total weight, and
$M$ the total cost of the graph. Then, for any parameter $\bar{k}$, we can find in polynomial
time a separating V-cycle $C$ such that
\begin{itemize}
\item the interior and exterior of $C$ each has weight at most $2W/3$,
\item $C$ uses at most $\bar{k}$ face edges, and
\item $C$ uses ordinary edges of total cost $O(M/\bar{k})$.
\end{itemize}
\end{thm}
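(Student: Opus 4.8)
The plan is to reduce the statement to the classical simple‑cycle separator theorem for planar graphs (Miller's cycle separator, together with the Lipton--Tarjan weight‑balancing trick), with the two quantities in the conclusion --- at most $\bar{k}$ face edges and ordinary cost $O(M/\bar{k})$ --- arising from two essentially independent ``directions'' of the reduction. First I would make the standard simplifying assumptions. We may assume $G$ is connected (otherwise run the argument on the component carrying the bulk of the weight, which only helps the balance), and, after inserting a few auxiliary zero‑cost edges drawn inside faces, that $G$ is $2$‑connected; such edges are charged to the face‑edge budget, so only $O(\bar{k})$ of them can ever be used by $C$. We may likewise assume every individual vertex and face weight is at most $W/3$: a single heavy face is cut off by a short curve running just inside its boundary (one face edge plus a few ordinary edges), and a heavy vertex is blown up into a small zero‑cost gadget.

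Next I would control the ordinary cost. Fix a root $r$, build a shortest‑path (Dijkstra) tree $T$ with respect to the edge costs $c(\cdot)$, and write $d(v)$ for the cost‑distance from $r$. Group the vertices into rings $A_i=\{v : i\rho\le d(v)<(i+1)\rho\}$ of width $\rho:=\Theta(M/\bar{k})$. Since each edge lies in at most two rings, $\sum_i(\text{cost of the edges inside }A_i)\le 2M$, so all but $O(\bar{k})$ of the rings are ``cheap'' (internal edge cost $O(M/\bar{k})$). A Lipton--Tarjan‑style level argument on the cumulative weights $W(B(r,i\rho))$ then lets me choose radii $a<b$ with $b-a=O(M/\bar{k})$ such that $B(r,a)$ and $V\setminus B(r,b)$ each carry weight at most $2W/3$ and the two bounding rings are cheap. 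Contracting $B(r,a)$ to a single vertex and deleting everything outside $B(r,b)$ yields a plane graph $G'$ of cost‑radius $O(M/\bar{k})$ in which the weighted faces and vertices still need to be separated in a $2/3$‑balanced way; any annulus‑crossing curve produced inside $G'$ completes to a curve in $G$ by travelling ``around'' the contracted blob using the single large face created by the deletion, i.e.\ as a bounded number of face edges.

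It then remains to produce a balanced cycle separator in a $2$‑connected plane graph $G'$ of small cost‑radius with all weights at most $W/3$, which I would do with the fundamental‑cycle / dual‑tree argument: $T\cap G'$ is a spanning tree, the duals of its complementary edges form a spanning tree $T^{\ast}$ of the dual graph, and deleting a single edge of $T^{\ast}$ splits the faces of $G'$ into the interior and exterior of the corresponding fundamental cycle $C_e$. Choosing that edge so as to balance the face weights on the two sides of $T^{\ast}$ (a standard weighted‑tree edge‑separator step, polynomial time) gives $C_e$ with interior and exterior weight at most $2W/3$; and since $C_e$ is two root paths of $T$ plus one extra edge, its ordinary cost is at most $2\cdot(\text{cost‑radius of }G')+\max_e c(e)=O(M/\bar{k})$, the last term being $O(M/\bar{k})$ after a preprocessing that triangulates faces with zero‑cost face edges and breaks any single expensive edge into face edges. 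Pulling $C_e$ back through the contraction and the gadgets yields the desired V‑cycle in polynomial time. The main obstacle, and the place requiring real care, is keeping \emph{both} budgets under control at once: the contraction/deletion that caps the cost‑radius must not spoil $2$‑connectedness or balance, and the way the final cycle ``closes up'' through contracted regions and through large original faces must be bookkept so that only $O(\bar{k})$ face edges are spent in total --- this accounting, rather than any single clever step, is the technical heart of the theorem (and is carried out in~\cite{Arora1998}).
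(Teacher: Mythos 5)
First, a point of comparison: the paper does not prove Theorem~\ref{thm:cycle-separator} at all. It is quoted from Arora et al.~\cite{Arora1998} and invoked as a black box (in the proof of Lemma~\ref{lem:balanced-cut-parametrized}), so there is no in-paper argument to measure your proposal against. What you have written is an attempt to reprove the cited result, and at the level of strategy it matches the known line of proof: a cost-weighted Lipton--Tarjan/Miller argument, i.e.\ a shortest-path tree, a band of cost-width $O(M/\bar{k})$ around a weight-median radius, contraction and deletion outside the band, and a fundamental-cycle separator inside it, with face edges paying for the discrete jumps that ordinary edges cannot afford.

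As a self-contained proof, however, the sketch has concrete gaps. (i) The ring-counting step is vacuous: since $d(v)\le M$ for every $v$ in a connected graph, there are only $O(\bar{k})$ rings of width $\Theta(M/\bar{k})$ in total, so ``all but $O(\bar{k})$ rings are cheap'' says nothing. What you actually need is that cheap rings can be found \emph{near the weight-median radius}, and this can fail (the cost may be concentrated exactly there); this is precisely why the genuine argument must trade band width against boundary cost rather than fixing $b-a=O(M/\bar{k})$ together with cheap bounding rings outright. (ii) In the final step, the weighted tree-edge-separator on the dual spanning tree only guarantees a $2/3$-balanced split when every dual node has weight at most $W/3$; after your contraction and deletion, the contracted ball and the newly created outer face may each carry weight up to $2W/3$, so the balance of the chosen fundamental cycle does not follow as stated. (iii) ``Breaking an expensive edge into face edges'' is not an available operation---an ordinary edge cannot be reclassified---although the intended fix (bypassing a costly edge of the cycle by a chord of an incident face, at the price of one face edge per bypass) is legitimate and should be stated that way. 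Since the paper itself treats the theorem as external, deferring this bookkeeping to~\cite{Arora1998} is acceptable in context, but the proposal should not be mistaken for a complete proof.
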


First, we need to assign costs to the edges of $G(\L)$ and weights to the vertices and faces of $G(\L)$.
For each edge $e\in E$ we define its cost $c_{e}$ to be the total weight of rectangles intersecting $e$. 
The weights of all vertices are zero. For each face $F$ we define its weight
$w_{F}$ to be the total weight of all rectangles contained in $F$, plus a fraction of the weight of the rectangles which intersect the boundary of $F$. If a rectangle $R \in \R$ has non-empty intersection with $m$ faces, each of these faces obtains a $1/m$-fraction of the weight of $R$. From Lemmas \ref{lem:rec-intersect-constant} and \ref{lem:rec-intersect-faces} we obtain the following bounds.
 
\begin{lem}\label{lem:G-cost-weight}
The total cost of edges in $G(\L)$ is at most $4 w(\R)$. The weight of each non-rectangle 
face $F$ is at most $8 \delta^{2}\cdot w(\R)$. The total weight of the faces equals $w(\R)$. 
\end{lem}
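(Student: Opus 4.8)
# Proof Proposal for Lemma~\ref{lem:G-cost-weight}

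The plan is to establish the three claims in sequence, each following quickly from the construction of $\L$ together with Lemmas~\ref{lem:rec-intersect-constant} and~\ref{lem:rec-intersect-faces} and the weight-distribution scheme just defined. For the first claim (total edge cost at most $4w(\R)$), I would argue by counting incidences between rectangles and edges. By Lemma~\ref{lem:rec-intersect-constant}, each rectangle $R \in \R$ is intersected by at most four edges of $G(\L)$, so $R$ contributes $w(R)$ to the cost $c_e$ of at most four edges. Summing over all edges, $\sum_{e \in E} c_e = \sum_{R \in \R} w(R) \cdot |\{e : e \text{ intersects } R\}| \le 4 \sum_{R \in \R} w(R) = 4 w(\R)$, which is exactly the desired bound.

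For the second claim, I would invoke Lemma~\ref{lem:rec-intersect-faces} directly. Let $F$ be a non-rectangle face. Its weight $w_F$ consists of (i) the total weight of rectangles contained in $F$, plus (ii) a (sub-unit) fraction of the weight of each rectangle merely intersecting the boundary of $F$. Any rectangle contained in $F$ in particular has non-empty intersection with $F$, and any rectangle intersecting the boundary of $F$ also has non-empty intersection with $F$ (since the faces are open and the boundary lies in their closure, one should note here that a rectangle meeting $\partial F$ must — because rectangles are open sets — contain points of the open face $F$ on one side of the boundary segment). Hence every rectangle contributing to $w_F$ has non-empty intersection with $F$, and moreover each contributes at most its full weight $w(R)$. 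By the second part of Lemma~\ref{lem:rec-intersect-faces}, the total weight of rectangles with non-empty intersection with $F$ is at most $8\delta^2 w(\R)$, so $w_F \le 8\delta^2 w(\R)$.

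For the third claim, that $\sum_{F} w_F = w(\R)$, I would observe that the weight-assignment scheme is exactly a partition of unity: each rectangle $R$ with non-empty intersection with $m$ faces distributes weight $w(R)/m$ to each of those $m$ faces, for a total contribution of $w(R)$ spread across the faces. Summing over all rectangles, $\sum_F w_F = \sum_{R \in \R} w(R) = w(\R)$. The only subtlety to verify is that every rectangle does have non-empty intersection with at least one face, i.e.\ no rectangle is entirely swallowed by the line set $\L$; this holds because $\L$ is a finite union of axis-parallel segments (a set of measure zero), while each rectangle is a nonempty open set, so $R \setminus \bigcup_{L \in \L} L \ne \emptyset$, and any point of this difference lies in some face. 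I expect the main (though still minor) obstacle to be this bookkeeping about boundary-incidence versus non-empty-intersection: one must be careful that the definitions of ``contained in $F$'', ``intersects $\partial F$'', and ``non-empty intersection with $F$'' interact as claimed, so that the bound from Lemma~\ref{lem:rec-intersect-faces} applies to exactly the set of rectangles contributing to $w_F$. Everything else is a direct incidence count.
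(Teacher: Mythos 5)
Your proposal is correct and follows essentially the same route as the paper's proof: the first claim via the at-most-four-edges incidence count from Lemma~\ref{lem:rec-intersect-constant}, the second by observing that every rectangle contributing to $w_F$ has non-empty intersection with $F$ and invoking Lemma~\ref{lem:rec-intersect-faces}, and the third by the partition-of-unity bookkeeping. The extra care you take (that a rectangle meeting $\partial F$ also meets the open face $F$, and that no rectangle is swallowed by the measure-zero line set $\L$) is sound and only makes explicit what the paper leaves implicit.
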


For constructing the cut we apply Theorem \ref{thm:cycle-separator} with parameter $\bar{k}=1/\delta$ to the graph $G(\L)$. The obtained  V-cycle $C$ yields a cut in the plane. We replace each face edge crossing some face $F \in \F(\L)$ by the edges going along the boundary of~$F$. If $w(R)\le  w(\R)/3$ for each rectangle $R \in \R$ and if $\delta < 1/5$, then, from Lemma \ref{lem:rec-intersect-faces}, each face has weight at most $w(\R)/3$. 
We can then ensure that each side of the modified cut contains rectangles of total weight at most $2 w(\R)/3$. Using the upper bound on the number of edges of $G(\L)$ from Lemma \ref{lem:G-some-properites}, and upper bounding the total weight of rectangles intersected by the modified cut (using Lemmas \ref{lem:rec-intersect-faces} and \ref{lem:G-cost-weight}) implies the following result.

\begin{lem}\label{lem:balanced-cut-parametrized}
Assume that $1/5 > \delta>0$. For any set $\R$ of pairwise non-overlapping rectangles not containing a rectangle of weight at least $w(\R)/3$
there exists a balanced $O(\delta)$-cheap $O((1 / \delta)^4)$-cut.
\end{lem}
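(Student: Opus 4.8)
The plan is to assemble the machinery built up in this section. First I would use Lemma~\ref{lem:weighted-stretching}, together with the fact that balanced $\alpha$-cheap $\ell$-cuts are preserved under combinatorial equivalence, to reduce to the case that $\bar{\R}$ is well-distributed, has integer coordinates in $\{0,\dots,N\}$, satisfies $1/\delta\in\mathbb{N}$, and (after scaling all coordinates by $(1/\delta)^2$) satisfies $\delta^2 N\in\mathbb{N}$. Then I would invoke the construction of the line set $\L$, the faces $\F(\L)$ and the planar graph $G(\L)$ from Section~\ref{sec:QPTAS}, so that Lemmas~\ref{lem:rectangle-faces}, \ref{lem:G-some-properites}, \ref{lem:rec-intersect-constant}, \ref{lem:rec-intersect-faces} and~\ref{lem:G-cost-weight} all become available.

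Next I would put the weighted planar instance on $G(\L)$ exactly as described before the lemma: cost of an edge $e$ equals the weight of the rectangles intersecting $e$, weight of a face $F$ equals the weight of the rectangles inside $F$ plus the appropriate fractional shares of the rectangles meeting $\partial F$, and all vertex weights zero. By Lemma~\ref{lem:G-cost-weight} the total cost is $M\le 4\,w(\bar{\R})$ and the total face weight is $W=w(\bar{\R})$. Applying Theorem~\ref{thm:cycle-separator} with $\bar k=1/\delta$ yields a separating V-cycle $C$ whose interior and exterior each carry weight at most $2W/3$, which uses at most $1/\delta$ face edges, and whose ordinary edges have total cost $O(M\delta)=O(\delta\,w(\bar{\R}))$. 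I would turn $C$ into a polygon $P$ by rerouting each face edge of $C$ crossing a face $F$ along an arc of $\partial F$; the resulting closed curve runs along edges of $G(\L)$ only, so $P$ is the boundary of a union of faces of $G(\L)$, and hence, by Lemma~\ref{lem:G-some-properites}, an axis-parallel polygon with $O((1/\delta)^4)$ edges contained in the input square. A rectangle meeting $\partial P$ either meets an ordinary edge of $C$ (total weight of such rectangles bounded by the ordinary edge cost, $O(\delta\,w(\bar{\R}))$) or meets $\partial F$ for one of the at most $1/\delta$ crossed faces (each such boundary meeting rectangles of weight at most $8\delta^2 w(\bar{\R})$ by Lemma~\ref{lem:rec-intersect-faces}), so in total the boundary of $P$ meets rectangles of weight $O(\delta\,w(\bar{\R}))$, giving the $O(\delta)$-cheap property.

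It remains to guarantee the balance. Here I would use that every rectangle face of $G(\L)$ contains exactly one rectangle (and no rectangle can meet its boundary, as the rectangles are disjoint open sets), hence has weight strictly below $w(\bar{\R})/3$ by the hypothesis that no rectangle has weight at least $w(\bar{\R})/3$; and every non-rectangle face has weight at most $8\delta^2 w(\bar{\R})<w(\bar{\R})/3$ by Lemma~\ref{lem:rec-intersect-faces} and $\delta<1/5$. Since the separator already splits the full faces into two sides of weight at most $2W/3$, I would then assign each of the at most $1/\delta$ crossed faces to whichever side is currently lighter: a standard greedy argument shows that, because every individual face weight is strictly below $W/3$, neither side can be pushed past $2W/3$. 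Writing $\R_{\mathrm{in}}$, $\R_{\mathrm{out}}$ for the rectangles contained in $P$ resp.\ its complement, each such rectangle has all of its (fractional) weight in the faces of its side, so $w(\R_{\mathrm{in}}),w(\R_{\mathrm{out}})\le 2W/3$, and $P$ is the desired balanced $O(\delta)$-cheap $O((1/\delta)^4)$-cut.

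I expect this last step --- maintaining the $2/3$ balance while rounding $C$ to a genuine polygon --- to be the main obstacle, precisely because a crossed \emph{rectangle} face may carry weight almost $w(\bar{\R})/3$, so a careless assignment of crossed faces to the two sides could overshoot $2W/3$. The clean way around this is to pull all $O((1/\delta)^4)$ rectangle faces (Lemma~\ref{lem:rectangle-faces}) out of the separator instance, add their boundaries to the cut (costing only $O((1/\delta)^4)$ extra edges and, by Lemma~\ref{lem:rec-intersect-faces}, no extra intersected weight), run Theorem~\ref{thm:cycle-separator} on the remaining light faces, and finally distribute the rectangle faces together with the at most $1/\delta$ crossed light faces by the bin-packing-style greedy argument above, using only that each individual weight is strictly below $w(\bar{\R})/3$; this is the point at which the hypothesis $\delta<1/5$ and the exact constants hidden in the $2/3$ bound have to be checked carefully.
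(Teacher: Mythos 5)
Your proposal is correct and follows essentially the same route as the paper: build $G(\L)$, apply Theorem~\ref{thm:cycle-separator} with $\bar k=1/\delta$, reroute each face edge along the boundary of the crossed face toward the currently lighter side, and use that every face (rectangle face or not) has weight strictly below $w(\R)/3$ to keep both sides under $2w(\R)/3$. The worry in your last paragraph is unfounded --- the hypothesis that no rectangle has weight at least $w(\R)/3$ already bounds every rectangle face's weight below $W/3$, so the greedy assignment of crossed faces suffices and the paper does not need to pull rectangle faces out of the separator instance.
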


When choosing $\delta:=\Theta(\eps/ \log(n/\eps))$, from Lemma \ref{lem:good-cut-suffices} and Lemma \ref{lem:balanced-cut-parametrized}  we obtain that GEO-DP is a QPTAS.

\begin{thm}\label{thm:qptas}
The algorithm GEO-DP parametrized by $k=(\frac{\log n}{\eps})^{O(1)}$ yields a quasi-polynomial time approximation scheme for the maximum weight independent set of rectangles problem.
\end{thm}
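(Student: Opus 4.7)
The plan is to simply combine Lemma~\ref{lem:good-cut-suffices} and Lemma~\ref{lem:balanced-cut-parametrized} with an appropriate choice of the parameter $\delta$, and verify that the resulting parameter $k$ satisfies $k=(\log n/\eps)^{O(1)}$ so that the running time bound from Proposition~\ref{prop:DP-time} yields a quasi-polynomial bound.

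First I would set $\delta := c\cdot \eps/\log(n/\eps)$ for a sufficiently small absolute constant $c>0$. This ensures $\delta < 1/5$ for $n$ and $1/\eps$ large enough (small cases can be handled trivially), so Lemma~\ref{lem:balanced-cut-parametrized} applies. Hence, for any set $\Rb$ of pairwise non-overlapping rectangles, either some $R\in \Rb$ has weight at least $w(\Rb)/3$, or there is a balanced $\alpha$-cheap $\ell$-cut with $\alpha = O(\delta)$ and $\ell = O((1/\delta)^4)$. This is exactly the hypothesis of Lemma~\ref{lem:good-cut-suffices}.

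Next I would invoke Lemma~\ref{lem:good-cut-suffices} with these values of $\alpha$ and $\ell$. The resulting approximation ratio is
\[
(1+\alpha)^{O(\log(n/\eps))} \;=\; \bigl(1+O(\eps/\log(n/\eps))\bigr)^{O(\log(n/\eps))} \;\le\; 1+O(\eps),
\]
where I use the elementary inequality $(1+x)^y \le e^{xy} \le 1+2xy$ whenever $xy$ is bounded. Rescaling $\eps$ by a constant factor at the start (which does not affect the asymptotic form of $k$) gives the desired $(1+\eps)$-approximation.

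Finally, I would check the parameter: Lemma~\ref{lem:good-cut-suffices} requires $k = \ell^2\cdot O(\log^2(n/\eps))$, and plugging in $\ell = O((\log(n/\eps)/\eps)^4)$ yields
\[
k \;=\; O\bigl((\log(n/\eps)/\eps)^{8}\bigr)\cdot O(\log^2(n/\eps)) \;=\; \bigl(\tfrac{\log n}{\eps}\bigr)^{O(1)},
\]
as claimed. By Proposition~\ref{prop:DP-time}, the running time is $n^{O(k^2)} = 2^{O(k^2 \log n)} = 2^{\mathrm{poly}(\log n/\eps)}$, which is quasi-polynomial in $n$. Since essentially all the work is already packaged into the two cited lemmas, there is no real obstacle here; the only point that requires any care is verifying that the exponential blowup $(1+\alpha)^{O(\log(n/\eps))}$ in Lemma~\ref{lem:good-cut-suffices} is tamed by the choice $\delta=\Theta(\eps/\log(n/\eps))$, which is precisely why $\delta$ carries the extra $\log(n/\eps)$ factor in its denominator.
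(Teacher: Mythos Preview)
Your proposal is correct and follows essentially the same route as the paper: choose $\delta=\Theta(\eps/\log(n/\eps))$, feed the balanced $O(\delta)$-cheap $O((1/\delta)^4)$-cut from Lemma~\ref{lem:balanced-cut-parametrized} into Lemma~\ref{lem:good-cut-suffices}, and check that the resulting $k$ is $(\log n/\eps)^{O(1)}$ so that Proposition~\ref{prop:DP-time} gives quasi-polynomial running time. The paper's proof is the same argument with slightly less detail on the inequality bounding $(1+\alpha)^{O(\log(n/\eps))}$.
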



\section{\label{sec:PTAS-large-rectangles}A PTAS for Large Rectangles}

In this section we show that GEO-DP yields a polynomial time approximation scheme for
input instances which contain only large rectangles, i.e., in
which every rectangle has width or height greater than a $\delta$-fraction
of the length of the edges of the input square, for some constant $\delta > 0$. As a corollary, we obtain a PTAS
for the special case of the MWISR problem when the lengths of the longer edges the rectangles differ only by a constant factor, i.e., for some constant $\delta > 0$ we have $\max\{h_i,g_i\}\le 1/\delta \cdot \max\{h_{i'},g_{i'}\}$ for all rectangles $R_i, R_{i'}$.

Let $\eps>0$ and $\delta>0$. Let $\R$ be a set of rectangles, and let $N$ be an integer such that for each rectangle $R_{i}\in\R$ we have $x_{i}^{(1)},x_{i}^{(2)},y_{i}^{(1)},y_{i}^{(2)}\in\{0,...,N\}$. We call a rectangle $R_{i}\in\R$  \emph{$\delta$-large} if $h_{i}>\delta N$ or $g_{i}>\delta N$. 
In this section we assume that the input consists of a set $\R$ of $\delta$-large rectangles for some constant $\delta >0$. Assume
w.l.o.g.~that $1/\delta\in\mathbb{N}$ and $\delta N\in\mathbb{N}$. As in the previous section, for the analysis of GEO-DP we can assume that $\R$ itself is the optimal solution, i.e., no two rectangles in $\R$ overlap.

\paragraph{Overview.}
First, we show that there is a way to partition the plane using a set of at most $\frac{1}{\eps} \cdot (\frac{1}{\delta})^{O(1)}$ lines, such that the intersected rectangles have small total weight and each face of the partition is a path or a cycle of ``width'' at most $\delta N$. Note that the latter bound is strictly smaller than the length of the longer edge of each rectangle. In a sense, this partition sparsely describes the topology of the (large) rectangles while losing only rectangles of negligible weight.
Then, we show that for each face GEO-DP can solve the resulting subproblem within a $(1+\eps)$-accuracy, without an increase in the complexity of the subproblems during the recursion.

When given an input instance, GEO-DP first preprocesses it so that all rectangles have coordinates which are integers in~$\{0,...,2n-1\}$. Note, however, that this routine might cause that some rectangles are not $\delta$-large anymore. Therefore, in the analysis in this section, we show that a good recursive subdivision of the input square exists for the \emph{original} input with coordinates in $\{0,...,N\}$ for some integer $N$, and where all the rectangles are $\delta$-large. As the preprocessing consists essentially of stretching and squeezing of the input area, there is a corresponding recursive subdivision of the preprocessed input instance, whose polygons have the same complexity, and which will be considered by GEO-DP. W.l.o.g.~we assume that $\delta N$ is an integer.


\subsection{\label{sub:Constructing-the-Maze-for_Large}Constructing the Partition for Large Rectangles}

We define a set of lines $\L$ forming
a partition in the input square $[0,N]\times[0,N]$. 
The lines in $\L$ will have the properties that 
\begin{itemize}
\item $|\L|\le\frac{1}{\eps} \cdot (\frac{1}{\delta})^{O(1)}$, 
\item the rectangles intersected by a line in $\L$ have a total weight
of at most $\eps \cdot w(\R)$, and
\item each face in the partition obtained by $\L$ which contains rectangles from $\R$ is either a path or a cycle with ``width'' at most $\delta N$.
\end{itemize}
Without saying explicitly, from now on each considered line is either
horizontal or vertical and its endpoints have integral coordinates.

\paragraph{Grid and blocks.}
We construct a grid consisting of $\frac{1}{\delta}\times\frac{1}{\delta}$
grid cells in the input square $[0,N]\times[0,N]$,
i.e., for each $i,j\in\{0,...,1/\delta-1\}$ there is a grid
cell with coordinates $[i\cdot\delta N,(i+1)\cdot\delta N]\times[j\cdot\delta N,(j+1)\cdot\delta N]$. 

We slice all rectangles parallel to their longer edge into \emph{blocks}, i.e., rectangles of unit width or height. Formally, we cut each rectangle $R_{i}\in\R$ with $h_{i}>g_{i}$ into $x_{i}^{(2)}-x_{i}^{(1)}$ \emph{vertical blocks}, with the corners $(j,y_{i}^{(1)})$ and $(j+1,y_{i}^{(2)})$ for $j=\{x_{i}^{(1)},x_{i}^{(1)}+1,\ldots,x_{i}^{(2)}-1\}$. With a symmetric operation we generate \emph{horizontal blocks }for each rectangle $R_{i}\in\R$ with $h_{i}\le g_{i}$. We denote by $\B$ the set of all generated blocks and observe that also they have integer coordinates. Like the rectangles, we define the blocks as open sets.
We will first find a partition for the blocks, which essentially means that in the first version of the partition we can cut the rectangles arbitrarily parallel to their longer edges. Later we will show how to adjust the partition---by introducing some detours---so that these cuts will be eliminated.

We use the following notation.
A line $L$ \emph{touches} a rectangle $R$ if $L\cap (R \cup \partial R)\ne\emptyset$. A line $L$ \emph{intersects} a rectangle $R$ if $L\cap R\ne\emptyset$. A line $L$ \emph{hits} a rectangle $R$ if $L$ touches $R$, $L$ does not intersect $R$, but extending $L$ would result in $L$ intersecting $R$. A line $L$ \emph{cuts} a rectangle $R$ if $R \setminus L$ has two connected components.
We say that a rectangle $R$ (a block $B$, a line $L$) \emph{intersects} a grid
cell $Q$ if $R\cap\int(Q)\neq\emptyset$ ($B\cap\int(Q)\neq\emptyset$, $L\cap\int(Q)\neq\emptyset$).
Each rectangle $R\in\R$, as well as each block $B\in\B$, intersects
at least two grid cells. We say that a block $B\in\B$ \emph{ends}
in a grid cell $Q$, if $B$ intersects $Q$, and for a short edge
$e$ (i.e., an edge with unit length) of $B$ we have $e\subseteq Q$.

\paragraph*{Initial set of lines.}

We start by introducing an initial set of lines $\L_{0}$ as follows. First, we add to $\L_{0}$ four lines which form the boundary of the input square $[0,N] \times [0,N]$.

Consider a grid cell $Q$ and its bottom edge $e$. If possible, we
add to $\L_{0}$ the following maximal lines which do not intersect
any block or any line previously added to $\L_{0}$, which touch $e$,
and which are strictly longer than $\delta N$: 
\begin{itemize}
\item a vertical line with the smallest possible $x$-coordinate, 
\item a vertical line with the largest possible $x$-coordinate, 
\item a vertical line $L$ which maximizes the length of $L\cap Q$. If
there are several such lines, we add two: one with the smallest and
one with the largest $x$-coordinate. Lines maximizing $|L\cap Q|$ are called \emph{sticking-in lines} for $e$ in $Q$. The ones added to $\L_{0}$ in this step are called \emph{extremal sticking-in lines}.
\end{itemize}
\begin{figure}
\begin{centering}
\includegraphics{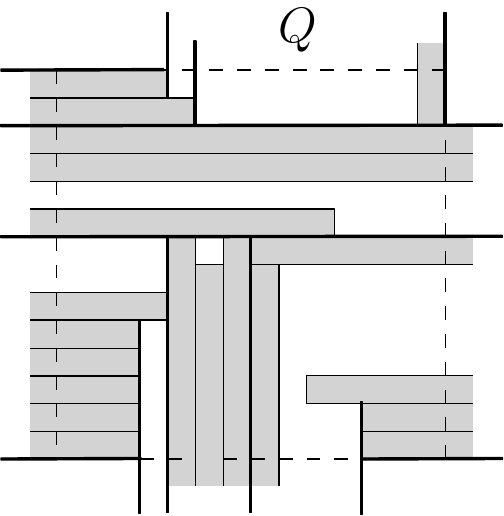} 
\par\end{centering}

\caption{\label{fig:construction-lines-L}The thick lines denote the lines in $\L_0$
added for the grid cell $Q$. The blocks of the considered instance are depicted in gray.}
\end{figure}

We do the same operation for the top, left and right edges of $Q$,
where for the left and right edges we take horizontal lines, considering
the $y$-coordinates instead of the $x$-coordinates. We do this in
a fixed order, e.g., first we add all vertical lines, and then all
horizontal lines. See Figure~\ref{fig:construction-lines-L} for an
example. We do not want $\L_{0}$ to be a multi-set and thus we add each
line at most once. Note that for any two lines $L_{1},L_{2}\in\L_{0}$
the intersection $L_{1}\cap L_{2}$ is either empty or consists of
one single point (which is the endpoint of one of the lines). All lines
in $\L_{0}$ are maximal, which means that they cannot be extended
without intersecting any perpendicular block or a perpendicular line
in~$\L_{0}$.
\begin{prop}\label{prop:l-size}
The set $\L_{0}$ consists of at most $16(\frac{1}{\delta})^{2}+4$
lines.
\end{prop}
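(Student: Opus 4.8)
The plan is to bound $|\L_0|$ by counting, for each of the four edges of each grid cell, how many lines the construction can add. There are $(1/\delta)^2$ grid cells, and each grid cell has exactly four edges (bottom, top, left, right). For a fixed edge $e$ of a fixed grid cell $Q$, inspection of the construction shows that we add at most four lines: the one with the smallest $x$-coordinate, the one with the largest $x$-coordinate, and up to two extremal sticking-in lines (the smallest- and largest-$x$ lines among those maximizing $|L\cap Q|$). So each edge contributes at most $4$ lines, giving at most $4\cdot 4\cdot(1/\delta)^2 = 16(1/\delta)^2$ lines from this phase. Adding the four lines forming the boundary of the input square $[0,N]\times[0,N]$ yields the claimed bound $16(1/\delta)^2+4$.

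First I would note that the quantity $(1/\delta)^2$ is exactly the number of grid cells, since the grid has $(1/\delta)\times(1/\delta)$ cells. Then I would go edge by edge: fix a grid cell $Q$ and its bottom edge $e$. The construction explicitly adds (if they exist and satisfy the stated constraints of being strictly longer than $\delta N$, touching $e$, and not intersecting any block or previously added line) a vertical line with smallest possible $x$-coordinate, a vertical line with largest possible $x$-coordinate, and the extremal sticking-in lines, which are ``one with the smallest and one with the largest $x$-coordinate'' among those maximizing $|L\cap Q|$ — that is, at most two such lines. Hence at most $1+1+2 = 4$ lines are associated with the pair $(Q,e)$. The same count applies verbatim to the top edge (with vertical lines) and, by the symmetric construction, to the left and right edges (with horizontal lines, using $y$-coordinates). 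Since the construction adds each line at most once (we explicitly avoid a multi-set), the total over all grid cells and all four of their edges is at most $4\cdot 4\cdot(1/\delta)^2$.

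The only subtlety — and the thing I would be slightly careful about — is making sure that the enumeration of ``which lines get added per edge'' is exhaustive, i.e.\ that the bullet list in the construction really contributes at most four lines per edge (and not, say, a line for each tie in some hidden way). But the construction is explicit: the first two bullets each add at most one line, and the third bullet, even accounting for ties in $|L\cap Q|$, adds at most two (the extremal ones). So no overcounting obstacle arises; this is a pure bookkeeping argument with no real difficulty, and certainly no issue needs to be hidden. Adding the $+4$ for the boundary of the input square completes the proof.
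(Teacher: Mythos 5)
Your proof is correct and is exactly the bookkeeping the paper relies on (the paper states this proposition without an explicit proof, treating it as immediate from the construction): at most $1+1+2=4$ lines per edge of a grid cell, four edges per cell, $(1/\delta)^2$ cells, plus the four boundary lines of the input square.
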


\paragraph*{Extending lines.}

A line in $\L_{0}$ might have \emph{loose ends} which are endpoints
which are not contained in some other line in $\L_{0}$. We fix this
by adding a set of lines $\Le$. We extend each loose end $p$ of
a line in $\L_{0}$ by a path connecting $p$ either to a line in
$\L_{0}$ or to a line in the so far computed set $\Le$. Such
a path will contain $O({1}/(\eps {\delta}^{2}))$ horizontal
or vertical line segments, and will cut only rectangles of total weight $O(\eps \delta^2 \cdot w(\R))$ parallel to their shorter edges. 
We add the lines of this path to $\Le$ and continue with the next
loose end of a line in $\L_{0}$. 

The details of the construction can be found in Appendix \ref{app:construction-le}. Here we present just the idea of the construction. As all lines from $\L_{0}$ are maximal, if an endpoint of a line $L \in \L_{0}$ does not hit a line from $\L_{0}$ then it must hit a perpendicular block $B \in \B$. The first line $L_1$ on the path goes along the boundary of $B$ such that it crosses the boundary of a grid cell (as blocks are not contained in a single grid cell, their longer edges always cross the boundary of a grid cell). We extend $L_1$ so that it either touches a line from $\L_{0} \cup \Le$ (and the construction of the path is finished), or hits some perpendicular block. In the latter case we proceed with the construction of the path, considering a loose end of $L_1$ instead of $L$.
Notice that in this part of the construction the path does not intersect any rectangles parallel to their shorter edges (i.e., it does not intersect any blocks). After $O({1}/(\eps {\delta}^{2}))$ steps either the path ends by touching a line from $\L_{0} \cup \Le$, or we can make a shortcut by adding a line $L'$ at the end of the path such that $L'$ goes along a grid cell boundary, connects the path with a line from $\L_{0} \cup \Le$, and cuts rectangles of total weight $O(\eps \delta^2 \cdot w(\R))$ parallel to their shorter edges. See Figure~\ref{fig:construction-lines-Le} for an example of the construction.

\begin{figure}
\begin{centering}
\includegraphics[width=0.3\textwidth]{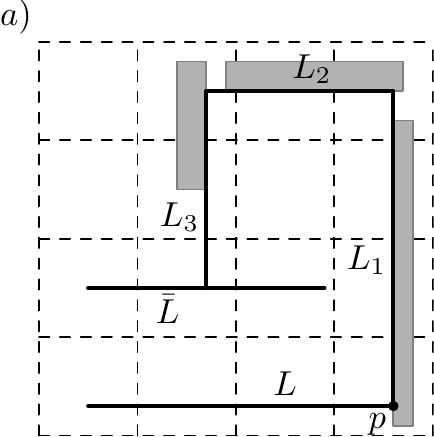}
\hskip0.1\textwidth
\includegraphics[width=0.3\textwidth]{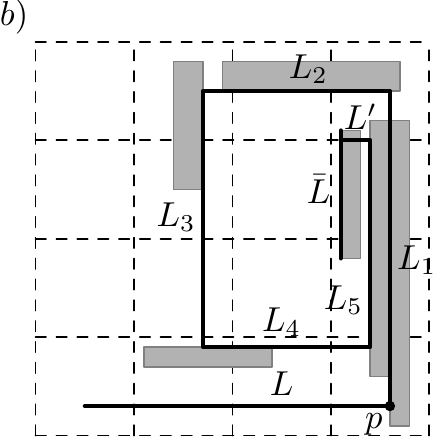} 
\par\end{centering}

\caption{\label{fig:construction-lines-Le}The construction of the lines $\Le$. A new path is constructed, starting at a loose end $p$ of a line $L \in \L_{0}$. In case $a)$ the construction of the path ends when $L_3$ hits a line $\bar{L} \in \L_{0}$. In case $b)$ we make a "cheap shortcut", by ending the path $L_1, \ldots , L_5$ with a line $L'$, which connects $L_5$ with a line $\bar{L} \in \L_{0}$. $L'$ cuts rectangles of small total weight.}
\end{figure}

We say that a set of lines $\L$ is \emph{nicely connected
}if no two lines $L,L'\in\L$ overlap (i.e., share more than one point)
or intersect properly (i.e., such that $L\cup L'\setminus\{L\cap L'\}$
has four connected components) and for any endpoint $p$ of a line
$L\in\L$ there is a line $L'\in\L$, perpendicular to $L$, such
that $L\cap L'=\{p\}$.

\begin{lem}\label{l-and-le-properties}
The set of lines $\L_{0}\cup\Le$ is nicely
connected, $|\Le|\le\frac{1}{\eps}\cdot(\frac{1}{\delta})^{O(1)}$, and the total weight
of rectangles in $\R$ cut by some line in $\L_{0}\cup\Le$ parallel to their shorter edge
is upper bounded by $\eps\cdot w(\R)$. Also, all lines
in $\Le$ cutting rectangles in $\R$ lie on some grid line and for each
line $L \in \Le$ there exists no cell $Q$ such that $L \subseteq \int(Q)$.
\end{lem}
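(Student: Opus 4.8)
The plan is to obtain all four claims of the lemma from the definitions of $\L_0$ and $\Le$ together with the per-path properties established in the construction (Appendix~\ref{app:construction-le}), which I would treat as black boxes: each loose end of $\L_0$ spawns a single path consisting of $O(1/(\eps\delta^2))$ segments that cuts rectangles of total weight $O(\eps\delta^2 w(\R))$ parallel to their shorter edges, every shorter-edge cut of which lies on a grid line, and whose remaining segments run maximally along block edges without ever meeting a block.

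For \emph{nice connectedness} I would induct on the paths added to $\Le$, maintaining the invariant that $\L_0$ together with the paths added so far is nicely connected. For $\L_0$ alone this was already observed when $\L_0$ was defined (distinct lines meet in at most one point, which is an endpoint of one of them, and all lines are maximal), so the only violations are loose ends $p$ of lines $L\in\L_0$, which by maximality hit a perpendicular block $B$. When the path from $p$ is added, its first segment runs along a longer edge of $B$, hence is perpendicular to $L$ and meets $L$ precisely at $p$, curing that loose end; each later turn of the path joins a horizontal and a vertical segment and is therefore covered by a perpendicular line; and the path is grown only until it \emph{touches} an already present line of $\L_0\cup\Le$, so its terminal endpoint lies on a perpendicular line as well. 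No newly added segment overlaps or properly crosses an existing line, because each segment is extended only up to the first block or line it would meet, so it stops short of crossing anything, and consecutive segments of a path are perpendicular. Thus the invariant is preserved and $\L_0\cup\Le$ is nicely connected.

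For \emph{size, weight, and location} I would argue as follows. Proposition~\ref{prop:l-size} gives $|\L_0|=O(1/\delta^2)$, hence there are $O(1/\delta^2)$ loose ends, each spawning a path of $O(1/(\eps\delta^2))$ segments and introducing no further loose ends, so $|\Le|=O\!\left(1/(\eps\delta^4)\right)=\frac{1}{\eps}\left(\frac{1}{\delta}\right)^{O(1)}$. For the weight, I first note that \emph{no} line of $\L_0$ cuts a rectangle $R$ parallel to the shorter edge of $R$: such a cut is perpendicular to the longer edge of $R$, hence passes through the interior of some block of $R$, contradicting that the lines of $\L_0$ are maximal and disjoint from all blocks (the four sides of $I$ cut nothing interior). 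So only $\Le$ contributes, and summing the $O(\eps\delta^2 w(\R))$ per-path bound over $O(1/\delta^2)$ paths gives total weight $O(\eps\, w(\R))$; shrinking $\eps$ by the absolute hidden constant at the outset makes this at most $\eps\, w(\R)$. Finally, by the same block-intersection argument a segment of $\Le$ that runs maximally along a block edge (meeting no block) cuts no rectangle parallel to its shorter edge, so every cutting segment of $\Le$ is one of the grid-aligned shortcut segments, which lie on grid lines; and every segment of $\Le$ either lies on a grid line or runs along the longer edge of a block, whose length exceeds $\delta N$ because the parent rectangle is $\delta$-large, hence cannot be contained in a grid cell of side $\delta N$. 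This yields the last two assertions.

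The hard part will not be this bookkeeping but the per-path guarantees imported from the construction: that a path grown from a loose end can always be advanced, that it terminates after only $O(1/(\eps\delta^2))$ steps, and that when it has not reattached within that many steps a grid-aligned shortcut cutting weight only $O(\eps\delta^2 w(\R))$ is available to close it off. The delicate point is the combined length/weight argument for this — one has to show, by a charging or averaging argument, that a path running for $\Theta(1/(\eps\delta^2))$ steps must have passed a grid-aligned segment crossing a region of rectangle weight only $O(\eps\delta^2 w(\R))$; equivalently, that if every candidate shortcut were expensive, the total weight of the distinct rectangles met along the path would exceed $w(\R)$. Controlling self-intersection of the path and the interplay of its routing with the grid while keeping both bounds simultaneously is where the care is needed.
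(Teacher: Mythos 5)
Your proposal follows essentially the same route as the paper: nice connectedness by induction over the added paths, the size bound by counting $O(1/\delta^2)$ loose ends times $O(1/(\eps\delta^2))$ segments per path, the weight bound by charging only the grid-aligned shortcut segments $s_i^j$ (each cutting weight at most $2w(\R)/M$, chosen by exactly the averaging argument over the at-most-four sets $\R_i^j$ containing each rectangle that you anticipate as ``the hard part''), and the location claims from the construction. One minor imprecision: for the claim that no $L\in\Le$ lies in the interior of a grid cell, the right reason is that the construction chooses each segment's far endpoint $p_i$ outside the cell containing $p_{i-1}$ (so the segment itself crosses a cell boundary), not merely that the block it follows is longer than $\delta N$ --- a segment could a priori cover only part of that block edge.
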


\paragraph{Faces of the partition.}
The lines $\L_{0} \cup \Le$ subdivide the input square into a set of faces which are the connected components of $I\setminus (\L_{0} \cup \Le)$ (so in particular, the faces are open sets). Denote by $\F(\L_{0} \cup \Le)$
the set of all faces of this partition, and by $\F_+(\L_{0} \cup \Le)$ the set of all faces which contain at least one block from $\B$. As the next lemma shows, inside of each grid cell each face from the set $\F_+(\L_{0} \cup \Le)$ has a simple structure.
We say that a polygon $P$ is an \emph{L-shape} if its boundary has exactly six axis-parallel edges.

\begin{lem}\label{lem:corridor-shapes} 
Consider a face $F\in\F_+(\L_{0}\cup\Le)$ 
and let $Q$ be a grid cell with $F\cap Q\ne\emptyset$. Consider one connected component $C$
of $F\cap Q$. Then $\int(C)$ is the interior of a rectangle or the
interior of an L-shape. Also, $C\cap\partial Q$ consists of one or
two disjoint lines, where $\partial Q$ denotes the boundary of~$Q$.
\end{lem}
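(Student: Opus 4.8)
The plan is to analyze the local structure of a face $F \in \F_+(\L_0 \cup \Le)$ inside a single grid cell $Q$ by carefully enumerating which lines of $\L_0 \cup \Le$ can appear inside $Q$ and how they can bound a connected component of $F \cap Q$. The key structural fact I would exploit is Lemma~\ref{l-and-le-properties}: the lines of $\Le$ that lie strictly inside a cell do not exist (every $L \in \Le$ either lies on a grid line or is not contained in $\int(Q)$), so the only lines of $\L_0 \cup \Le$ that can appear in the interior of $Q$ are lines from $\L_0$, and by construction these are precisely the (at most a few per edge of $Q$) maximal ``sticking-in'' type lines that were added for the four edges of $Q$ — namely, for each edge $e$ of $Q$, the extremal lines and the extremal sticking-in lines that touch $e$ and are strictly longer than $\delta N$.

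**First I would** set up the following dichotomy. Since $F$ contains a block $B \in \B$ and $B$ intersects $Q$, and since $B$ is not cut by any line of $\L_0 \cup \Le$ parallel to its shorter edge within $Q$ (a block can only be cut parallel to its longer edge, and such cuts, by Lemma~\ref{l-and-le-properties}, occur only on grid lines, i.e., on $\partial Q$), the component $C$ of $F \cap Q$ containing (a piece of) $B$ must be a region of $Q$ bounded by $\partial Q$ together with some of the at most a constant number of interior lines of $\L_0$ crossing $Q$. I would argue that, because every interior line in $\L_0$ touching an edge $e$ of $Q$ has length strictly greater than $\delta N$ and is maximal, and because these lines do not intersect properly (any two meet only at an endpoint), the arrangement of interior lines inside $Q$ partitions $Q$ into regions each of which is a rectangle or an L-shape. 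The heart of the argument is to show that only \emph{one} interior line of $\L_0$ can actually bound the component $C$ that contains a block: two such lines coming from opposite edges of $Q$ would both be ``long'' (length $> \delta N$), so they would have to overlap or cross unless they are on the same side, which contradicts $B \subseteq C$ having room between them; and two long lines coming from adjacent edges of $Q$ meet at a corner-endpoint, carving off exactly an L-shape or rectangle. This gives that $\int(C)$ is the interior of a rectangle (if no interior line, or one interior line flush against a side) or the interior of an L-shape (if one interior line sticking in from an edge, creating a six-edged notch).

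**The second claim**, that $C \cap \partial Q$ consists of one or two disjoint lines, I would derive from the same case analysis: a block $B$ intersecting $Q$ has its long edge crossing $\partial Q$, so $B$ (hence $C$) meets $\partial Q$; the portion of $\partial Q$ lying on $C$'s boundary is the complement within $\partial Q$ of the feet of the interior lines and of the full edges of $Q$ that were added to $\L$. Walking around $\partial Q$, each interior line of $\L_0$ touching $\partial Q$ removes at most one subinterval, and the structure of which sub-arcs of $\partial Q$ survive in a \emph{single} face is controlled by how many extremal lines were placed — yielding at most two disjoint arcs (one arc if $C$ is a rectangle flush to one side, two arcs if $C$ is an L-shape whose two ``open'' ends lie on two different edges of $Q$). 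I would also invoke that the extremal sticking-in lines were specifically chosen (the one with smallest and the one with largest $x$-coordinate among maximizers) precisely to ``pin'' the faces so no face can have three or more distinct exits through $\partial Q$.

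**The main obstacle** I anticipate is the bookkeeping of exactly which of the $O(1/\delta^2)$ lines from $\L_0$ can simultaneously bound one connected component $C$ inside a fixed $Q$, and ruling out more complicated shapes (e.g., a face that wraps around two sticking-in lines and becomes a U-shape or an S-shape with eight or more edges). This is where the precise selection rule for $\L_0$ — taking only the extremal lines per edge and the two extremal sticking-in lines per edge, all required to be strictly longer than $\delta N$, all maximal, none properly intersecting — must be used in full force: I would show that any line of $\L_0$ interior to $Q$ and not flush against an edge ``separates'' $Q$ into pieces one of which contains no block of the face $F$, so it cannot be part of $\partial C$ for the block-containing component except as a single notch. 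Combined with the absence of $\Le$-lines interior to $Q$ and the fact that blocks are cut only along grid lines, this forces $C$ to be a rectangle or an L-shape and forces $C \cap \partial Q$ to consist of at most two disjoint segments. A secondary subtlety is degenerate positions (a sticking-in line coinciding with a cell edge, or two extremal lines coinciding) — these I would handle by noting they only make $C$ simpler (a rectangle rather than an L-shape), never more complex.
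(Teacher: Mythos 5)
Your proposal has two substantive gaps. First, it rests on a misreading of Lemma~\ref{l-and-le-properties}: that lemma only says no line of $\Le$ is \emph{entirely contained} in the interior of a grid cell, not that lines of $\Le$ avoid cell interiors. By construction the path lines of $\Le$ run along long edges of blocks and each spans at least two grid cells, so they routinely pass through the interior of $Q$ and can bound the component $C$ there. Your entire arrangement analysis (``the only lines of $\L_0\cup\Le$ that can appear in the interior of $Q$ are lines from $\L_0$, and these are the constantly many extremal/sticking-in lines'') is therefore built on a false premise. The real work in the paper is exactly to control what happens when a line of $\L_0\cup\Le$ (possibly from $\Le$) \emph{ends} inside $Q$: the auxiliary claim (Lemma~\ref{lem:bottom-sticking-in}) shows that such a line must hit an \emph{extremal sticking-in} line from $\L_0$ for a perpendicular edge of $Q$, and iterating this (Lemma~\ref{lem:faces-line-continuation}) forces the boundary of $C$ to close up into a rectangle or L-shape. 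Your substitute for this step --- ``two long lines from opposite edges would have to overlap or cross'' --- is not correct: a sticking-in line for the bottom edge need only be longer than $\delta N$ as a whole line and may penetrate $Q$ only partially, so two such lines from opposite edges at different $x$-coordinates can coexist without meeting, which is precisely the configuration that could create an S- or Z-shaped component if not excluded by the extremality argument.

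Second, you only treat the connected component of $F\cap Q$ that contains a piece of a block. The lemma is stated for \emph{every} connected component $C$ of $F\cap Q$ and every cell $Q$ meeting $F$; a face in $\F_+(\L_0\cup\Le)$ need only contain a block \emph{somewhere}, and a long corridor-like face will have many components in many cells that contain no block at all. The paper closes this gap with a propagation step: it first establishes the shape claim for block-containing components, and then shows that if the claim holds for a component $C$ in $Q$ it also holds for any neighboring component $C'$ in an adjacent cell with $C\cap C'\neq\emptyset$, which by connectivity of $F$ covers all components. Without this step (or an equivalent), your argument does not prove the stated lemma. The second claim about $C\cap\partial Q$ is likewise asserted rather than derived; in the paper it falls out of identifying the one or two edges $e,e'$ of $Q$ through which the component exits, via the same continuation lemmas.
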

\begin{proof}[Proof (sketch).]
First, assume that there is an edge $e$ of $Q$ and two lines $L,L' \in \L_{0}\cup\Le$ from the boundary of $C$ such that $L \cap e \ne \emptyset \ne L' \cap e$ and the subsegment of $e$ between $L \cap e$ and $L' \cap e$ is contained in $C \cup \partial C$.
Using that the extremal sticking-in lines of $Q$ belong to $\L_0$, with some careful analysis we can show that there is an edge $e' \ne e$ of $Q$ 
and two lines $\bar{L},\bar{L'}$ connecting $L$ and $L'$ with $e'$, respectively (possibly $L=\bar{L}$ and $L'=\bar{L'}$). With this insight and the fact that the lines in $\L_{0}\cup\Le$ are nicely connected, we can show that if $e'$ is opposite of $e$ then $\int(C)$ is the interior of a rectangle, otherwise $\int(C)$ is the interior of an L-shape.

Next, we need to show that an edge $e$ and lines $L,L' \in \L_{0}\cup\Le$ with the needed properties always exist. First, we show that they exist if $C$ has a non-empty intersection with some block $B \in \B$ contained in $F$. That holds, as for some edge $e$ of $Q$ we have $B \cap e \neq \emptyset$ and then $B \cap e  \subseteq C$. As the extremal long lines crossing $e$ are in $\L_{0}$, there are lines $L,L'$ with the properties above which ``surround'' $B$ within $Q$, and they form the boundary of $C$.
Last, we show that if we have two neighboring connected components $C$ and $C'$ and the needed properties hold for one of them, then they must hold for the other one as well.
\end{proof}

Now we study the structure of the faces in $\F_+(\L_{0}\cup\Le)$ at the boundary of the grid cells. In the following lemma we show that multiple connected components of a face inside one grid cell $Q'$ cannot merge into one component in a neighboring grid cell $Q$.  

\begin{lem} \label{lem:corridor-at-boundaries}
Let $Q$ and $Q'$ be grid cells such that $Q \cap Q' = \{e\}$ for an edge $e$.
Consider a face $F\in\F_+(\L_{0}\cup\Le)$ 
such that $F\cap Q\ne\emptyset$, and let $C$ be a connected component of $F\cap Q$ such that $C\cap e\ne\emptyset$.
Then there is exactly one connected component $C'$ of $F\cap Q'$ such that $C \cap C' \neq \emptyset$.
\end{lem}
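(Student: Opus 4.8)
The plan is to argue by contradiction, using the structural description of faces inside a single grid cell provided by Lemma~\ref{lem:corridor-shapes} together with the fact that the lines in $\L_0 \cup \Le$ are nicely connected (Lemma~\ref{l-and-le-properties}). By Lemma~\ref{lem:corridor-shapes} applied to $Q$, the connected component $C$ meets $\partial Q$ in one or two disjoint line segments; in particular $C \cap e$ is a single line segment $s \subseteq e$ (it cannot be two segments of $e$, since two disjoint segments on the boundary of a rectangle or L-shape contained in $Q$ must lie on distinct edges of $Q$ — here I would spell out the short case analysis on the shape of $C$). Suppose now, for contradiction, that there are two distinct connected components $C_1', C_2'$ of $F \cap Q'$ with $C_1' \cap C \neq \emptyset$ and $C_2' \cap C \neq \emptyset$. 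Since the interiors of $C, C_1', C_2'$ are disjoint and all three are connected subsets of the open face $F$, the only way $C$ can touch $C_i'$ is along $e$; hence both $C_1' \cap e$ and $C_2' \cap e$ are non-empty and both are subsegments of $s$.

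The key step is then to derive a contradiction from having two distinct components of $F \cap Q'$ touching the single segment $s$ from the $Q'$-side. Applying Lemma~\ref{lem:corridor-shapes} to $Q'$, each $C_i'$ meets $\partial Q'$ in one or two disjoint segments, and the ones lying on $e$ are $C_1' \cap e$ and $C_2' \cap e$ respectively. Between these two segments on $e$ there must be a point $p \in e$ that lies on the boundary of $F$, i.e.\ on some line $L \in \L_0 \cup \Le$; moreover, since $p$ separates $C_1'$ from $C_2'$ inside the strip near $e$, the line through $p$ bounding $F$ there must be perpendicular to $e$ (a line along $e$ would not separate the two components locally). So there is a line $L \in \L_0 \cup \Le$ perpendicular to $e$ whose relevant endpoint or interior point lies on $s \subseteq e$ strictly between the two components. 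But $s = C \cap e$ with $C$ a connected component of $F \cap Q$ — so the open segment $s$ is disjoint from every line of $\L_0 \cup \Le$ inside $Q$. The line $L$ must therefore approach $e$ from the $Q'$-side only. Now I would invoke nice connectedness: every endpoint of a line in $\L_0 \cup \Le$ is met by a perpendicular line of $\L_0 \cup \Le$ at that point. If $L$ ends at $p \in e$, there is a perpendicular line $L'$ through $p$; $L'$ lies along the line containing $e$, and since $p$ is in the relative interior of $s$, $L'$ would have to coincide locally with a portion of $e$ that is in $C \cup \partial C$ from the $Q$-side — contradicting $s \cap \L = \emptyset$ inside $Q$. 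If instead $L$ crosses $e$ transversally, then $L$ enters $\int(Q)$ and again meets the open segment $s$, the same contradiction.

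I expect the main obstacle to be handling the bookkeeping at $e$ cleanly: making precise the claim that ``two distinct components of $F \cap Q'$ touching a common boundary segment $s$ force a separating perpendicular line whose footpoint lies in the relative interior of $s$,'' and ruling out the degenerate possibilities where that line lies along $e$ itself or has an endpoint exactly at an endpoint of $s$. The degenerate endpoint-of-$s$ cases are in fact fine — if the separating line meets $e$ only at an endpoint of $s$ it does not separate $C_1'$ from $C_2'$ — so the argument only needs the strict-interior case, which is the one that contradicts $s \subseteq \partial C$ with $C$ a single $Q$-component. A secondary technical point is justifying that $C \cap e$ is genuinely one segment and not two; this follows from the L-shape/rectangle classification in Lemma~\ref{lem:corridor-shapes} since a rectangle or an L-shape inscribed in the square $Q$ cannot have two disjoint edges both lying on the same side $e$ of $Q$. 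Once these two localization facts are in place, the contradiction is immediate and the lemma follows.
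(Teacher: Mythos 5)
There is a genuine gap, and it sits exactly where the real difficulty of this lemma lies. Your whole argument reduces to the claim that $C\cap e$ is a single segment $s$, which you justify by the rectangle/L-shape classification of Lemma~\ref{lem:corridor-shapes}. But that classification describes $\int(C)$, i.e.\ the closure of $C$; the set $C\cap e$ is only the part of $\bar{C}$'s edge on $e$ that actually belongs to the open face $F$. A priori a segment of $\L_{0}\cup\Le$ can lie \emph{along} $e$ and cover a middle portion of that edge: this does not disconnect $C$ inside $Q$ (the barrier is one-dimensional and sits on $\partial Q$), yet it splits $C\cap e$ into two pieces, one meeting $C_1'$ and one meeting $C_2'$ --- which is precisely the ``fork'' configuration of Figure~\ref{fig:no-forks} that the lemma must exclude. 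Ruling it out is not a consequence of the shape classification; it needs the construction-specific fact (the paper's Lemma~\ref{lem:l-on-grid-boundary}) that any maximal segment of $\L_{0}\cup\Le$ lying on $e$ and avoiding the endpoints of $e$ is incident to a perpendicular line entering $Q$. Since no line entering $Q$ touches $e$ strictly between the two lines $L_1,L_2$ bounding $C$, such a segment must run all the way to $L_1\cap e$ or $L_2\cap e$, and then one of $C_1',C_2'$ is in fact separated from $C$. That argument in turn relies on how the segments $s_i^j$ of $\Le$ are created and on the extremality of the sticking-in lines in $\L_0$; none of this appears in your proposal.

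The second half of your write-up does not repair this. If $p$ really lies in the relative interior of $s=C\cap e\subseteq F$, then $p\in F$ already contradicts $p$ lying on a line of $\L_0\cup\Le$, so the detour through nice connectedness and the perpendicular line $L'$ is redundant; and if $p\notin C\cap e$ (the case you actually need to exclude), then your final step fails, because a segment $L'$ of $\L_0\cup\Le$ running along $e$ is perfectly allowed to lie on $\partial C$ --- the boundary of $C$ is made of such lines --- so ``$L'$ coincides with a portion of $e$ in $C\cup\partial C$'' is not a contradiction. In short, the correct skeleton (reduce to connectivity of $C\cap e$, then note that a connected $s\subseteq F\cap Q'$ joining $C_1'$ to $C_2'$ forces $C_1'=C_2'$) is present, but the one nontrivial step is asserted rather than proved, and the argument offered in its place is circular.
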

\begin{proof}[Proof (sketch).]
Clearly, at least one such component $C'$ exists since $F\cap e\ne\emptyset$. Assume for contradiction that there are two connected components $C'_1$ and $C'_2$ of $F\cap Q'$ with non-empty intersection with $C$. Then there must be a line $L \in \L_{0} \cup \Le$ intersecting $Q'$ which touches $e$ between $C'_1 \cap e$ and $C'_2 \cap e$. We can show, from the construction of the lines in $\L_{0} \cup \Le$, that $L$ is connected with a boundary of $C$ in $Q$ via a line from $\L_{0} \cup \Le$. This is a contradiction, as then either $C'_1$ or $C'_2$ does not intersect $C$.
\end{proof}

\paragraph*{Circumventing some rectangles.}As the last step of the construction of the partition, we want to ensure that if a line $L$ in
our construction intersects a rectangle $R \in \R$, then it cuts $R$
parallel to its short edge.
We achieve this as follows: whenever a line $L \in \L_{0}\cup\Le$ 
intersects a rectangle $R\in\R$ such that $R\setminus L$ has only one
connected component or $R\cap L$ is longer than $\delta N$,
then we add the four edges of $R$ as new lines and remove all parts
of lines from $\L_{0}\cup\Le$ 
which are inside $R$. For any line in $\L_{0}\cup\Le$ this operation adds at most $O(1/\delta)$ new lines.
Denote by $\L$ the resulting
final set of lines. Similarly as above, the set $\F(\L)$ denotes all faces, and the set $\F_+(\L)$ denotes all faces which contain at least
one rectangle.

Using the bounds on the number of edges in $\L_{0} \cup \Le$ from Proposition \ref{prop:l-size} and Lemma \ref{l-and-le-properties}, and the upper bound on the total weight
of rectangles in $\R$ cut by a line from $\L_{0}\cup\Le$ parallel to its 
shorter edge (Lemma \ref{l-and-le-properties}), we can show the following result.

\begin{lem}\label{lem:maze-properties}
The set of lines $\L$ has the properties that $|\L|\le\frac{1}{\eps}\cdot\left(\frac{1}{\delta}\right)^{O(1)}$ and
the total weight of intersected rectangles is upper bounded
by $\eps \cdot w(\R)$.
\end{lem}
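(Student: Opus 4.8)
The plan is to combine the two contributions to $\L$, namely $\L_0 \cup \Le$ from the earlier part of the construction and the extra edges added in the ``circumventing some rectangles'' step, and to bound both the number of lines and the weight of intersected rectangles for each of these contributions separately.

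First I would handle the cardinality bound $|\L| \le \frac{1}{\eps} \cdot (\frac{1}{\delta})^{O(1)}$. By Proposition~\ref{prop:l-size} we have $|\L_0| \le 16(\frac{1}{\delta})^2 + 4$, and by Lemma~\ref{l-and-le-properties} we have $|\Le| \le \frac{1}{\eps} \cdot (\frac{1}{\delta})^{O(1)}$, so $|\L_0 \cup \Le| \le \frac{1}{\eps} \cdot (\frac{1}{\delta})^{O(1)}$. For the circumventing step, each line in $\L_0 \cup \Le$ that triggers a circumvention contributes at most $O(1/\delta)$ new lines (as stated in the construction: replacing a line by a detour around a rectangle that it meets ``the wrong way'' adds $O(1/\delta)$ segments, since along the length of the detour only $O(1/\delta)$ grid cells and hence $O(1/\delta)$ offending rectangles are encountered). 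Since each original line is processed once, the total number of added lines is at most $O(1/\delta) \cdot |\L_0 \cup \Le|$, which is again $\frac{1}{\eps} \cdot (\frac{1}{\delta})^{O(1)}$. Summing the two contributions gives the claimed bound.

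Next I would bound the total weight of rectangles intersected by some line of $\L$. The intersected rectangles split into two groups: those cut parallel to their shorter edge and those cut parallel to their longer edge. After the circumventing step, no line cuts a rectangle through a cut of length more than $\delta N$ and no line leaves a rectangle with only one connected component upon removal, so by construction every rectangle still intersected by a line of $\L$ is cut parallel to its shorter edge (the offending rectangles have been surrounded and the interior line-parts removed). Hence it suffices to bound the weight of rectangles cut parallel to their shorter edges. Lemma~\ref{l-and-le-properties} already gives that the rectangles cut parallel to their shorter edge by a line of $\L_0 \cup \Le$ have total weight at most $\eps \cdot w(\R)$ — but here I need to be slightly careful about constants, so I would actually run the earlier constructions (in Appendix~\ref{app:construction-le} and the choice of grid granularity) with the bound $O(\eps \delta^2 w(\R))$ per detour-path and $\frac{1}{\eps}(\frac{1}{\delta})^{O(1)}$ paths, rescaling the hidden constants so that the sum over all of $\L_0 \cup \Le$ plus all circumventing edges is at most $\eps \cdot w(\R)$. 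The circumventing edges themselves are the four edges of certain rectangles $R$; an edge of $R$ can only intersect a rectangle $R' \ne R$ if $R'$ touches $R$'s boundary, and since the original line cutting $R$ already ``paid'' for passing through that region, no additional weight beyond a constant factor of what was already intersected by $\L_0 \cup \Le$ is incurred — so after appropriately shrinking the constant in the $O(\cdot)$ of Lemma~\ref{l-and-le-properties} (equivalently, refining the grid by a constant factor), the whole set $\L$ cuts rectangles of total weight at most $\eps \cdot w(\R)$.

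The main obstacle I expect is the bookkeeping in the second part: verifying that the circumventing step does not blow up the intersected weight. One has to argue that each rectangle newly surrounded in the circumventing step was already being cut by a line of $\L_0 \cup \Le$ (true by the trigger condition), so the set of circumvented rectangles is a subset of the already-counted ones and contributes no new weight; and then that the four replacement edges of each such $R$ intersect only rectangles adjacent to $R$, whose weight is controlled because they lie in a thin strip along $\partial R$ — here one would want to invoke a well-distributedness or packing argument, or simply absorb this into the $O(\cdot)$ constant by choosing the grid fine enough up front. Making this quantitative and consistent with the constants chosen in the earlier lemmas is the delicate point; everything else is a direct substitution of the prior bounds.
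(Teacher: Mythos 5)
Your cardinality bound follows the paper's proof essentially verbatim: combine Proposition~\ref{prop:l-size} and Lemma~\ref{l-and-le-properties} for $|\L_0\cup\Le|$, then note that each line triggers at most $O(1/\delta)$ circumventions (the paper's precise count: a line of length at most $N$ can cut fewer than $1/\delta$ rectangles $R$ with $|L\cap R|>\delta N$, and can intersect at most $2$ further rectangles without cutting them), each contributing $O(1)$ new lines. That part is fine.

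The weight bound, however, has a genuine gap. You correctly isolate the circumventing edges as the delicate point, but the resolution you propose does not work and misses the one observation that makes the step trivial. The four new lines added for a circumvented rectangle $R$ are the \emph{edges of $R$ itself}, and in this section $\R$ is assumed to be a feasible solution, i.e., a family of pairwise disjoint \emph{open} rectangles. Hence $\partial R$ cannot meet the interior of any $R'\in\R$ with $R'\ne R$ (a point of $\partial R$ lying in the open set $R'$ would force $R\cap R'\ne\emptyset$), so the added lines intersect \emph{no} rectangles whatsoever, and the only rectangles intersected by $\L$ are exactly those cut parallel to their shorter edge by $\L_0\cup\Le$ --- already bounded by $\eps\cdot w(\R)$ in Lemma~\ref{l-and-le-properties} with no rescaling of constants. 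Your substitute arguments are problematic on their own terms: well-distributedness is a property established only in the QPTAS section via the stretching preprocessing and is not available for the $\delta$-large instance here; and ``absorbing into the constant by refining the grid'' cannot control the weight of rectangles abutting $\partial R$, since that weight is independent of the grid granularity. As written, the proposal would not close; the disjointness observation is the missing ingredient.
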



\subsection{Solving the Subproblems for the Faces}
We transform the set of lines $\L$ into a graph $G(\L) = (V,E)$ in the same way as in Section \ref{sec:QPTAS}. From 
Lemma~\ref{lem:maze-properties} we get that $|V| \le (1/\eps) ({1}/{\delta})^{O(1)}$ and $|E| \le (1/\eps) ({1}/{\delta})^{O(1)}$. The algorithm GEO-DP parametrized with $k\ge(1/\eps)(1/\delta)^{O(1)}$ tries to subdivide the input
square into the faces $\F(\L)$ and then recurses on the subproblems
given by the faces. Observe that each face in $\F(\L) \setminus \F_+(\L)$
does not contain any rectangle from $\R$ and thus we ignore those faces from now on. We distinguish two types of faces in $\F_+(\L)$:
faces which are homeomorphic to a straight line, and those homeomorphic to a cycle. Note that due to Lemma~\ref{lem:corridor-shapes} and Lemma~\ref{lem:corridor-at-boundaries} no more complex shapes can arise.

Let $F\in\F_+(\L)$ be homeomorphic to a straight line. We claim that GEO-DP finds an optimal solution
for $F$. To get some intuition, let us pretend that $F$ is the union of
a set of complete grid cells and that all rectangles inside $F$ are
blocks, i.e., $g_{i}=1$ or $h_{i}=1$ for each $R_{i}\in\R$ with
$R_{i}\subseteq F$. Then there exists a cut through $F$ which splits
$F$ into two sub-faces without intersecting any rectangle (see Figure~\ref{fig:slice-corridors-cycles}a).
Moreover, the boundary of each sub-face will not be more complex than
the boundary of $F$ itself. Due to this, the complexity
of the subproblems does not increase during the recursion process, and the algorithm GEO-DP finds an optimal solution
for $F$. While for arbitrary faces homeomorphic to a straight line and arbitrary rectangles instead of blocks the analysis is more technical,
and in particular requires circumventing rectangles within $F$, the
key concept is the same.

\begin{lem}\label{lem:dp-for-paths}
Consider a face $F\in\F_+(\L)$ which is homeomorphic to a
straight line. Then GEO-DP parametrized by a value $k\ge(1/\eps)(1/\delta)^{O(1)}$
computes an optimal solution for the DP-cell corresponding to $F$.
\end{lem}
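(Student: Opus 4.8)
The plan is to show that, for a face $F$ homeomorphic to a straight line, GEO-DP can recursively cut $F$ into pieces that are again homeomorphic to straight lines and whose boundary complexity never exceeds a fixed bound depending only on $\eps$ and $\delta$; once the pieces are small enough they contain at most one rectangle and the DP handles them trivially. Concretely, I would first set up the ``backbone'' picture: by Lemma~\ref{lem:corridor-shapes} and Lemma~\ref{lem:corridor-at-boundaries}, inside each grid cell $Q$ the intersection $F\cap Q$ is a disjoint union of rectangles and L-shapes, each meeting $\partial Q$ in one or two lines, and these pieces glue consistently across cell boundaries; hence $F$ is a ``corridor'' of width at most $\delta N$ that threads through a sequence of grid cells without branching. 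Since each rectangle $R\subseteq F$ is $\delta$-large, its longer edge is longer than $\delta N$, which is at least the width of the corridor, so the longer edge of $R$ must run ``across'' the corridor; in particular $R$ cannot be nested deep inside $F$ in a way that blocks cutting $F$ transversally.

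The heart of the argument is then a counting/cutting step: I would argue that there is always a transversal cut of $F$ — a path with $O(1/\delta)$ axis-parallel segments — that splits $F$ into two sub-faces, each again homeomorphic to a straight line, each with boundary complexity at most that of $F$ plus $O(1/\delta)$, and such that each sub-face contains roughly half of the rectangles of $F$ (or, in the degenerate case, one of the sub-faces is exactly one rectangle face). For the ``pretend'' case in the excerpt (faces that are unions of whole grid cells, rectangles that are blocks) this is immediate: cut straight across the corridor between two consecutive grid cells, hitting nothing. In general one has to (i) choose the cut location among the $O(1/\delta)$ grid-cell boundaries crossed by $F$ so that the two sides are balanced in number of rectangles, and (ii) detour the cut around any rectangle it would otherwise slice improperly — but since each rectangle is $\delta$-large and thus wider than the corridor, such a rectangle already spans the corridor, so the ``detour'' is just following the two long edges of that rectangle, contributing $O(1)$ extra segments, and we may instead take the cut to coincide with an edge of that rectangle. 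Iterating this balanced cut for $O(\log n)$ levels (using $1\le w(R)\le n/\eps$, or simply $|\R|\le n$, to bound the depth) reduces $F$ to subproblems with at most one rectangle, which GEO-DP solves exactly.

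To finish, I would check the bookkeeping needed for GEO-DP to ``see'' this recursion: at every level the relevant polygon is an intersection of $F$ with $O(\log n)$ cuts, each an axis-parallel path with $O(1/\delta)$ segments, so the total edge count stays $(1/\eps)(1/\delta)^{O(1)}\cdot O(\log n)$ — but in fact, because each cut is transversal and replaces $F$ by a strictly shorter corridor, the boundary complexity does not accumulate: a sub-corridor is described by $O(1/\eps)(1/\delta)^{O(1)}$ edges just like $F$, since it is still a face-type region cut out by a bounded number of the original lines in $\L$ together with $O(1/\delta)$ new segments. Hence for $k\ge(1/\eps)(1/\delta)^{O(1)}$ every polygon arising in the recursion has at most $k$ edges and every partition step uses at most $k$ pieces, so GEO-DP, which tries all such partitions, in particular tries this one; an induction on the recursion depth (the base case being $|\R_F|\le 1$, where $sol$ is exact) then gives that GEO-DP computes an optimal solution for the DP-cell of $F$. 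The main obstacle I expect is step (ii) above made fully rigorous: controlling the detours around rectangles so that the sub-faces remain genuinely homeomorphic to a straight line (no spurious cycles or branches are created) and so that the added complexity is $O(1/\delta)$ per cut rather than growing with the recursion — this is exactly where $\delta$-largeness (longer edge $>\delta N\ge$ corridor width) is used crucially, and where Lemma~\ref{lem:corridor-shapes} and Lemma~\ref{lem:corridor-at-boundaries} must be invoked to rule out more complicated local topology.
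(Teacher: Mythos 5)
There is a genuine gap, and it starts from an inverted geometric picture. You write that since each rectangle $R\subseteq F$ has longer edge $>\delta N$ and the corridor has width at most $\delta N$, the longer edge ``must run across the corridor.'' The opposite is true: an edge longer than $\delta N$ cannot fit transversally inside a corridor of width at most $\delta N$, so every rectangle contained in $F$ has its longer edge running \emph{along} the corridor (this is exactly why the paper slices rectangles into blocks parallel to their longer edges). Consequently, your transversal cut across the corridor slices every rectangle spanning that cross-section \emph{parallel to its shorter edge}, i.e.\ it genuinely destroys it, and such rectangles can be staggered so that every cross-section of the corridor is blocked. The proposed repair --- detouring along the two long edges of an offending rectangle with $O(1)$ extra segments --- does not work: following a long edge means travelling a distance $>\delta N$ along the corridor, where you typically meet the next staggered rectangle, so the detour degenerates into an unboundedly long zigzag and the cut is no longer transversal or balanced. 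Since the lemma claims that GEO-DP computes an \emph{optimal} solution for the DP-cell of $F$, you cannot afford to lose even one rectangle to a cut, so this is not a fixable bookkeeping issue but a failure of the decomposition strategy.

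The paper's proof goes the other way: it cuts the corridor \emph{longitudinally}. For each connected component $C$ of $F\cap Q$ it defines a family $\L(C)$ of lines running parallel to the blocks (hence, for blocks, intersecting no rectangle at all), builds a family $\F(F)$ of ``sub-corridors'' $\bigcup_C A(C)$ obtained by consistently choosing two such lines per component, and shows that every member of $\F(F)$ that is not a unit-width strip splits into two members of $\F(F)$ along a block-avoiding path (started at a dead-end component and propagated by following block edges across cell boundaries). Complexity never accumulates because every sub-face is, by definition, a union of one bounded-complexity subarea per connected component of $F\cap Q$, of which there are only $\frac{1}{\eps}(\frac{1}{\delta})^{O(1)}$; termination is at unit-width strips containing at most $1/\delta$ blocks, not at single-rectangle pieces after $O(\log n)$ halvings. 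For general rectangles the longitudinal cuts may still touch rectangles, but only parallel to their \emph{longer} edges, and the paper handles this by peeling off the $\frac{1}{\eps}(\frac{1}{\delta})^{O(1)}$ intersected rectangles $\R(F')$ as separate pieces of the partition, preserving exact optimality. Your last paragraph correctly identifies the detour control as the main obstacle, but the resolution requires abandoning transversal cuts altogether rather than making them rigorous.
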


\begin{figure}
\begin{centering}
\begin{tabular}{ccc}
\includegraphics[scale=0.4]{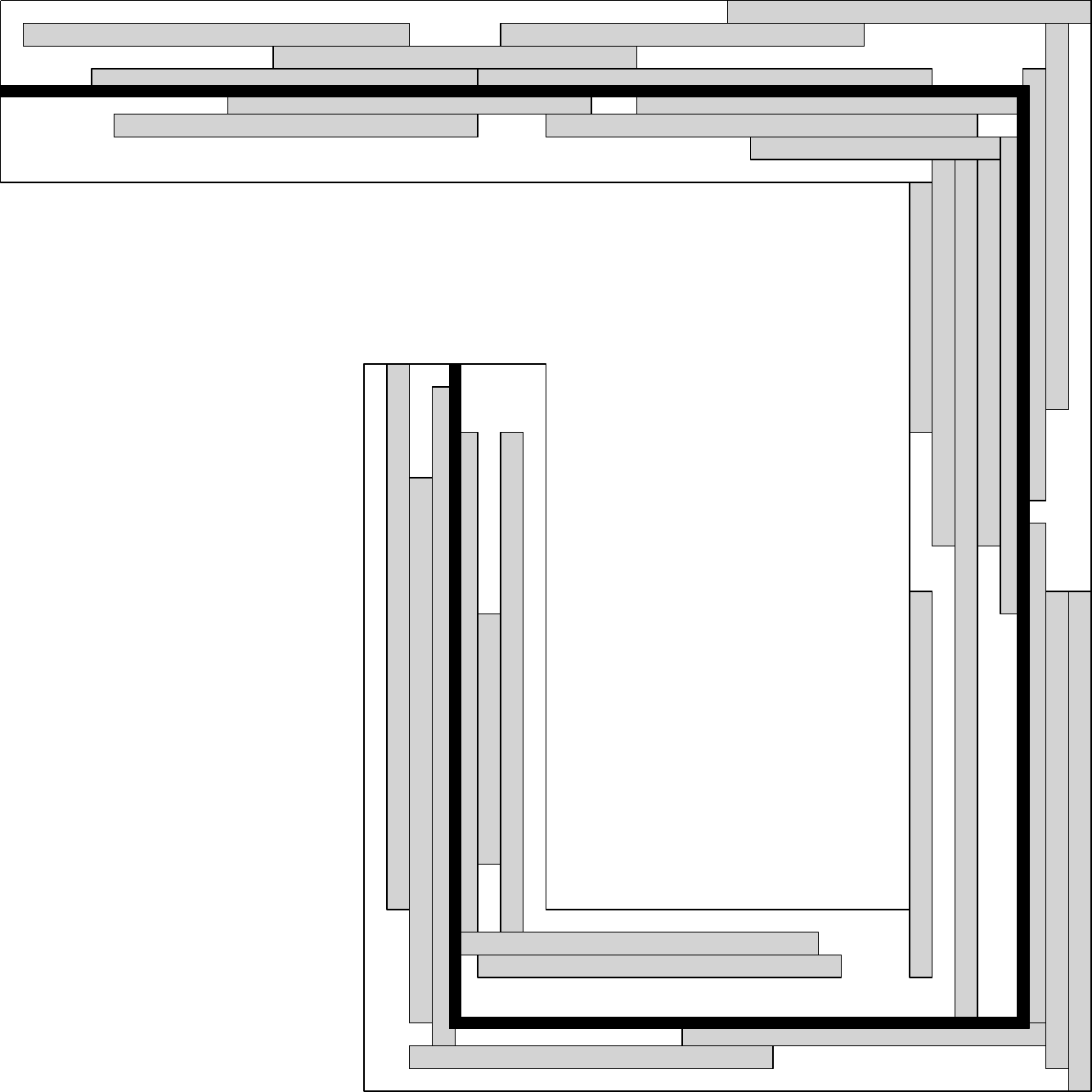} & ~~~~~~~~~~~~~~~~~~ & \includegraphics[scale=0.4]{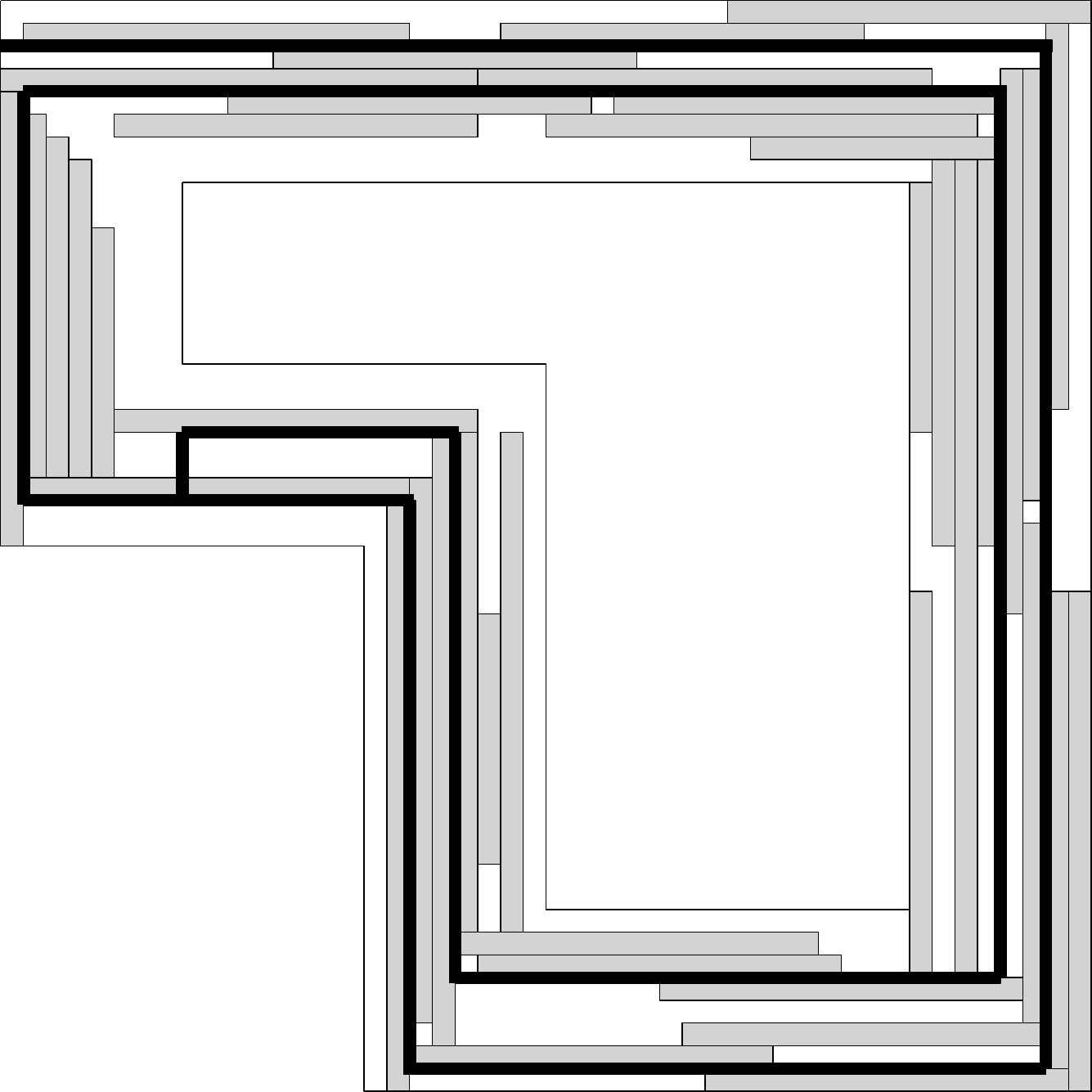} \tabularnewline
(a) &  & (b)\tabularnewline
\end{tabular} 
\par\end{centering}

\caption{\label{fig:slice-corridors-cycles}The thick lines show a cut of a face (a) into two paths with the same
complexity as the original path and (b) into a path and a cycle with
the same complexity as the original cycle.}
\end{figure}

Now consider a face $F\in\F_+(\L)$ which forms a cycle, i.e., which
is homeomorphic to $S^{1}$. Let us pretend again that $F$ is the union
of some complete grid cells and all rectangles in $F$ are blocks.
Then we can split $F$ into a path-face $F_{1}$ and a smaller cycle
$F_{2}$ while 
ensuring that the boundary of the faces $F_{1}$ and $F_{2}$ consists of at most $(1/\eps)(1/\delta)^{O(1)}$ edges each 
(see Figure~\ref{fig:slice-corridors-cycles}b). The recursion terminates when at some recursion level $F_2 = \emptyset$.  
 When doing this operation repeatedly, we ensure that the total weight of intersected rectangles is only an $\eps$-fraction of the total weight of the rectangles in the paths that we detached from the cycle.

Using this construction we can show that
GEO-DP parametrized by sufficiently large $k$ computes a $(1+\eps)$-approximation for $F$, using
that it solves the subproblems for path-faces optimally. Again, for arbitrary
rectangles and more general cycle-faces~$F$ the reasoning is more technical while the core idea stays the same.

\begin{lem}\label{lem:dp-for-cycles}
Consider a face $F\in\F_+(\L)$ which is homeomorphic to $S^{1}$.
Then GEO-DP parametrized by a value $k\ge\frac{1}{\eps}(\frac{1}{\delta})^{O(1)}$
computes a $(1+\eps)$-approximative solution for the DP-cell corresponding
to~$F$.
\end{lem}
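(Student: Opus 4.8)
The plan is to reduce the subproblem for the cycle-face $F$ to a bounded number of path-face subproblems — each solved optimally by Lemma~\ref{lem:dp-for-paths} — while losing only an $\eps$-fraction of the weight $w(\R_F)$ of the rectangles $\R_F$ contained in $F$. First I would fix the geometric picture: by Lemmas~\ref{lem:corridor-shapes} and~\ref{lem:corridor-at-boundaries}, $F$ is a corridor of width strictly below $\delta N$ running through a cyclic sequence of grid cells $Q_1,\dots,Q_m$, meeting each boundary $Q_i\cap Q_{i+1}$ in one or two lines and restricting inside each $Q_i$ to a rectangle or an L-shape. Since every rectangle is $\delta$-large and $F$ is this thin, every rectangle in $\R_F$ runs lengthwise along the corridor and spans several consecutive cells; in particular every cut we make will meet such rectangles only parallel to their short edges.

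Next I would carry out the decomposition sketched in Figure~\ref{fig:slice-corridors-cycles}(b): produce a single cut, following grid lines and, where needed, detouring around individual rectangles as in the ``circumventing rectangles'' step of Section~\ref{sub:Constructing-the-Maze-for_Large}, that splits $F$ into a path-face $F_1$ and a strictly smaller cycle-face $F_2$ (fewer grid cells, or smaller area), with both $F_1$ and $F_2$ of boundary complexity at most $(1/\eps)(1/\delta)^{O(1)}$ — no larger than that of $F$, which is bounded by $|\L|\le(1/\eps)(1/\delta)^{O(1)}$ lines, since the cut adds only $(1/\eps)(1/\delta)^{O(1)}$ edges. Iterating on $F_2$ gives a sequence $F=F^{(0)},F^{(1)},\dots,F^{(T)}=\emptyset$ of cycle-faces of bounded complexity, with detached path-faces $F_1^{(j)}$ and $T\le(1/\eps)(1/\delta)^{O(1)}$ steps. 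The cuts should be chosen so that the total weight of rectangles they meet is at most $\eps\sum_j w(\R_{F_1^{(j)}})\le \eps\, w(\R_F)$; this is an averaging argument — at each step pick a cheapest admissible cut position among the cyclically many candidate grid-cell boundaries and charge its small cost to the path piece detached there.

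Finally I would run this through GEO-DP. With $k\ge(1/\eps)(1/\delta)^{O(1)}$ chosen to dominate all the complexities above, GEO-DP at the cell for $F^{(0)}$ considers, among all partitions into at most $k$ polygons with at most $k$ edges, the two-way split $\{F_1^{(0)},F^{(1)}\}$, computes $sol(F_1^{(0)})$ optimally by Lemma~\ref{lem:dp-for-paths}, and treats $F^{(1)}$ recursively the same way; unrolling the $T$ levels gives $w(sol(F))\ge\sum_j w(OPT_{F_1^{(j)}})$. Conversely, the optimal solution $OPT_F$ for the subproblem on $F$ decomposes, up to the rectangles it loses to the cuts, into its restrictions to the $F_1^{(j)}$, so $\sum_j w(OPT_{F_1^{(j)}})\ge OPT_F-\eps\, w(\R_F)\ge(1-\eps)OPT_F$; rescaling $\eps$ yields the claimed $(1+\eps)$-approximation. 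The parameter $k$ is never exceeded, since each $F^{(j+1)}$ has at most $(1/\eps)(1/\delta)^{O(1)}\le k$ edges and each partition uses two pieces.

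I expect the construction and analysis of the cut to be the main obstacle. It must simultaneously (i) be topologically correct — detach a face homeomorphic to a line while leaving one homeomorphic to $S^1$, with exactly the local structure inside the affected cells that Lemma~\ref{lem:dp-for-paths} requires of each $F_1^{(j)}$; (ii) meet rectangles only parallel to their short edges, which forces detours around the $\delta$-large rectangles without re-entering the cycle region or inflating the edge count; and (iii) have amortized cost at most $\eps\, w(\R_F)$ over the whole recursion, which needs the averaging over candidate cut positions together with the thinness of $F$ and the lengthwise layout of its rectangles. A secondary point is checking that the recursion depth $T$ and the per-level fan-out stay within $k$.
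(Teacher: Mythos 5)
Your high-level plan matches the paper's strategy -- peel path-faces off the cycle-face, solve each path-face optimally via Lemma~\ref{lem:dp-for-paths}, and charge the weight lost to the cuts via an averaging argument -- but it assumes away the step that is actually hard, namely the existence of a single cut of complexity $(1/\eps)(1/\delta)^{O(1)}$ that detaches a path-face of the same bounded complexity while leaving a cycle-face. The difficulty is that any cut through $F$ may only meet rectangles parallel to their \emph{short} edges; away from the few positions where you are willing to pay, the cut is therefore forced to run lengthwise along the corridor, following the long edges of the blocks. When such a cut has gone once around the cycle it generally returns at a different ``depth'' in the corridor and cannot close up, so it keeps spiralling until it hits itself or the boundary of $F$. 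The paper embraces this: it draws a path $L_0,\dots,L_\ell$ with $\ell$ \emph{unbounded}, obtains a detached region $F'$ (a long spiral) whose boundary complexity is \emph{not} bounded by $(1/\eps)(1/\delta)^{O(1)}$, and only then chops $F'$ transversally into consecutive path-faces $F_1,\dots,F_m$ (cutting after every $1/\eps$-th L-shaped component, with a random offset controlling the weight of the blocks destroyed). It must additionally prove that every suffix $F_{m'}\cup\dots\cup F_m\cup F''$ has boundary complexity $(1/\eps)(1/\delta)^{O(1)}$, so that GEO-DP can realize the peeling one path-face at a time within the parameter $k$. None of this is replaced by your ``pick a cheapest admissible cut position among the cyclically many candidate grid-cell boundaries'': a single transversal cut cannot detach a path-face from an annulus (it only opens the annulus into one disk or disconnects nothing), and a closing cut that reconnects the end of one spiral turn to its start cuts blocks whose cost is not obviously chargeable to the detached piece.

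Two smaller points. Your bound $T\le(1/\eps)(1/\delta)^{O(1)}$ on the number of peeling iterations is unjustified and most likely false -- the corridor has width up to $\delta N$ while blocks have unit width, so the number of ``layers'' can depend on $N$; fortunately this bound is not needed, since GEO-DP only requires each intermediate polygon to have at most $k$ edges, not a bounded recursion depth. Second, the paper first proves the statement for blocks and only then lifts it to arbitrary rectangles (the boundaries of the subfaces meet rectangles of $\R_F$ only parallel to their long edges, and those rectangles are split off as separate pieces of the decomposition, costing an extra $(1/\delta)^{O(1)}$ in complexity); your sketch does not address this reduction at all.
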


When constructing the partition given by the lines $\L$ we intersect (and thus lose) rectangles of
total weight at most $\eps \cdot w(\R)$. When solving the
subproblems given by the faces of the partition we again lose rectangles of total
weight at most $\eps \cdot w(\R)$. Thus, by choosing $k:=(1/\eps)(1/\delta)^{O(1)}$, 
GEO-DP yields a PTAS.

\begin{thm}\label{thm:ptas-for-large}
Let $\eps>0$ and $\delta>0$ be constants. 
\end{thm}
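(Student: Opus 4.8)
The statement of Theorem~\ref{thm:ptas-for-large} is truncated in the excerpt, but from the surrounding text its content is clear: for any constants $\eps>0$ and $\delta>0$, the algorithm GEO-DP parametrized by $k=(1/\eps)(1/\delta)^{O(1)}$ is a polynomial time $(1+\eps)$-approximation algorithm for the maximum weight independent set of rectangles problem restricted to instances consisting only of $\delta$-large rectangles. The plan is to assemble the pieces developed in Section~\ref{sec:PTAS-large-rectangles}, combining the running time bound for GEO-DP with the two sources of loss (the partition and the solution of the subproblems) and accounting for the preprocessing.

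First I would fix the reduction to the case where $\R$ is itself the optimal solution, as justified at the end of Section~\ref{sec:Algorithm}; thus it suffices to exhibit a recursive partition of the input square that GEO-DP will explore and whose total loss is at most $O(\eps)\cdot w(\R)$. Next I would invoke the construction of Section~\ref{sub:Constructing-the-Maze-for_Large}: by Lemma~\ref{lem:maze-properties} there is a set of lines $\L$ with $|\L|\le (1/\eps)(1/\delta)^{O(1)}$ whose intersected rectangles have total weight at most $\eps\cdot w(\R)$, and by Lemma~\ref{lem:corridor-shapes} and Lemma~\ref{lem:corridor-at-boundaries} every face of $\F_+(\L)$ is homeomorphic either to a straight line or to a cycle. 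The corresponding graph $G(\L)$ has $(1/\eps)(1/\delta)^{O(1)}$ vertices and edges, so the face-partition of the input square into the faces $\F(\L)$ is a partition into at most $(1/\eps)(1/\delta)^{O(1)}$ polygons each with at most $(1/\eps)(1/\delta)^{O(1)}$ edges; hence for $k\ge(1/\eps)(1/\delta)^{O(1)}$ this partition is among those enumerated in the first level of GEO-DP's recursion.

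Then I would apply Lemma~\ref{lem:dp-for-paths} and Lemma~\ref{lem:dp-for-cycles}: the DP-cell corresponding to each path-face is solved optimally, and the DP-cell corresponding to each cycle-face is solved within a factor $1+\eps$, again provided $k\ge(1/\eps)(1/\delta)^{O(1)}$. Summing over the faces, GEO-DP computes a solution of weight at least $(1-\eps)\cdot\big(w(\R)-\eps\cdot w(\R)\big)\ge(1-O(\eps))\cdot w(\R)$; rescaling $\eps$ by a constant gives a $(1+\eps)$-approximation. Finally I would address the preprocessing discrepancy flagged in the overview: GEO-DP first stretches the instance so that all coordinates lie in $\{0,\dots,2n-1\}$, which may destroy $\delta$-largeness, but the analysis above is carried out on the original coordinates in $\{0,\dots,N\}$, and since stretching/squeezing the input area maps the partition $\L$ to a combinatorially equivalent partition with polygons of exactly the same complexity, GEO-DP on the preprocessed instance still explores the corresponding recursive subdivision. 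For the running time, Proposition~\ref{prop:GEO-DP-running-time} gives $n^{O(k^2)}$, which is polynomial because $k=(1/\eps)(1/\delta)^{O(1)}$ is a constant when $\eps$ and $\delta$ are constants. This yields the claimed PTAS, and the corollary about rectangles whose longer edges differ by at most a constant factor follows by taking $\delta$ to be the reciprocal of that factor (after scaling so the largest rectangle dimension equals the side of the input square).

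The main obstacle I would expect is not in the bookkeeping above but in making precise the claim that the preprocessing-induced stretching preserves exactly the recursive structure that Lemma~\ref{lem:dp-for-paths} and Lemma~\ref{lem:dp-for-cycles} rely on: one must check that ``combinatorial equivalence'' is strong enough that the axis-parallel, integer-coordinate faces of $\L$ map to axis-parallel, integer-coordinate polygons after the (possibly non-uniform) stretch, and that the cuts used inside the proofs of those two lemmas (slicing a path-face into two path-faces, and a cycle-face into a path-face plus a smaller cycle-face) remain realizable as partitions into at most $k$ polygons with at most $k$ edges in the preprocessed instance. Once that correspondence is nailed down, the rest is a direct composition of the already-established lemmas together with the running-time proposition.
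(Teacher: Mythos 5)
Your reconstruction of the truncated statement is the intended one, and your proof follows exactly the paper's route: reduce to the case where $\R$ is the optimal solution, apply Lemma~\ref{lem:maze-properties} for the initial partition losing at most $\eps\cdot w(\R)$, invoke Lemmas~\ref{lem:dp-for-paths} and~\ref{lem:dp-for-cycles} for the path- and cycle-faces, handle the preprocessing by the correspondence between polygons in the original and stretched instances, and conclude polynomiality from Proposition~\ref{prop:GEO-DP-running-time}. This matches the paper's own proof in both structure and detail, so no further comparison is needed.
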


Using standard shifting technique arguments we obtain the following corollary.

\begin{cor}\label{cor:PTAS-roughly-same-size}
Let $\eps>0$ and $\delta>0$ be constants.
The algorithm GEO-DP parametrized by \mbox{$k=(\frac{1}{\eps\cdot \delta})^{O(1)}$} is a polynomial time $(1+\eps)$-approximation algorithm for instances of MWISR where for all rectangles $R_i, R_{i'}$ it holds that
$\max\{h_i,g_i\}\le (1/\delta) \cdot \max\{h_{i'},g_{i'}\}$.
\end{cor}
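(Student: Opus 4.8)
The plan is to reduce the ``roughly--equal--size'' case to the $\delta$-large case of Theorem~\ref{thm:ptas-for-large} by a standard grid--shifting argument carried out inside GEO-DP's recursion. As in Section~\ref{sec:PTAS-large-rectangles}, it suffices to work with the original input coordinates in $\{0,\dots,N\}$, where the size hypothesis holds: GEO-DP's preprocessing is a combinatorial equivalence (a piecewise--linear stretching of the plane), so any partition of $[0,N]^2$ into polygons of bounded complexity exhibited below corresponds to a partition of the preprocessed square of the same complexity, which GEO-DP considers; recall also that it is enough to assume $\R$ is itself feasible. Let $D:=\max_i\max\{h_i,g_i\}$; applying the hypothesis with $R_{i'}$ a rectangle attaining this maximum gives $\max\{h_i,g_i\}\ge\delta D$ for every $R_i\in\R$. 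Set $S:=\lceil D/\eps\rceil$ and, for integers $a,b$, let $\Gamma_{a,b}$ be the grid formed by the vertical lines $\{x\equiv a\pmod{S}\}$ and the horizontal lines $\{y\equiv b\pmod{S}\}$ restricted to $[0,N]^2$, together with the boundary of the input square.

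First I would run the shifting argument. For a uniformly random offset $a\in\{0,\dots,S-1\}$, a fixed rectangle $R_i$ is cut by some vertical line of $\Gamma_{a,\cdot}$ with probability at most $g_i/S\le D/S\le\eps$, and symmetrically at most $\eps$ for the horizontal lines of $\Gamma_{\cdot,b}$; a union bound over the two directions and linearity of expectation produce offsets $a^{*},b^{*}$ such that the rectangles of $\R$ cut by $\Gamma_{a^{*},b^{*}}$ have total weight at most $2\eps\cdot w(\R)$ (there are only polynomially many combinatorially distinct offsets, but for the approximation guarantee mere existence is enough). Discarding these rectangles yields $\R'\subseteq\R$ with $w(\R')\ge(1-2\eps)w(\R)$, and every rectangle of $\R'$ is contained in a unique cell of $\Gamma_{a^{*},b^{*}}$. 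Each cell is an axis--parallel rectangle both of whose side lengths are at most $S$; placing it inside a square of side $S$, every rectangle contained in the cell has longer edge at least $\delta D\ge\tfrac{\delta\eps}{2}\,S$, i.e.\ it is $\delta'$-large with respect to that square for $\delta':=\delta\eps/2$. (There is no issue with thin boundary cells: a cell near the border of $[0,N]^2$ still has both side lengths at most $S$, so its rectangles remain $\delta'$-large relative to it.)

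Next I would describe the partition that GEO-DP follows. Starting from $[0,N]^2$, bisect repeatedly along lines of $\Gamma_{a^{*},b^{*}}$: each bisection splits an axis--parallel rectangle into two axis--parallel rectangles, a partition GEO-DP considers for any $k\ge4$, and after $O(\log n)$ levels every piece is a single grid cell of $\Gamma_{a^{*},b^{*}}$. Since only lines of $\Gamma_{a^{*},b^{*}}$ are used, the total weight of rectangles lost to all bisection steps together is at most $2\eps\cdot w(\R)$. Inside each grid cell $Q$, follow the partition produced by the proof of Theorem~\ref{thm:ptas-for-large} for the $\delta'$-large instance $\R_Q:=\{R\in\R':R\subseteq Q\}$ (after the routine rescaling of $Q$ to integral coordinates with $\delta'\cdot(\text{side})$ integral). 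Choosing $k=(1/(\eps\delta))^{O(1)}$ large enough that $k\ge(1/\eps)(1/\delta')^{O(1)}=(1/(\eps\delta))^{O(1)}$, Theorem~\ref{thm:ptas-for-large} guarantees that, starting from the DP-cell for $Q$, GEO-DP computes a solution of weight at least $w(\R_Q)/(1+\eps)$. As the grid cells are interior--disjoint, GEO-DP's output has weight at least $\sum_Q w(\R_Q)/(1+\eps)=w(\R')/(1+\eps)\ge(1-2\eps)w(\R)/(1+\eps)$; rescaling $\eps$ by a constant factor at the start turns this into a $(1+\eps)$-approximation, and the running time is $n^{O(k^2)}$ with $k$ a constant, hence polynomial.

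The step I expect to be the main obstacle is invoking Theorem~\ref{thm:ptas-for-large} \emph{locally}: one must check that its proof really shows that, for an arbitrary axis--parallel square region and an arbitrary sub--instance of rectangles that are $\delta'$-large with respect to that region, GEO-DP (with the stated $k$) solves the corresponding DP-cell to within a factor $1+\eps$, and that the rescaling needed to make the coordinates inside $Q$ integral with $\delta'\cdot(\text{side})$ integral does not disturb its hypotheses --- this is essentially the same bookkeeping already carried out at the top level in Section~\ref{sec:PTAS-large-rectangles}. A minor point is that GEO-DP's bounded--fan--out partition search does reach the recursive bisection into the possibly super--constantly many grid cells; this is unproblematic because each bisection has fan--out two and GEO-DP is a memoized dynamic program, so the $O(\log n)$ recursion depth costs nothing.
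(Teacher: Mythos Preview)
Your proposal is correct and follows essentially the same route as the paper's proof: a grid--shifting argument with cells of side roughly $D/\eps$ (the paper writes this as $(1/\delta)K/\eps$ with $K=\min_i\max\{h_i,g_i\}$, you as $\lceil D/\eps\rceil$ with $D=\max_i\max\{h_i,g_i\}$), followed by observing that each cell is an $(\eps\delta)$-large subinstance to which Theorem~\ref{thm:ptas-for-large} applies, with GEO-DP reaching the cells by recursively splitting in a quad-tree/bisection fashion. The concerns you flag at the end (local applicability of Theorem~\ref{thm:ptas-for-large} and the fan-out issue) are exactly the points the paper also glosses over, and your treatment is at least as careful as the paper's.
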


\newpage{}

\bibliographystyle{abbrv}
\bibliography{MWISR-citations}
\newpage{}


\appendix



\section{Proof of Lemma \ref{lem:good-cut-suffices}}
First, we prove a technical lemma which
shows a polynomial upper bound on the number of connected components, and the complexity of each connected component, of an intersection of a collection of polygons. The lemma will allow us later to bound the complexity of the generated subproblems during the recursion of the DP. 

\begin{lem}
\label{lem:polygons-complexity}
Let $P_1, \ldots, P_m$ be a collection of axis-parallel polygons, where each $P_i$ has at most $\bar{\ell}$ edges. Then the intersection $\bigcap_{i=1,\ldots,m} P_i$ consists of at most $(m\bar{\ell})^2$ connected components, and each connected component $P$ is a polygon with at most $(m\bar{\ell})^2$ edges.
\end{lem}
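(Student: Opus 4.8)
The plan is to bound the two quantities (number of connected components and edges per component) by a simple induction on $m$, tracking how each successive intersection can fragment and complicate the region. Let me set $Q_j := \bigcap_{i=1}^{j} P_i$ for $j = 1, \ldots, m$, so $Q_1 = P_1$ has at most $\bar\ell \le (1\cdot\bar\ell)^2$ edges (over one component), and $Q_m$ is the region of interest. The key structural observation is that the boundary of $Q_j$ consists only of pieces of the boundaries $\partial P_1, \ldots, \partial P_j$; in particular, the total number of \emph{maximal boundary segments} (axis-parallel edge fragments) of $Q_j$, summed over all its connected components, is at most $\sum_{i=1}^j (\text{edges of }P_i) \le j\bar\ell$, since each edge of each $P_i$ contributes at most one maximal fragment to $\partial Q_j$ (intersecting with more polygons only shortens a given edge, it never splits one edge of $P_i$ into two separate boundary pieces of $Q_j$, because $Q_j$ lies on one side of that edge). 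Actually, I should be slightly careful: when we take $Q_{j} = Q_{j-1} \cap P_j$, an edge of $P_j$ may be split by the boundary of $Q_{j-1}$, but each such split corresponds to entering/leaving $Q_{j-1}$, which is controlled by the edges of $Q_{j-1}$; the cleanest bookkeeping is simply to observe that the total edge count of $Q_m$ (counted with multiplicity over components) is $O((m\bar\ell)^{?})$ and then argue each individual component cannot have more edges than the total.

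Concretely, I would argue as follows. First bound the total number of edges of $Q_m$ over all components: going from $Q_{j-1}$ to $Q_j = Q_{j-1}\cap P_j$, the new boundary is contained in $\partial Q_{j-1} \cup \partial P_j$, and the number of maximal edge-fragments satisfies $E(Q_j) \le E(Q_{j-1}) + (\text{edges of }P_j) + 2\cdot(\text{number of crossing points})$; the crossings between $\partial Q_{j-1}$ and $\partial P_j$ number at most $O(E(Q_{j-1})\cdot \bar\ell)$ since the polygons are axis-parallel (each edge of one crosses each edge of the other at most once). This gives a recursion $E(Q_j) = O(E(Q_{j-1})\cdot\bar\ell)$, hence $E(Q_m) = \bar\ell^{O(m)}$ — which is too weak. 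So the naive induction does not give the polynomial bound claimed, and that is the main obstacle: one must avoid the multiplicative blowup.

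The right approach, and the one I would actually carry out, is a direct (non-inductive) combinatorial count. Consider the arrangement $\mathcal{A}$ of all the $O(m\bar\ell)$ lines that contain the edges of $P_1,\ldots,P_m$. This arrangement of $N = O(m\bar\ell)$ axis-parallel lines partitions the plane into a grid-like arrangement with $O(N^2) = O((m\bar\ell)^2)$ cells (faces), $O(N^2)$ vertices, and $O(N^2)$ edges. Every connected component of $\bigcap_i P_i$ is a union of cells of $\mathcal{A}$ — but more importantly, since each $P_i$ is a union of cells of $\mathcal{A}$ and is "closed" in the combinatorial sense, the intersection is too, and its boundary runs along edges of $\mathcal{A}$. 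The number of connected components of any region that is a union of cells of $\mathcal{A}$ is at most the number of cells, $O((m\bar\ell)^2)$, giving the first bound. For the second bound: the total number of edges of $\bigcap_i P_i$ (summed over components) is at most the number of edges of the arrangement $\mathcal{A}$ that lie on the region's boundary, which is at most the total number of arrangement edges $O((m\bar\ell)^2)$; since an individual component's edge count is bounded by the total edge count, each component is a polygon with $O((m\bar\ell)^2)$ edges. (One can of course merge collinear adjacent arrangement-edges into genuine polygon edges, which only decreases the count.) I'd then just need to check the constants match $(m\bar\ell)^2$ as stated, tightening the $O(\cdot)$ if the paper wants the precise bound; the arrangement of $N$ axis-parallel lines has at most $\binom{N}{2} \le N^2/2$ vertices and correspondingly few edges and cells, and with $N \le m\bar\ell$ this yields the claimed $(m\bar\ell)^2$. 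The main obstacle, to reiterate, is resisting the tempting but lossy edge-by-edge induction and instead passing to the global line arrangement, where both bounds fall out immediately.
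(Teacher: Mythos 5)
Your final argument is correct, and it reaches the stated bounds by a genuinely different (though equally elementary) route from the paper. The paper does not pass to a line arrangement at all: it observes that the $m$ polygons have at most $m\bar{\ell}$ edges in total, so there are at most $(m\bar{\ell})^2$ pairs of crossing edges; every corner of a connected component of the intersection is such a crossing point, every component has at least four corners with distinct crossing pairs, and the number of edges of a component equals its number of vertices --- giving both bounds directly from the segment crossings of the original edges. Your version instead extends all edges to full axis-parallel lines and charges components to cells and boundary edges to arrangement edges of the resulting grid. Both are global quadratic counts; yours is slightly more wasteful (it counts intersections of the supporting lines even where the actual edges do not meet) but is arguably more robust, since it does not need the observation that a corner of the intersection must be a crossing of two input edges, and it additionally yields that the \emph{total} complexity over all components is $O((m\bar{\ell})^2)$, which is what Lemma~\ref{lem:families-Pj} actually uses. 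Your opening discussion of why the naive edge-by-edge induction blows up exponentially is a correct diagnosis but is not needed; and note that both your argument and the paper's implicitly assume the components of the intersection are genuine two-dimensional polygons (no degenerate lower-dimensional touching), which is fine in the context in which the lemma is applied. The constants work out as you indicate, since $m\bar{\ell}\ge 4$.
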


\begin{proof}
As the polygons $P_i$ have at most $m\bar{\ell}$ edges in total, and any two edges can cross at most once, we have at most $(m\bar{\ell})^2$ pairs of crossing edges. Each connected component of the intersection $\bigcap_{i=1,\ldots,m} P_i$ has at least four corners, and each corner corresponds to a different pair of crossing edges. Hence, there are at most $(m\bar{\ell})^2$ connected components. The number of edges of one connected component $P$ equals the number of vertices of $P$, which is also upper bounded by the number of crossing pairs $(m\bar{\ell})^2$. 
\end{proof}

The following lemma implicitly describes a family of transitions for
the DP-subproblems which we will use in the proof of Lemma \ref{lem:good-cut-suffices}.

\begin{lem}
\label{lem:families-Pj}Let $\alpha, \ell$ be values such that conditions of Lemma \ref{lem:good-cut-suffices} are satisfied, and let $\bar{\R}$ be a set 
of at most $n$ pairwise non-overlapping rectangles with integer coordinates in \{0,\ldots,2n-1\}. Let $j^*=\left\lceil \log_{3/2}n^{2}/\eps\right\rceil$.
Then for each $j \in \left\{ 0,1,\ldots, j^* \right\}$
there is a family of axis-parallel polygons $\P_{j}$ with integer coordinates such that:
\begin{enumerate}[a) ]
\item each polygon $P \in \P_j$ has at most $(j+1)^2 \cdot (\ell+4)^{2}$ edges,
\item $\P_0 = \{[0,2n-1]\times[0,2n-1]\}$,
\item the polygons in $\P_j$ are disjoint, and each polygon $P \in \P_{j-1}$ is a disjoint union of at most $(j+1)^2 (\ell+4)^2 $ polygons from $\P_{j}$,
\item each polygon $P \in \P_{j^{*}}$ contains at most one rectangle from $\bar{R}$,
\item for each set $\P_j$ we get $\sum_{P \in \P_j} w(P)\ge\left(1-\alpha\right)^{j}\cdot w(\bar{\R})$.
\end{enumerate}
\end{lem}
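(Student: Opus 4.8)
The plan is to prove Lemma~\ref{lem:families-Pj} by induction on $j$, building the family $\P_j$ from $\P_{j-1}$ by applying a balanced cheap cut inside each polygon of $\P_{j-1}$ and then intersecting everything with the already-accumulated cuts. I would set $\P_0 = \{[0,2n-1]\times[0,2n-1]\}$, which trivially satisfies (a), (b), (e), and (c) vacuously; property (d) will follow at the end from the weight-halving behavior. For the inductive step, fix $P \in \P_{j-1}$ and let $\bar{\R}_P \subseteq \bar{\R}$ be the rectangles contained in $P$. If some rectangle $R \in \bar{\R}_P$ has $w(R) \ge \tfrac13 w(\bar{\R}_P)$, or $w(\bar{\R}_P)$ is already small enough that $P$ contains at most one rectangle, we simply keep $P$ unchanged in $\P_j$ (or, to be uniform, we still ``cut'' it trivially). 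Otherwise, by the hypothesis of Lemma~\ref{lem:good-cut-suffices}, there is a balanced $\alpha$-cheap $\ell$-cut: a polygon $C_P$ with at most $\ell$ edges such that the rectangles meeting $\partial C_P$ have weight $\le \alpha\, w(\bar{\R}_P)$, and each of the interior and exterior of $C_P$ contains rectangles of weight $\le \tfrac23 w(\bar{\R}_P)$. The cut splits $P$ into the two regions $P \cap C_P$ and $P \setminus C_P = P \cap (\mathbb{R}^2 \setminus C_P)$.

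The subtle point is bounding the edge complexity, and here I would invoke Lemma~\ref{lem:polygons-complexity}. Think of each polygon in $\P_j$ as an intersection of the input square with $j$ cut-polygons $C^{(1)}, \ldots, C^{(j)}$ (one per level), each of which has at most $\ell$ edges --- except we must add $4$ to account for the boundary of the input square possibly becoming an outer edge of a region, so we treat each cutting shape as having at most $\ell+4$ edges, and we include the input square itself (also $\le \ell+4$ edges). So each $P \in \P_j$ is a connected component of an intersection of at most $j+1$ polygons, each with at most $\ell+4$ edges. By Lemma~\ref{lem:polygons-complexity} with $m = j+1$ and $\bar\ell = \ell+4$, every such connected component is a polygon with at most $((j+1)(\ell+4))^2 = (j+1)^2(\ell+4)^2$ edges, giving (a); and the same lemma bounds by $(j+1)^2(\ell+4)^2$ the number of connected components into which a fixed $P \in \P_{j-1}$ is split when we additionally intersect with the single new cut $C_P$ (viewing $P$ itself as an intersection of $\le j$ polygons and adding one more), which gives (c). Disjointness of the polygons in $\P_j$ is immediate since at each level the two pieces $P\cap C_P$ and $P \setminus C_P$ partition $P$ up to boundary, and rectangles are open sets so boundaries carry no rectangles.

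For (e), at each level the only rectangles ``lost'' from a given $P \in \P_{j-1}$ are those intersecting $\partial C_P$, of total weight at most $\alpha\, w(\bar{\R}_P)$; every other rectangle of $\bar{\R}_P$ lies entirely in one of the two new pieces. Summing over all $P \in \P_{j-1}$, we get $\sum_{P' \in \P_j} w(\bar{\R}_{P'}) \ge (1-\alpha) \sum_{P \in \P_{j-1}} w(\bar{\R}_P) \ge (1-\alpha)^j w(\bar{\R})$ by induction, where I am identifying $w(P)$ with $w(\bar{\R}_P)$, the weight of rectangles contained in $P$. For (d), observe that whenever we genuinely apply a cut to $P$, the maximum weight of rectangles contained in either child piece is $\le \tfrac23 w(\bar{\R}_P)$; and whenever we do not apply a cut it is because $P$ already contains at most one rectangle (the case where a single rectangle has weight $\ge \tfrac13 w(\bar{\R}_P)$ can be folded in by noting that after isolating it, or by the problem setup, the relevant pieces have at most one rectangle). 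Since $1 \le w(R) \le n/\eps$ for every rectangle and there are at most $n$ of them, $w(\bar{\R}) \le n^2/\eps$; after $j^* = \lceil \log_{3/2}(n^2/\eps)\rceil$ halvings any piece with more than one rectangle would have weight $< 1$, a contradiction, so each $P \in \P_{j^*}$ contains at most one rectangle, giving (d).

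The main obstacle I anticipate is making the bookkeeping in the ``no good cut / heavy rectangle'' branch fully rigorous while keeping the complexity bounds uniform: one must be careful that at every level, including levels where some polygon is passed through unchanged, the representation of each $P \in \P_j$ as an intersection of at most $j+1$ bounded-complexity polygons is maintained, so that Lemma~\ref{lem:polygons-complexity} applies cleanly. A secondary technical nuisance is the ``$+4$'' accounting for the input-square boundary: one has to verify that a balanced $\alpha$-cheap $\ell$-cut inside a subpolygon $P$, together with $\partial P$, never forces a child region to exceed $\ell+4$ edges relative to its parent's representation --- this is exactly what the ``add $4$'' remark in the proof sketch of Lemma~\ref{lem:good-cut-suffices} is handling, and I would spell it out by always including the four edges of the original input square as one of the $j+1$ polygons in every intersection representation.
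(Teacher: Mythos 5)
Your overall architecture matches the paper's: the same recursive construction of $\P_j$ from $\P_{j-1}$ via balanced cheap cuts, the same use of Lemma~\ref{lem:polygons-complexity} with $m=j+1$ and $\bar\ell=\ell+4$ for properties (a) and (c), and the same induction for (e). However, there is a genuine gap in your treatment of the heavy-rectangle branch, and it breaks property (d). You propose to ``simply keep $P$ unchanged in $\P_j$'' whenever some $R\in\bar\R_P$ has $w(R)\ge\frac13 w(\bar\R_P)$, and later wave at this case with ``can be folded in by noting that after isolating it, or by the problem setup, the relevant pieces have at most one rectangle.'' That is not a proof, and the construction as stated fails: consider a polygon $P$ containing one rectangle of weight $n/\eps$ together with many rectangles of weight $1$. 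The heavy rectangle always accounts for at least a $\frac13$-fraction of $w(\bar\R_P)$, so under your rule $P$ is passed through unchanged at every level, its weight never decreases, and at level $j^*$ it still contains many rectangles, contradicting (d). The weight-halving argument you invoke for (d) only applies to polygons that are actually cut.

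The fix is what the paper does: in the heavy-rectangle case, use the boundary of the heavy rectangle $R_0$ itself as the cut, adding $R_0$ and the connected components of $P\setminus R_0$ to $\P_j$. Since the rectangles of $\bar\R$ are pairwise disjoint open sets, no rectangle meets $\partial R_0$, so nothing is lost for property (e); the piece $R_0$ contains exactly one rectangle; and each component of $P\setminus R_0$ carries weight at most $w(\bar\R_P)-w(R_0)\le\frac23 w(\bar\R_P)$, so the geometric decrease needed for (d) is preserved. Since $\ell\ge 4$, the four edges of $R_0$ fit within the $\ell+4$ budget, so your intersection-representation bookkeeping (and hence Lemma~\ref{lem:polygons-complexity}) still applies verbatim to this branch. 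With that repair, the rest of your argument --- including the careful ``$+4$'' accounting for the input-square boundary and the observation that $w(\bar\R)\le n^2/\eps$ forces at most one rectangle per cell after $j^*$ levels --- goes through exactly as in the paper.
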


\begin{proof}
We set $\P_0 = \{[0,2n-1]\times[0,2n-1]\}$, i.e., $\P_0$ consists of one rectangle which contains all the rectangles from $\bar{\R}$. We then construct the sets $\P_1,\ldots,\P_{j^*}$ one by one, as follows. To construct $\P_j$, we consider each polygon $P \in \P_{j-1}$, and we add to $\P_j$ the following set of polygons, which together give a disjoint union of $P$.
If $P$ contains at most one rectangle $R \in \bar{\R}$, we add $P$ to the set $\P_{j}$.
Otherwise, if there is a rectangle $R_0  \in \bar{\R}$, $R_0 \subseteq P$ with $w(R_0)\ge\frac{1}{3}\cdot \sum_{R \in \bar{\R}, R \subseteq P} w(R)$, we add to $P_{j}$ the following
polygons: $P \cap R_0 = R_0$, and the connected components of $P \cap \overline{R_0}$, i.e. the connected components of $P\setminus R_0$.
Finally, consider the case that no rectangle $R_0  \in \bar{\R}$ with $R_0 \subseteq P$ has weight $w(R_0)\ge\frac{1}{3}\cdot \sum_{R \in \bar{\R}, R \subseteq P} w(R)$.
Then there exists a $\alpha$-cheap $\ell$-cut for the set of rectangles from $\bar{R}$ which are contained in $P$. Let $P_c$ be the polygon defining this cut, and let $\bar{P_c}$ be its complement
intersected with the input square. We can assume that all corners of $P_c$ have integer coordinates. We add to $\P_{j}$ each connected component of $P\cap P_c$ and $P\cap\bar{P_c}$. Notice that $P_c$ has at most $\ell$ edges, and so $\bar{P_c}$ has at most $\ell+4$ edges.

We now have to check that all required properties are satisfied. 

$a)$ The only polygon in $\P_0$ has $4$ edges. 
Each polygon $P \in \P_j$ for $j \ge 1$, is a connected component of an intersection of at most $j + 1$ polygons with axis-parallel edges and integer corner coordinates, and at most $\ell+4$ edges each. $P$ has axis-parallel edges and integer coordinates, and from Lemma \ref{lem:polygons-complexity} it has at most $(j + 1)^2 (\ell+4)^2$ edges.

$b)$ Defined at the beginning of the proof.

$c)$ From the construction of the sets $\P_j$ it can be easily observed that the polygons in $\P_j$ are disjoint, and each polygon $P \in \P_{j-1}$ is a union of polygons from $\P_{j}$. We now have to upper bound the number of polygons from $\P_{j}$ which can be contained in one polygon $P \in \P_{j-1}$. Polygon $P$ is a connected component of an intersection of at most $j$ polygons, each with at most $\ell+4$ edges. By construction each polygon $P' \in \P_{j}$ is contained in some polygon $P \in \P_{j-1}$. Each polygon $P' \in \P_{j}$ contained in $P$ is a connected component of an 
intersection of $P$ with a polygon $P_c$ with at most $\ell$ edges, or with a polygon $\bar{P_c}$ with at most $\ell+4$ edges. Therefore $P'$ is a connected component of an intersection of at most $j+1$ polygons, each with at most $\ell+4$ edges. From Lemma \ref{lem:polygons-complexity} the number of such components is upper bounded by $(j + 1)^2 (\ell+4)^2$.

$d)$ For a polygon $P$ let $w(P):=\sum_{R \in \bar{\R}: R \subseteq P}w(R)$ denote the weight of all rectangles from $\bar{\R}$ contained in $P$. We will show by induction that if $P \in P_j$ contains more than one rectangle from $\bar{R}$, then $w(P) \le \frac{2}{3}^j w(\bar{\R}) \le \frac{2}{3}^j n^2/\eps$. That value is at most $1$ for $P \in \P_{j^*}$, and as each rectangle from $\bar{\R}$ has weight at least $1$, $P$ cannot contain more than one rectangle.

For the only polygon $P \in \P_0$ we have $w(P) = w(\bar{R}) \le n^2/\eps$. We assume by induction that the property holds for $\P_{j-1}$, and we will show that it holds also for $\P_j$. Let $P \in \P_j$ be contained in a polygon $P_0 \in \P_{j-1}$, where $w(P_0) \le \frac{2}{3}^{j-1} w(\bar{\R}) \le \frac{2}{3}^{j-1} n^2/\eps$. If $P$ contains more than one rectangle from $\bar{\R}$, then either $P \subseteq P_0 \setminus R$ for a heavy rectangle $R$, or $P$ is obtained from $P_0$ by a balanced cut. In both cases we get $w(P) \le \frac{2}{3}w(P_0)$ and we are done.

$e)$ The property holds for $\P_0$, as $\sum_{P \in \P_0} w(P) = w(\bar{\R})$. We will give a proof by induction. Assume that the property holds for $\P_{j-1}$, i.e., $\sum_{P \in \P_{j-1}} w(P)\ge\left(1-\alpha\right)^{j-1}\cdot w(\bar{\R})$. The rectangles which are intersected by $\P_j$, but not by $\P_{j-1}$, must be intersected by the newly introduced polygons $P_c$, which intersect polygons $P \in \P_{j-1}$. As each polygon $P_c$  is a $\alpha$-cheap $\ell$-cut for the set of rectangles contained in the corresponding polygon $P$, we get $\sum_{P' \in \P_j: P' \subseteq P} w(P') \ge (1-\alpha) w(P)$ for each $P \in P_{j-1}$, and so $\sum_{P \in \P_{j}} w(P)\ge\left(1-\alpha\right)^{j}\cdot w(\bar{\R})$.
\end{proof}

With this preparation we are able to prove Lemma~\ref{lem:good-cut-suffices}.

\begin{proof}[Proof of Lemma~\ref{lem:good-cut-suffices}]
Suppose we parametrize GEO-DP by $k:=\left(\left\lceil \log_{3/2}n^{2}/\eps\right\rceil+1\right)^2 \cdot(\ell+4)^{2}$.
Denote by $\P_{j}$, with $j \in \left\{0,1,\ldots,j^*\right\}$ for $j^*=\left\lceil \log_{3/2}n^{2}/\eps\right\rceil$,
the families of axis-parallel polygons with integer coordinates as given in Lemma~\ref{lem:families-Pj}.

From Lemma~\ref{lem:families-Pj}a) any polygon $P\in\P_{j}$ has at most $k$ edges, and so $\P_j \subseteq \P$ and GEO-DP has a DP-cell for $P$.
If $P \in \P_{j^*}$, from Lemma~\ref{lem:families-Pj}d) we know that $P$ contains at most one rectangle, and  so $w(sol(P))=w(P)$ where for each polygon $P$ we denote by $w(P)$ the total weight of all rectangles in $\R$ which are contained in $P$. From Lemma \ref{lem:families-Pj}c) each polygon $P \in \P_j$ is a union of at most $k$ polygons $P_1,\ldots, P_m \in \P_{j+1}$. Therefore GEO-DP tries the subdivision of $P$ into these components and we get that $w(sol(P)) \ge \sum_{i=1}^m w(sol(P_i))$, which for the input polygon $P_0 \in \P_0$ (see Lemma~\ref{lem:families-Pj}b) gives 
$$w(sol(P_0)) \ge \sum_{P\in\P_{j^*}}w(sol(P)) =  \sum_{P\in\P_{j^*}}w(P) \ge\left(1-\alpha\right)^{j^*}\cdot w(\bar{\R})\enspace,$$
where the last inequality comes from Lemma~\ref{lem:families-Pj}e).
Therefore, the overall approximation ratio of GEO-DP is $\left(1-\alpha\right)^{-j^{*}}=\left(\frac{1}{1-\alpha}\right)^{\left\lceil \log_{3/2}n^{2}/\eps\right\rceil } = (1+\alpha)^{O(\log(n/\eps))}$ 
when parametrized by $k=\left(\left\lceil \log_{3/2}n^{2}/\eps\right\rceil+1\right)^2 \cdot(\ell+4)^{2}$. 
\end{proof}


\section{Proofs from Section \ref{sec:QPTAS}}\label{apx:QPTAS}

\begin{defn}
Let $\R=\{R_{1},...,R_{n}\}$ and $\bar{\R}=\{\bar{R}_{1},...,\bar{R}_{n}\}$ be sets of rectangles s.t. $R_i$ and $\bar{R}_i$ have coordinates 
$x_{i}^{(1)},y_{i}^{(1)},x_{i}^{(2)},y_{i}^{(2)}$ and $\bar{x}{}_{i}^{(1)},\bar{y}{}_{i}^{(1)},\bar{x}{}_{i}^{(2)},\bar{y}{}_{i}^{(2)}$, 
respectively. We say that $\R$ and $\bar{\R}$ are \emph{combinatorially
equivalent} (or \emph{equivalent }for short) if we
have that $w(R_{i})=w(\bar{R}_{i})$, $x_{i}^{(t)}\le x_{i'}^{(t')}\Leftrightarrow\bar{x}_{i}^{(t)}\le\bar{x}_{i'}^{(t')}$,
$x_{i}^{(t)}<x_{i'}^{(t')}\Leftrightarrow\bar{x}_{i}^{(t)}<\bar{x}_{i'}^{(t')}$,
$y_{i}^{(t)}\le y_{i'}^{(t')}\Leftrightarrow\bar{y}_{i}^{(t)}\le\bar{y}_{i'}^{(t')}$,
and $y_{i}^{(t)}<y_{i'}^{(t')}\Leftrightarrow\bar{y}_{i}^{(t)}<\bar{y}_{i'}^{(t')}$,
for all $t,t'\in\{1,2\}$ and all $i,i'\in\{1,...,n\}$.
\end{defn}

\begin{proof}[Proof of Lemma~\ref{lem:weighted-stretching}]
W.l.o.g. we can assume that $x_{i}^{(1)},y_{i}^{(1)},x_{i}^{(2)},y_{i}^{(2)} \in \{0,\ldots,2n-1\}$ for each $R_i \in \R$, where $n = |\R|$. The set of rectangles $\bar{\R}$ will consist of rectangles $\bar{R}_{1},...,\bar{R}_{n}$, where for each $i = 1,\ldots,n$ we have $w(\bar{R}_{i}) = w(R_i)$. We set the coordinates $\bar{x}{}_{i}^{(1)},\bar{y}{}_{i}^{(1)},\bar{x}{}_{i}^{(2)},\bar{y}{}_{i}^{(2)}$ of $\bar{R}_{i}$ as follows.

For $j = 1,\ldots,2n$ we define $\R_x(j):=\{R_i \in \R: x_{i}^{(1)}<j \}$ and $\R_y(j):=\{R_i \in \R: y_{i}^{(1)} < j\}$. For $i \in\{1,...,n\}$ and $t\in\{1,2\}$ we set 
$$\bar{x}{}_{i}^{(j)}:=x_i^{(j)}+\left\lceil w(\R_x(x_i^{(j)})) \cdot \frac{2|\R|}{w(\R)} \right\rceil, \bar{y}{}_{i}^{(j)}:=y_i^{(j)}+\left\lceil w(\R_y(y_i^{(j)}))\cdot \frac{2|\R|}{w(\R)} \right\rceil \enspace.$$

As the weights $w(\R_x(j))$ and $w(\R_y(j))$ are monotonically non-decreasing with $j$, the sets $\R$ and $\bar{\R}$ are equivalent, and in particular for any $t,t'\in\{1,2\}$ and $i,i'\in\{1,...,n\}$ we have $x_{i}^{(t)}<x_{i'}^{(t')}\Leftrightarrow\bar{x}_{i}^{(t)}<\bar{x}_{i'}^{(t')}$.

As for any $j$ we have $w(\R_x(j)), w(\R_y(j)) \in [0,w(\R)]$, the rectangles from $\bar{\R}$ have integer coordinates in $\{0,...,4 \cdot |\R| \}$. We now have to show that $\bar{\R}$ is well-distributed. 

W.l.o.g. it is enough to show that for any $\gamma > 0$ and any vertical stripe $S$ of the square $[0,4|\R|]\times[0,4|\R|]$ of width $\gamma \cdot 4|\R|$  all rectangles from the set $\bar{\R}$ contained in $S$
have a total weight of at most $2 \gamma \cdot w(\bar{\R})$.

Let $\bar{\R}(S)$ be the set of rectangles from $\bar{\R}$ contained in $S$, and assume that  $\bar{\R}(S) \neq \emptyset$. Let $\bar{R}_{\ell}$ and $\bar{R}_r$ be rectangles from $\bar{\R}(S)$ minimizing $\bar{x}{}_{i}^{(1)}$ and maximizing $\bar{x}{}_{i}^{(2)}$, respectively. We have $\bar{x}{}_{r}^{(2)} - \bar{x}{}_{\ell}^{(1)} \le 4 \gamma |\R|$.  As $\bar{x}{}_{r}^{(2)} > \bar{x}{}_{\ell}^{(1)}$, we have $x_{r}^{(2)} > x_{\ell}^{(1)}$, and:
$$\bar{x}{}_{r}^{(2)} - \bar{x}{}_{\ell}^{(1)}
= (x_{r}^{(2)} - x_{\ell}^{(1)})+\left( \left\lceil w(\R_x(x_r^{(2)})) \cdot \frac{2|\R|}{w(\R)} \right\rceil - \left\lceil w(\R_x(x_{\ell}^{(1)})) \cdot \frac{2|\R|}{w(\R)} \right\rceil \right)$$
$$\ge \left( w(\R_x(x_r^{(2)})) - w(\R_x(x_{\ell}^{(1)})) \right) \cdot \frac{2|\R|}{w(\R)} 
\ge w(\bar{\R}(S)) \cdot \frac{2|\R|}{w(\R)} \enspace,$$
as $\bar{\R}(S) \subseteq \R_x(x_r^{(2)}) \setminus \R_x(x_{\ell}^{(1)})$.
We get $4 \gamma |\R| \ge w(\bar{\R}(S)) \cdot \frac{2|\R|}{w(\R)}$, which gives us $w(\bar{\R}(S)) \le 2 \gamma w(\R) = 2 \gamma w(\bar{\R})$. The set of rectangles $\bar{\R}$ is well-distributed.
\end{proof}

\begin{proof}[Proof of Lemma~\ref{lem:rectangle-faces}]
There are $(1/\delta)^2-1$ vertical grid lines which can cut rectangles from the set $\R$. Each of the grid lines has length $N$, so it cuts less than $(1/\delta)^2$ large vertical rectangles. An analogous condition holds for horizontal grid lines and large horizontal rectangles, giving an upper bound of $2(1/\delta)^4$ on the number of rectangle faces.
\end{proof}

\begin{proof}[Proof of Lemma~\ref{lem:G-some-properites}]
Take the embedding of $G(\L)$ which is induced by the lines $\L$. 
By construction of the set  $\L$, the lines in  $\L$ do not intersect properly. Thus, due to the definition of $G(\L)$ this yields a planar embedding of $G(\L)$.

Now we bound the number of vertices and edges of $G(\L)$. 
Each vertex of $G(\L)$ is an endpoint of a line from $\L$. From Lemma \ref{lem:rectangle-faces} there are at most $2(1/\delta)^4$ rectangle faces, which yield at most $8(1/\delta)^4$ vertices of $G(\L)$. All remaining vertices are endpoints of lines contained in single grid cells. As each grid cell $Q$ can be intersected by at most $4$ rectangle faces, and by at most two lines of $\L$ corresponding to rectangles crossing $Q$, that gives at most $12$ new vertices per each grid cell. As the number of grid cells is $(1/\delta)^4$, we get $|V| \le 20(1/\delta)^4$.

As $G(\L)$ is planar, and all edges of $G(\L)$ are horizontal or vertical, the degree of each vertex is at most $4$ and we get $|E| \le 40(1/\delta)^4$.
\end{proof}

\begin{proof}[Proof of Lemma~\ref{lem:rec-intersect-constant}]
The only lines from $\L$ which intersect rectangles from $\R$ are the lines which lie on the boundary of the grid cells, as all other lines lie on the boundaries of some rectangles from $\R$, and the rectangles in $\R$ are pairwise non-overlapping. The only vertices of $G(\L)$ which can lie in the interior of any rectangle from $\R$ are the corners of the grid cells, as all remaining vertices of $G(\L)$ lie on the boundaries of rectangles from $\R$ (either a rectangle generating a rectangle face, or a rectangle crossing a grid cell). Therefore, if a rectangle $R \in \R$ intersects at most $m$ grid cell boundaries, it is intersected by at most $m$ edges of $G(\L)$. We instantly get that a rectangle from $\R_S$ is intersected by at most $4$ edges of $G(\L)$.

Let $R \in \R_L$ be a rectangle contained in a single row or column of grid cells. Let $Q$ and $Q'$ be the extremal grid cells intersected by $R$ (i.e., such that $R$ intersects $Q$ and $Q'$, and the shorter edges of $R$ are contained in $Q$ and $Q'$). We will show that $R$ can be intersected by edges of $G(\L)$ only at the boundaries of $Q$ and $Q'$, i.e., $R$ is intersected by at most $2$ edges of $G(\L)$. Consider a grid cell boundary $e = Q_1 \cap Q_2$ for some grid cells $Q_1$ and $Q_2$, where $R$ crosses $Q_1$ and $Q_2$. Then $R \in \R_{Q_1}, \R_{Q_2}$, and the lines added to $\L$ while considering the grid cells $Q_1$ and $Q_2$ do not intersect $R$.

Let us consider the last case. Let $R \in \R_L$ be a rectangle which is not contained in a single row or column of grid cells. Then $R$ is contained in a rectangle face, and it is not intersected by any edges of $G(\L)$.
\end{proof}

\begin{proof}[Proof of Lemma~\ref{lem:rec-intersect-faces}]
From the construction of the lines $\L$ we obtain the following propositions. 

\begin{prop}\label{prop:maze-grid-corners}
Let $p$ be a corner of a grid cell. If $p$ does not lie on any line $L \in \L$, then $p \in F$ for some rectangle face $F \in \F(\L)$.
\end{prop}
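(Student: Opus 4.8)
The plan is to read the statement off the construction of $\L$, the point being that a grid-cell corner can avoid $\L$ only if it was explicitly carved out as lying in the interior of a rectangle face. If $p$ lies on $\partial I$ then $p\in\L$, since the boundary of the input square is contained in $\L$; so I may assume $p$ is an interior grid corner and fix any grid cell $Q$ having $p$ as a corner.

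I then focus on the step of the construction that processes $Q$. There we add to $\L$ the boundary of $Q$ with two kinds of fragments removed: the fragments lying in the interior of a rectangle face, and (when $\R_Q\ne\emptyset$) the fragments lying strictly between the two extremal crossing edges $L_\ell$ and $L_r$ of the rectangles of $\R_Q$. Both removed sets are open subsets of $\partial Q$, so the corner $p$ is retained in the added set---and hence $p\in\L$---unless $p$ itself belongs to one of them. It therefore suffices to exclude the second case and then conclude from the first.

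To exclude $p$ being strictly between $L_\ell$ and $L_r$, I will argue that every $R\in\R_Q$ is contained in the grid column of $Q$ (in the vertical case; the horizontal case is symmetric, with rows in place of columns). Since small rectangles cross no grid cell, every $R\in\R_Q$ is large, and being vertical and crossing $Q$ it meets two opposite edges of $Q$. If $R$ reached past a bounding vertical grid line of $Q$, that grid line would cut $R$, so $R$ would be a large vertical rectangle cut by a vertical grid line and would generate its own rectangle face---contradicting the definition of $\R_Q$ as the rectangles that cross $Q$ but are \emph{not} contained in a rectangle face. Hence $L_\ell$ and $L_r$ lie in the closed $x$-range of $Q$, and the corner $p$, whose $x$-coordinate is an endpoint of that range, is never strictly between them; and if $\R_Q=\emptyset$ this type of removed fragment does not occur at all. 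Consequently, if $p\notin\L$ then $p$ must lie in the only remaining removed set, i.e.\ in the interior of some rectangle face $F\in\F(\L)$, and since a rectangle face is an open set this gives $p\in F$, as claimed.

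The step that needs the most care is this column-containment claim: one has to combine the definition of $\R_Q$ with the fact that a large vertical rectangle straddling a vertical grid line always produces a rectangle face, and keep the open/closed bookkeeping straight so that the corner $p$---which sits on the common boundary of up to four grid cells---is handled consistently for the single cell $Q$ fixed at the start. Everything else is immediate from how $\L$ is assembled.
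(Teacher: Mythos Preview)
Your argument is correct and is precisely the unpacking of what the paper leaves implicit: the paper gives no proof at all beyond the sentence ``From the construction of the lines $\L$ we obtain the following propositions,'' so your proof supplies the details the paper omits. The column-containment step you flag as delicate is exactly the observation the paper records informally right after defining rectangle faces (``if a large vertical rectangle is not contained in a rectangle face then it is contained in a single column of grid cells''), and your use of it to show the corner cannot lie strictly between $L_\ell$ and $L_r$ is the right way to close the case analysis.
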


\begin{prop}\label{prop:maze-grid-bends}
Let $e$ be a horizontal (resp. vertical) edge of a grid cell $Q$, and let $p \in e$ such that $p$ is not a corner of $Q$. If $p$ does not lie on a rectangle face, and $p$ does not lie on a line from $\L$, then $Q$ is crossed by a large vertical (resp. horizontal) rectangle.
\end{prop}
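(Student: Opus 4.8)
The plan is to unwind the construction of $\L$ within the single grid cell $Q$ and argue by cases on $\R_Q$, the set of rectangles that are not contained in a rectangle face and that cross $Q$. By symmetry it suffices to treat the case where $e$ is a horizontal edge of $Q$; the goal is then to exhibit a large \emph{vertical} rectangle crossing $Q$. Two observations set the stage. First, $\L$ contains the whole boundary of the input square $I$, so since $p$ lies on no line of $\L$, the edge $e$ is not contained in $\partial I$; hence $e$ separates $Q$ from a neighbouring grid cell. Second, the boundary of every rectangle face consists of rectangle edges, and all rectangle edges lie in $\L$; so $p$, which lies on no line of $\L$, lies in no rectangle face, be it open or closed. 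I will use these two facts freely.

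Recall that the construction handles $Q$ in one of three ways, according to $\R_Q$. If $\R_Q=\emptyset$, it adds to $\L$ all of $\partial Q$ except the fragments lying in the interiors of rectangle faces; since $p\in e\subseteq\partial Q$ and $p$ lies in no rectangle face, $p$ would then lie on a line of $\L$, contradicting the hypothesis. If $\R_Q$ consists of horizontal rectangles, the construction adds two horizontal lines $L_{\mathrm{bot}}\cap Q$ and $L_{\mathrm{top}}\cap Q$ together with all of $\partial Q$ except (i) the fragments lying strictly between $L_{\mathrm{bot}}$ and $L_{\mathrm{top}}$, which are sub-segments of the two \emph{vertical} edges of $Q$, and (ii) the fragments in the interiors of rectangle faces. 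Thus the entire horizontal edge $e$, minus possibly some parts inside rectangle faces, is added to $\L$, and again $p$ would lie on a line of $\L$, a contradiction. Consequently $\R_Q$ must consist of vertical rectangles.

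Now pick any $R\in\R_Q$. By definition of $\R_Q$, the rectangle $R$ crosses $Q$ and $R$ is vertical. Since every grid cell is a $\delta^{2}N\times\delta^{2}N$ square and no small rectangle crosses a grid cell (equivalently, crossing $Q$ forces one side of $R$ to exceed $\delta^{2}N$, so that $R\in\R_L$), the rectangle $R$ is large. Hence $Q$ is crossed by a large vertical rectangle, which is what we wanted. The case where $e$ is a vertical edge of $Q$ follows by interchanging the words horizontal and vertical and swapping the roles of the two coordinate axes throughout, and yields a large horizontal rectangle crossing $Q$.

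The step I expect to require the most care is the bookkeeping in the second case: one must verify that the fragments of $\partial Q$ that the construction withholds from $\L$ when $\R_Q$ consists of horizontal rectangles all lie on the vertical edges of $Q$ (they are exactly the parts ``between'' the horizontal lines $L_{\mathrm{bot}}$ and $L_{\mathrm{top}}$), so that a point on a horizontal edge of $Q$ that is missed by $\L$ can only be missed because it lies in a rectangle face. The remaining ingredients --- $\partial I\subseteq\bigcup_{L\in\L}L$, rectangle-face boundaries lying in $\L$, and ``crosses a grid cell $\Rightarrow$ large'' --- are all immediate from the construction.
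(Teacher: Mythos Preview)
Your argument is correct and is exactly the case analysis on $\R_Q$ that the paper has in mind; in the paper the proposition is simply stated as following ``from the construction of the lines $\L$'' without further detail, and your write-up fills in precisely those details. One minor remark: the hypothesis already gives you that $p$ is not in any (open) rectangle face, which is all the construction requires, so your additional observation that $p$ cannot lie on a rectangle-face boundary either (because such boundaries are in $\L$) is true but not needed.
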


Let $F \in \F(\L)$ be a face contained in a single grid cell $Q$. As all large rectangles not contained in the rectangle faces are contained in a single row or column of grid cells, and all small rectangles have width and height at most $\delta^2 N$, all rectangles which have non-empty intersection with $F$ are contained in a horizontal stripe of $I$ of width $3 \delta^2 N$, or in a vertical stripe of $I$ of width $\delta^2 N$ (see Figure \ref{fig:stripes}a). As the set of rectangles $\R$ is well-distributed, we get that the total weight of rectangles intersecting $F$ is at most $8 \delta^2 w(\R)$. 

\begin{figure}
\begin{centering}

\centerline{%
\includegraphics[height=0.4\textwidth]{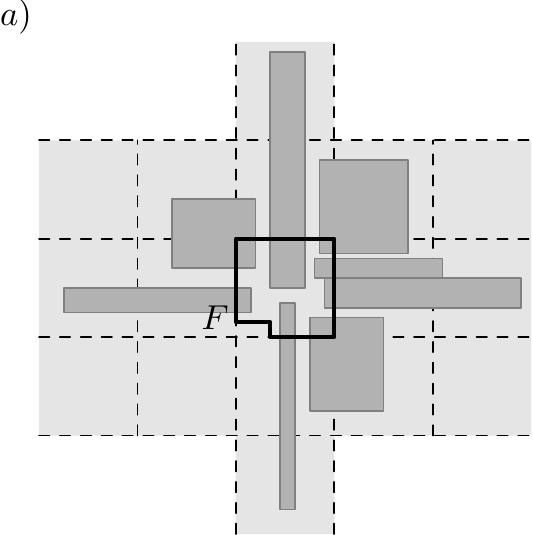}%
\hskip0.2\textwidth
\includegraphics[height=0.4\textwidth]{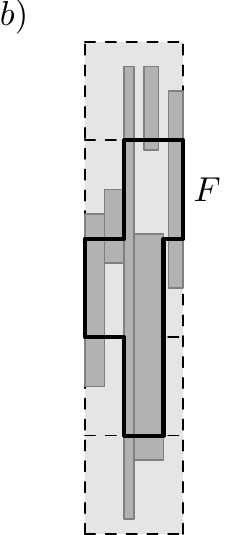}%
}%
\par\end{centering}
\caption{\label{fig:stripes}All rectangles from $\R$ intersecting a face $F \in \F(\L)$ which is not a rectangle face are contained in thin stripes of the input square. The rectangles of $\R$ intersecting $F$ are depicted in gray. The shaded area denotes the stripes.}
\end{figure}

Let $F \in \F(\L)$ be a face which is not a rectangle face, and which is not contained in a single grid cell. We will show that $F$ is contained in a single row or column of grid cells. Assume, for contradiction, that $F$ is not contained in a single row or column of grid cells. Then there must be a grid cell $Q$ with vertical and horizontal edges $e$ and $e'$, respectively, such that $F \cap e \neq \emptyset$ and $F \cap e' \neq \emptyset$. As $F \cap \bigcup_{L \in \L} L = \emptyset$, and therefore $F \cap E = \emptyset$, Proposition  \ref{prop:maze-grid-corners} and Proposition \ref{prop:maze-grid-bends} give us that $Q$ is crossed both by horizontal and vertical rectangles from $\R$, which gives contradiction. The face $F$ must be contained in a single row or column of grid cells.
 
Assume w.l.o.g. that $F$ is contained in a single column of grid cells, but not in a single grid cell (see Figure \ref{fig:stripes}b). Then for each grid cell $Q$ for which $F \cap Q \neq \emptyset$, $F \cap Q$ is contained between the lines $L_{\ell}$ and $L_r$ which are parts of edges of vertical rectangles crossing $Q$. In particular, no rectangle from $\R$ can intersect $L_{\ell}$ and $L_r$. If a rectangle from $\R$ has non-empty intersection with $F \cap Q$, then it must be contained in the same column of grid cells as $F$, i.e., all rectangles intersecting $F$ are contained in a stripe of $I$ of width $\delta^2 N$, and have total weight at most $2 \delta^2 w(\R)$. 

As the boundary of a rectangle face does not intersect any rectangles from $\R$, we instantly get that the boundary of any face $F \in \F(\L)$ intersects rectangles from $\R$ of total weight at most $8 \delta^2 w(\R)$. 
\end{proof}

\begin{proof}[Proof of Lemma~\ref{lem:G-cost-weight}]
From Lemma \ref{lem:rec-intersect-constant} each rectangle from $\R$ can be intersected by at most $4$ edges of the graph $G(\L)$. That gives us that the total cost of edges in $G(\L)$ is at most $4 w(\R)$.

From Lemma \ref{lem:rec-intersect-faces} each face of $F$ which is not a rectangle face has non-empty intersection with rectangles from $\R$ of total weight at most $8 \delta^2 w(\R)$, and so the weight of $F$ is at most $8 \delta^2 w(\R)$.

As each rectangle $R \in \R$ intersecting $m$ faces contributes $w(R)/m$ to the weight of each of the $m$ faces, the total weight of the faces is  $w(\R)$.
\end{proof}

\begin{proof}[Proof of Lemma~\ref{lem:balanced-cut-parametrized}]
Let $\L$ be the set of lines, and $G(\L)=(V,E)$ the graph constructed for the set of rectangles $\R$. From Lemma \ref{lem:G-some-properites} $G(\L)$ is planar, and so we can apply Theorem \ref{thm:cycle-separator} to the embedding given by the lines $\L$.

Let $C$ be the V-cycle of $G(\L)$ from Theorem \ref{thm:cycle-separator} for $\bar{k}=1/\delta$. We will transform $C$ into a cycle $C'$ which uses only ordinary edges of $G(\L)$. We consider the face edges one by one, and we substitute each face edge $uv$ for a face $F$ with a path in $G(\L)$ connecting $u$ and $v$ and using only edges which are on the boundary of $F$. We can choose this path in two ways, depending on whether we want $F$ to become a part of the interior, or the exterior of $C'$. We always merge $F$ with the part of lower weight. Notice that $C'$ might not be a simple cycle, but we can always modify $C'$ so that each edge appears only $O(1)$ times in $C'$.

We will show that the cycle $C'$ gives a balanced $O(\delta)$-cheap $O((1 / \delta)^4)$-cut. From Lemma \ref{lem:G-some-properites} we get that $|E| = O((1 / \delta)^4)$. The cycle $C'$ uses $O((1 / \delta)^4)$ edges, and so $C'$ is a $O((1 / \delta)^4)$-cut.

We will now upper bound the total weight of rectangles from $\R$ intersected by $C'$. From Theorem \ref{thm:cycle-separator} the ordinary edges of $C$ have cost $O(M/\bar{k})$, which from Lemma \ref{lem:G-cost-weight} is $O(\delta w(\R))$, and so they intersect rectangles from $\R$ of a total weight $O(\delta w(\R))$. The remaining edges of $C'$ lie on the boundaries of at most $\bar{k}=1 / \delta$ faces, and from Lemma \ref{lem:rec-intersect-faces} the boundary of each face intersects rectangles of weight $O(\delta^{2}\cdot w(\R))$. The edges of $C'$ intersect rectangles of total weight $O(\delta w(\R))$, and so $C'$ is a $O(\delta)$-cheap cut.

From Theorem \ref{thm:cycle-separator} the interior and the exterior of $C$ have weights at most $2W/3$, and from Lemma \ref{lem:G-cost-weight} we get that $W= w(\R)$. Each rectangle in $\R$ has weight smaller than $w(\R)/3$, and so the weight of each rectangle face of $G(\L)$ is smaller than $w(\R)/3$. From Lemma \ref{lem:G-cost-weight} the weight of any other face of $G(\L)$ is at most $8 \delta^{2}\cdot w(\R)$, which is also smaller than $w(\R)/3$ for $\delta < 1/5$. From the construction of $C'$ the interior and the exterior of $C'$ have weights at most $2w(\R)/3$. The cut $C'$ is balanced. 
\end{proof}

\begin{proof}[Proof of Theorem~\ref{thm:qptas}]
From Lemma \ref{lem:balanced-cut-parametrized}, for any $1/5 > \delta >0$ and for any set $\R$ of pairwise non-overlapping rectangles which does not contain a rectangle of weight at least $w(\R)/3$
there exists a balanced $(c \cdot \delta)$-cheap $O((1 / \delta)^4)$-cut for some constant $c > 0$.

Applying Lemma \ref{lem:good-cut-suffices} gives us, that the algorithm GEO-DP has approximation ratio $(1+c \cdot \delta)^{O(\log(n/\eps))}$ when parametrized by some $k=(1 / \delta)^8\cdot O(\log^2(n/\eps))$. Let us fix $\delta = \Theta\left(\frac{\eps}{\log (n/\eps)}\right)$ such that the approximation ratio is at most $1+\eps$. 
Such choice of $\delta$ requires $k = (\frac{\log n}{\eps})^{O(1)}$.
The running time of GEO-DP is then $n^{(\log n /\eps)^{O(1)}}$ according to Proposition \ref{prop:DP-time}, and so GEO-DP is a QPTAS for the maximum weight independent set of rectangles problem.
\end{proof}


\section{Complete Construction of $\Le$}\label{app:construction-le}

For any two points $p,p'$ we denote by $L[p,p']$ the straight line from $p$ to $p'$. Also, we define $L(p,p):=L[p,p']\setminus \{p,p'\}$.

For each endpoint $p_0$ of a line $L \in \L_{0}$ such that $L$ does not hit a perpendicular line in $\L_{0} \cup \Le$ at $p_0$ we will create a path of lines connecting $L$ with a line in $\L_{0} \cup \Le$, and we will add the constructed lines to the set $\Le$. Ideally, we would like the added lines to intersect no blocks. However, as we want the size of the partition to be small, we will have to allow the lines to cut some blocks.

Before we show the construction of the paths, we need the following lemma. Note that it holds for arbitrary lines $L$, and not only for lines in $\L_{0} \cup \Le$.

\begin{lem}\label{lem:line-hits-block}
Let $Q$ be a grid cell and $p \in Q$. Let $L$ be a line which does not intersect any blocks and lines from $\L_{0}$, which has one endpoint at $p$, and the other endpoint outside of $Q$. If $L$ hits a perpendicular block $B$ at $p$, but it does not hit a perpendicular line from $\L_{0}$ at $p$, then:
\begin{itemize}
\item $p \in \int(Q)$, and
\item one end of $B$ is in $Q$, and the other one is outside of $Q$.
\end{itemize}
\end{lem}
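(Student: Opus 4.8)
The plan is to prove Lemma~\ref{lem:line-hits-block} by a short case analysis driven entirely by the definitions of ``hits'' and ``intersects''. Recall that $L$ hits $B$ at $p$ means $p \in L \cap (B \cup \partial B)$, $L$ does not intersect $B$ (so $L \cap B = \emptyset$), and extending $L$ past $p$ would meet $B$. Since $L$ does not intersect $B$ but $p$ is a touching point, $p$ must lie on $\partial B$; since extending $L$ beyond $p$ enters $B$, the point $p$ lies on the relative interior of the edge of $B$ that $L$ ``runs into'' perpendicularly, i.e.\ $B$ locally sits on the far side of $p$ along the extension of $L$.

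First I would establish that $p \in \int(Q)$. Suppose not, so $p \in \partial Q$. Since one endpoint of $L$ is $p$ and the other is outside $Q$, and $L$ is a straight axis-parallel segment, either $L$ runs along $\partial Q$ from $p$ (impossible: $L$ is perpendicular to the edge of $B$ it hits, hence perpendicular to $B$'s long edge, and its extension enters $B$, so near $p$ the segment $L$ is not along $\partial Q$ unless $\partial Q$ itself passes through $B$'s interior, contradicting the fact that $B$ crosses the grid\,---\,more precisely, a block's long edge always crosses a grid cell boundary, so its short edges are perpendicular to a grid direction and the edge of $B$ containing $p$ cannot lie along $\partial Q$ while $B$ is entered by the extension of $L$), or $L$ leaves $Q$ immediately at $p$, in which case the extension of $L$ past $p$ goes \emph{away} from $Q$, so the block $B$ it enters lies outside $Q$; but then $L$ hits at $p$ a point which is on $\partial Q$, and by Proposition~\ref{prop:maze-grid-bends} (or directly, by construction of $\L_0$: the extremal and extremal sticking-in lines for that edge of $Q$ are in $\L_0$) there would be a perpendicular line of $\L_0$ touching $p$, so $L$ would hit a perpendicular line of $\L_0$ at $p$\,---\,contradicting the hypothesis. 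Hence $p \in \int(Q)$.

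Next I would show one short end of $B$ lies in $Q$ and the other outside. Since $p \in \int(Q) \cap \partial B$ and $B$ is a block, the long edges of $B$ are parallel to $L$'s extension direction; the edge of $B$ through $p$ is a long edge (the extension of $L$ enters $B$ crossing that long edge perpendicularly). Walk along $B$ in both directions parallel to its long edge starting from $p$. Because every block's long edge crosses at least one grid line (blocks are not contained in a single grid cell), $B$ must leave $Q$ in at least one direction, so one short end of $B$ is outside $Q$. For the other short end: if it too were outside $Q$, then $B$ would cross $Q$ entirely in the direction of its long edge, so its long edges would cross $\partial Q$; but then one of the extremal sticking-in lines of the relevant edge of $Q$ (which are in $\L_0$, perpendicular to that edge, hence perpendicular to $B$'s long edge and parallel to $L$) would coincide with or be adjacent to the side of $B$ facing $p$, and thus $L$ would hit a perpendicular line of $\L_0$ at $p$ rather than only the block $B$\,---\,again contradicting the hypothesis that $L$ does not hit a perpendicular line of $\L_0$ at $p$. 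So exactly one short end of $B$ is in $Q$ and the other is outside $Q$.

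The main obstacle I anticipate is making the ``the extremal/sticking-in lines of $\L_0$ would have blocked $L$'' argument fully rigorous: one must carefully invoke the maximality of the lines in $\L_0$ and the precise selection rule (extremal lines and extremal sticking-in lines strictly longer than $\delta N$ touching a given edge of $Q$) to guarantee that if $B$ were positioned so as to run past $p$ along $\partial Q$ or to cross $Q$ entirely, then some perpendicular line of $\L_0$ would necessarily pass through $p$, contradicting that $L$ hits only the block $B$ at $p$. Getting the quantifiers right\,---\,which edge of $Q$, which coordinate is extremal, and that the relevant candidate line is indeed long enough to have been added to $\L_0$\,---\,is the delicate part; the rest is bookkeeping with the definitions.
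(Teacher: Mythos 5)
Your overall strategy is the right one (derive a contradiction from the extremality of the lines in $\L_{0}$), and you correctly flag where the difficulty lies, but the proposal does not actually close that difficulty, and in two places the argument as written would fail. The paper's proof hinges on one concrete construction that you never make: let $L'$ be the \emph{maximal} line that contains the edge of $B$ through $p$ and does not intersect any block or any line of $\L_{0}$. Since $p\in L'$ and $L$ does not hit a perpendicular $\L_{0}$-line at $p$, one has $L'\notin\L_{0}$; the whole proof then consists of showing that in each bad configuration $L'$ \emph{would} have been selected into $\L_{0}$ (as the bottom-most/top-most long line crossing a specific edge of $Q$ when $p\in\partial Q$, or as an extremal sticking-in line when $B$ crosses $Q$). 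Your version instead asserts ``there would be a perpendicular line of $\L_{0}$ touching $p$'' without identifying which candidate line that is or why the selection rule forces it through $p$ --- but that is precisely the content of the lemma, not bookkeeping. Your appeal to Proposition~\ref{prop:maze-grid-bends} is also a misapplication: that proposition concerns the partition built in Section~\ref{sec:QPTAS}, a different construction from $\L_{0}$.

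Two further concrete problems. First, your case analysis for $p\in\partial Q$ is incomplete and partly backwards: if $L$ exits $Q$ immediately at $p$, the extension of $L$ past $p$ goes \emph{into} $Q$, not away from it, so the configuration where $B$ sits inside $Q$ adjacent to a boundary point $p$ is not covered by either of your two cases. Second, in your argument for the ``$B$ ends in $Q$'' bullet, when $B$ crosses $Q$ entirely you conclude that an extremal sticking-in line would ``coincide with or be adjacent to'' the side of $B$ facing $p$ so that ``$L$ would hit a perpendicular line of $\L_{0}$ at $p$.'' That only handles the subcase where $L'$ itself is the extremal sticking-in line. In the other subcase the extremal sticking-in line lies strictly on the far side of $L'$ from $B$, also cuts $Q$ (sticking-in lines maximize $|L\cap Q|$, and $L'$ already achieves a full crossing), and is therefore \emph{crossed by $L$ in its interior}; the contradiction there is with the hypothesis that $L$ does not intersect any line of $\L_{0}$, not with the hypothesis about hitting a line at $p$. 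As written, your argument misses this case, so the proof has a genuine gap rather than merely needing polish.
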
 
\begin{proof}
Assume w.l.o.g. that $L$ is vertical, $p$ is at the top end of $L$, and $B$ crosses the boundary of the grid cell to the right of $p$. Let $L'$ be the maximal line which contains the bottom edge of $B$ and does not intersect any blocks or lines from $\L_{0}$. As $p \in L'$, we get that $L' \notin \L_{0}$.

Assume that $p$ is at the boundary of $Q$. If $p$ lies on the bottom or right edge of $Q$, $L'$ is the bottom-most long line crossing the right edge of $Q$. If $p$ lies on the top edge of $Q$, $L'$ is the top-most long line crossing the right edge of $Q$. If $p$ lies on the left edge of $Q$, $L'$ is the bottom-most long line crossing the left edge of $Q$. In each of the cases we have that $L' \in \L_{0}$, which gives a contradiction.

As $p \in \int(Q)$, $B$ intersects $Q$. If $B$ does not end in $Q$, then $L'$ cuts $Q$. If $L'$ is the bottom-most sticking-in line for the left edge of $Q$ then $L' \in \L_{0}$, which gives a contradiction. Otherwise, the bottom-most sticking-in line for the left edge of $Q$ is below $L'$ and cuts $Q$, so it intersects $L$, and again we get a contradiction, as $L$ does not intersect edges from $\L_{0}$. Block $B$ must end in $Q$.
\end{proof}

We now present the construction of the paths. For each endpoint $p_0$ of a line $L \in \L_{0}$ such that $L$ does not hit a perpendicular line in $\L_{0} \cup \Le$ at $p_0$, we construct a set of lines $\{L_1, \ldots , L_m\}$ as follows. Let $Q$ be a grid cell such that $p_0 \in Q$, and let $B$ be the block hit by $L$ at $p_0$. Such a block exists, as $L$ is maximal, and $L$ does not hit a line from $\L_{0} \cup \Le$ at $p_0$. The line $L$ together with $p_0$, $Q$ and $B$ satisfy the conditions of Lemma \ref{lem:line-hits-block}. We get that $p_0 \in \int(Q)$, and $B$ has one end in $Q$. Let $Q' \neq Q$ be the grid cell with the other end of $B$. Let $L_{max}$ be a maximal line which contains the edge of $B$ containing $p$, and does not intersect any blocks or lines from $\L_{0} \cup \Le$. Let $p_1$ be the endpoint of $L_{max}$ such that $L[p_0,p_1] \cap Q' \neq \emptyset$. We set $L_1 = L[p_0,p_1]$. We know that $p_1 \notin Q$, so $L_1$ intersects at least two grid cells.
See Figure~\ref{fig:creating-le-cases} for a sketch.

We fix a parameter $M = 64 \frac{1}{\eps}(\frac{1}{\delta})^{2}$. We have to consider the following cases:

\begin{figure}
\begin{centering}
\centerline{%
\includegraphics[height=0.25\textwidth]{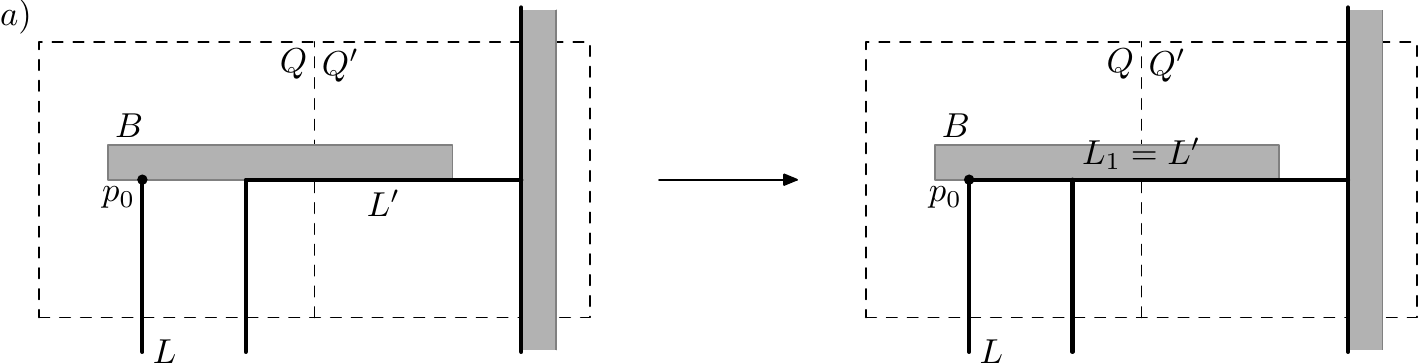}%
}%
\vskip0.03\textwidth
\centerline{%
\includegraphics[height=0.25\textwidth]{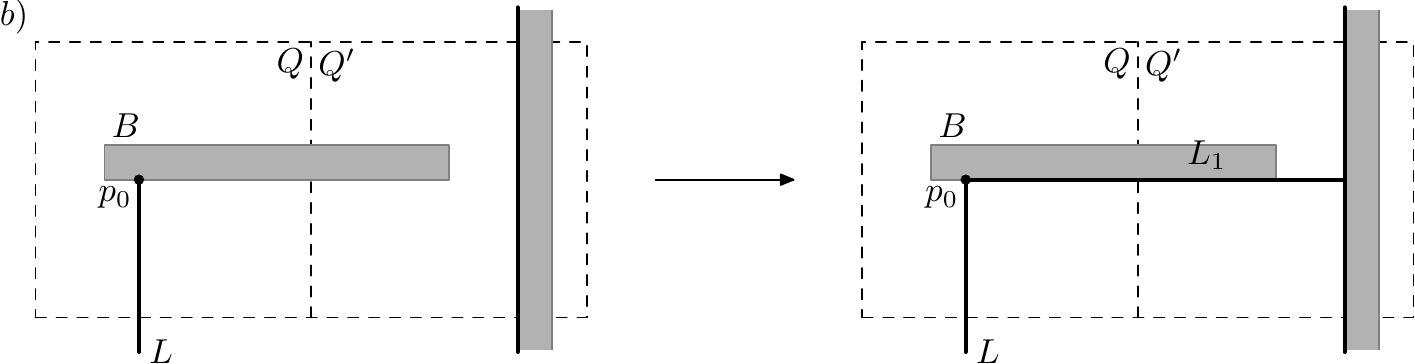}%
}%
\vskip0.03\textwidth
\centerline{%
\includegraphics[height=0.25\textwidth]{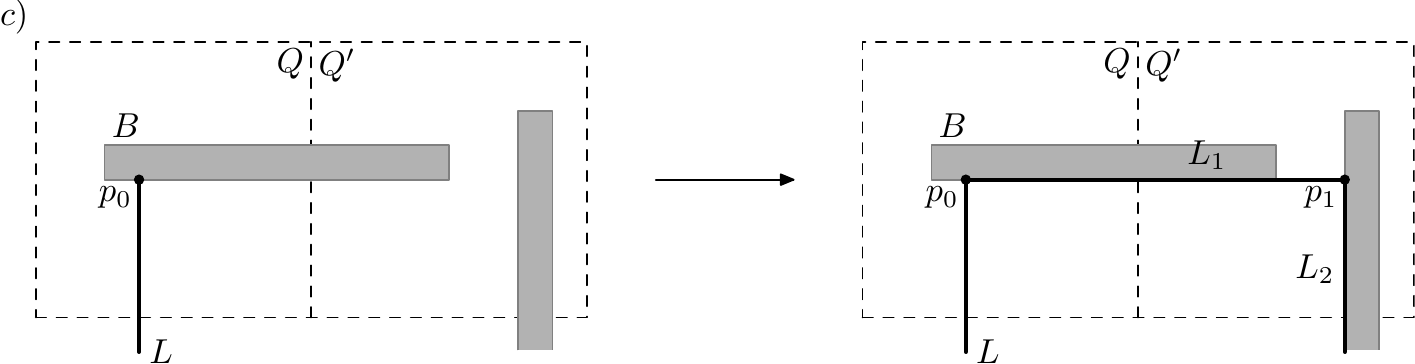}%
}%
\end{centering}
\caption{\label{fig:creating-le-cases}
Construction of the paths. The line $L$ hits the block $B$ at the point $p_0$. We construct a line $L_1$ 
starting at $p_0$, following the bottom edge of $B$ until we hit a perpendicular line which is already in $\L_0 \cup \Le$, or a perpendicular block. In case a) the new line $L_1$ overlaps an existing line $L' \in \L_0 \cup \Le$ in which case we extend $L'$ so that it reaches $p_0$ (and do not add $L_1$ to $\Le$).
In case b) we simply add $L_1$ to $\Le$ and we are done. In case c) we continue constructing the path from the point $p_1$, where $L_1$ hits a perpendicular block.
}
\end{figure}

\begin{enumerate}
\item There is a line $L' \in \Le$ such that $|L_1 \cap L'| > 1$ (see Figure \ref{fig:creating-le-cases}a). 
Such a situation can happen, as lines from $\Le$ are not necessarily maximal. As $L_1$ does not hit $L'$ at $p_0$ and $L_1$ cannot be extended beyond $p_1$, we get that $L' \subseteq L_1$. The construction of the path is finished.

\item Case $1)$ does not happen, but $L_1$ hits a line from $\L_{0} \cup \Le$ at $p_1$, (see Figure \ref{fig:creating-le-cases}b). 
The construction of the path is finished.

\item Cases $1)$ and $2)$ do not happen. In this case, $L_1$ hits some perpendicular block at $p_1$ (see Figure \ref{fig:creating-le-cases}c). 
We proceed as before, considering the line $L_1$ and its endpoint $p_1$ instead of $L$ and $p$. The conditions of Lemma \ref{lem:line-hits-block} are satisfied, as $L_1$ intersects at least two grid cells. We continue extending the path, until one of the cases $1)$ or $2)$  applies, or the number of lines in the path reaches the upper bound $M$.
\end{enumerate}

\begin{figure}
\begin{centering}
\includegraphics[scale=0.75]{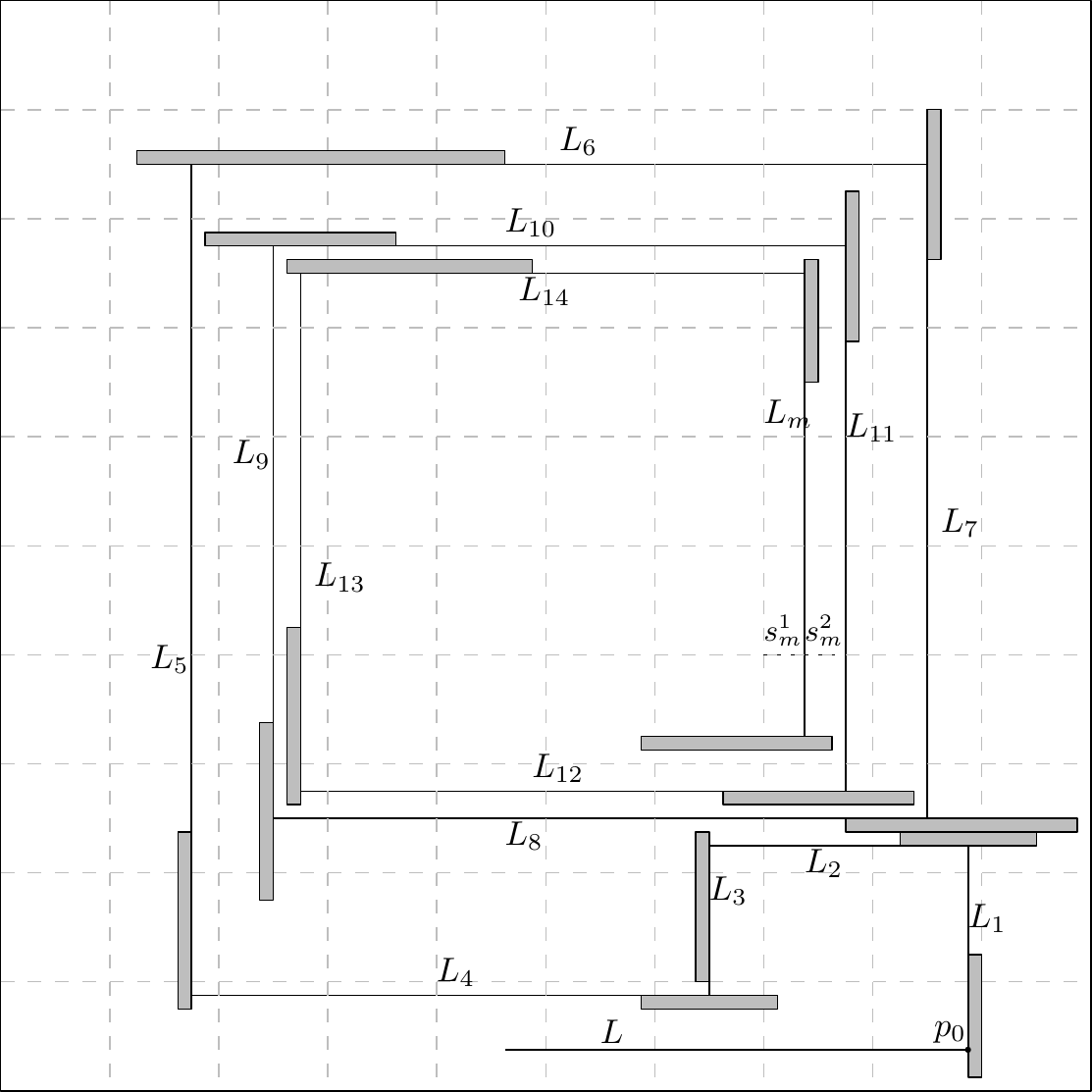}
\par\end{centering}

\caption{\label{fig:construction-lines-L'-case3}The construction of the lines
$\Le$. The blocks of the considered instance are depicted in gray.}
\end{figure}

Let $\{L_1, \ldots, L_m\}$ be the collection of lines obtained as described above, for some $m \le M$. We have $L_i = L[p_{i-1},p_i]$. We modify the set $\Le$ as follows.
If the construction of the set ended in case $1)$, we add the set of lines $\{L_1, \ldots, L_{m-1}\}$ to $\Le$. Let $L' \in \Le$ be the line contained in $L_m$. We extend $L'$, so that it has an endpoint in $p_{m-1}$. Notice, that all the lines which were touching $L'$ after extending are still touching $L'$. 
If the construction of the set ended in case $2)$, we add the set of lines $\{L_1, \ldots, L_m\}$ to $\Le$.

Let us now consider the difficult case, i.e., when after $M$ steps the path $L_1, \ldots , L_M$ does not hit any line from $\L_{0} \cup \Le$. We do not want to extend the path any further, as the number of lines in $\Le$ would become too large. We have to find a place to create a "shortcut"  which connects some part of the path $L_1, \ldots , L_M$ to a line from $\L_{0} \cup \Le$, cutting some blocks. We will ensure that the rectangles cut during this operation have small total weight. The cut will go along the boundary of some grid cell. An example can be seen on Figure \ref{fig:construction-lines-L'-case3}.

Each line $L_i$ intersects least two grid cells. For each $i \in \{1,...,M\}$ let $Q_i$ be the grid cell in which $L_i$ ends, i.e, such that $p_i \in Q_i$. From the construction of the path we know that $p_i \in \int(Q_i)$. Let $e_i$ be the edge of $Q_i$ intersected by $L_i$, and let $p_i^0 = L_i \cap e_i$.  Let $s_i^1$ and $s_i^2$ be the segments on $e_i$ which connect $L_i$ with the two neighboring lines from $\L_{0} \cup \Le \cup \{L_1, \ldots , L_M\}$, i.e. for $j=\{1,2\}$ we have $s_i^j = [p_i^0,p_i^j]$ such that $p_i^j \in \L_{0} \cup \Le \cup \{L_1, \ldots , L_M\}$ and $(p_i^0,p_i^j) \cap (\L_{0} \cup \Le \cup \{L_1, \ldots , L_M\}) = \emptyset$, and $p_i^1 \neq p_i^2$.

We will show that such segments always exist. $L_i$ crosses the edge $e_i$ of $Q_i$, and the maximal line containing $L_i$ which does not intersect any blocks or lines from $\L_{0}$ is long (as at least some part of $L_i$ goes along a long edge of a block). As $L_i \notin \L_{0}$, $L_i$ lies between two lines from $\L_{0}$ --- the leftmost and the rightmost long lines crossing the edge $e_i$ of $Q_i$, and so the segments $s_i^1$ and $s_i^2$ are contained in $e_i$. A segment $s_i^j$ possibly cuts some blocks. Let $\R_i^j$ be the set of rectangles cut by $s_i^j$.

\begin{lem}\label{lem:segments-sij}
Any rectangle $R \in \R_L$ belongs to at most four 
sets $\R_i^j$, where $i \in \{1, \ldots , M\}$ and $j \in \{1,2\}$.
\end{lem}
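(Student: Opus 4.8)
The plan is to reduce the statement to the case of a single ``thin'' $\delta$-large rectangle, pin down which of the segments $s_i^j$ can possibly cut it, and then bound their number by a short combinatorial argument that uses the maximality in the definition of the $s_i^j$ together with the fact that none of the lines in $\L_0\cup\Le\cup\{L_1,\dots,L_M\}$ crosses the interior of a block.

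First I would set up the reduction. Fix $R\in\R_L$, say $R=(a,b)\times(c,d)$. Every segment $s_i^j$ lies inside a grid-cell edge $e_i$, so $|s_i^j|\le\delta N$. If $s_i^j$ cuts $R$, then the two connected components of $R\setminus s_i^j$ are the parts of $R$ on the two sides of the line carrying $s_i^j$, so $s_i^j$ must span one whole side of $R$; that side then has length at most $\delta N$. If both sides of $R$ had length $>\delta N$ then no $s_i^j$ could cut $R$ and we would be done, so (being $\delta$-large) $R$ has exactly one side of length $\le\delta N$, and by the obvious symmetry I may assume $b-a=g_R\le\delta N<h_R=d-c$, i.e.\ $R$ is thin and vertical. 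Then every $s_i^j$ that cuts $R$ is horizontal, $e_i$ is a horizontal grid-cell edge, $L_i$ is a vertical line, and $s_i^j\supseteq(a,b)\times\{c_i\}$ for some grid-line height $c_i\in(c,d)$; in particular $(a,b)\subseteq e_i$, so $R$ sits inside a single column of grid cells.

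Next I would reduce to a count ``on each side'' of $R$. For a fixed index $i$, the segments $s_i^1$ and $s_i^2$ leave the common point $p_i^0=L_i\cap e_i$ along $e_i$ in the two opposite horizontal directions; if $s_i^j$ contains $(a,b)$ then the $x$-coordinate $x_i^0$ of $p_i^0$ lies outside $(a,b)$, so at most one of $s_i^1,s_i^2$ cuts $R$. Calling this cut \emph{left} when $x_i^0\le a$ and \emph{right} when $x_i^0\ge b$, it suffices to show there are at most two left cuts. Two left cuts cannot lie on the same grid line: if $s_i^j$ and $s_{i'}^{j'}$ were left cuts on one grid line with $x_i^0\le x_{i'}^0$, then, since $x_{i'}^0\le a<b$ and $s_i^j$ runs rightward past $b$, the line $L_{i'}$ meets that grid line at a point of $\L_0\cup\Le\cup\{L_1,\dots,L_M\}$ strictly between the endpoints of $s_i^j$ unless $x_i^0=x_{i'}^0$; the former contradicts the maximality in the definition of $s_i^j$, and the latter forces $L_i$ and $L_{i'}$ to be collinear, which the construction of the path precludes (an overlap triggers the merging in Case~1). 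Finally, for two left cuts at heights $c_1<c_2$ the lines $L_1,L_2$ are maximal vertical lines lying strictly to the left of $R$ and avoiding block interiors; tracing the path one shows that between $L_1$ and $L_2$ the path must leave the vertical span $(c,d)$ of $R$ --- the tall thin rectangle $R$, being built out of blocks, blocks every line of the construction --- and each such excursion can feed at most one further left cut. This caps the left cuts at two, symmetrically the right cuts at two, and hence $R$ lies in at most $2+2=4$ of the sets $\R_i^j$.

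I expect the final step to be the main obstacle: turning ``the path can poke at the left side of $R$ at most twice'' into a rigorous argument means carefully following how the path $L_1,\dots,L_M$ moves from one grid cell to the next, using that every segment of the path runs along a block edge and never enters a block interior (so that $R$ genuinely acts as a barrier) and that each $s_i^j$ reaches exactly up to the first line it meets. The degenerate configurations --- $L_i$ lying on $\partial R$ or along one of the block edges of $R$, and the path touching itself --- will need separate checking, although in each of these cases the same count should go through.
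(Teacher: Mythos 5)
Your reduction to a thin vertical $R$ confined to one column, and your observation that at most one of $s_i^1,s_i^2$ can cut $R$ for each $i$ and that two ``left cuts'' cannot live on the same grid edge (because the second line $L_{i'}$ would give a point of $\{L_1,\dots,L_M\}$ strictly inside the open segment $(p_i^0,p_i^j)$, contradicting the first-hit definition of $s_i^j$), are all correct and essentially coincide with the first and last steps of the paper's proof. The gap is exactly where you suspected it: bounding the number of left cuts \emph{across different grid edges} by two. Your proposed argument --- the path must leave the vertical span of $R$ between consecutive left cuts, and ``each such excursion can feed at most one further left cut'' --- would at best yield a bound of one plus the number of excursions, which is not $2$; and neither half of it is established. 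More importantly, it does not use the one structural fact that actually makes the lemma true.

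That fact is the following. Let $Q_1,Q_2$ be the cells in which the blocks of $R$ end, and let $e$ be any horizontal grid edge meeting $R$ with $e\cap Q_1=\emptyset=e\cap Q_2$; then $R$ \emph{crosses} both cells $Q,Q'$ adjacent to $e$. Hence the block edges of $R$ are vertical lines crossing $Q$ and $Q'$, so the extremal sticking-in lines of these cells cross them as well, belong to $\L_0$, and flank $R$ (leftmost at $x\le a$, rightmost at $x\ge b$). No horizontal block can intersect the strip between them (it would have width $>\delta N$ and would have to cross one of these block-avoiding lines), so no path line $L_i$ can \emph{end} in $Q$ or $Q'$ inside that strip; and a segment $s_i^j$ launched from outside the strip is stopped by the flanking $\L_0$-line before it reaches $(a,b)$. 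Consequently $R$ can only be cut at the two extremal boundaries (the ones touching $Q_1$ and $Q_2$), and at each of these your same-edge argument gives at most one cut from each side, i.e.\ at most two --- which is how the paper reaches the bound of four. Your ``at most two left and two right cuts'' is a true consequence of this, but without the sticking-in-line argument there is nothing preventing the path from poking at the left flank of a tall $R$ at three or more intermediate grid heights, so the proposal as written does not close.
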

\begin{proof}
If $R$ is not contained in a single row or column of grid lines, it cannot be cut by any segment $s_i^j$, and so it does not belong to any set $\R_i^j$. Assume w.l.o.g. that $R$ is vertical and it is inside a single column of grid cells. Let $Q_1$ and $Q_2$ be the grid cells where the blocks of $R$ end. 

Let $e$ be a grid cell boundary which intersects $R$ such that $e \cap Q_1 = \emptyset$ and $e \cap Q_2 = \emptyset$. Let $Q$ and $Q'$ be the grid cells for which $e = Q \cap Q'$. $R$ crosses $Q$ and $Q'$, and so the vertical sticking-in lines in $Q$ and $Q'$ cross $Q$ and $Q'$, and $R$ is contained between them. If a line $L_i$ ends in $Q$ or $Q'$, its end cannot lie between the sticking-in lines (as there are no perpendicular blocks which could be hit by $L_i$), and so $R \cap s_i^j = \emptyset$ for $j \in \{1,2\}$, and $R \notin \R_i^j$.

At the boundary $e$ of $Q_1$ or $Q_2$ the rectangle $R$ can be cut by at most two segments $s_i^j$ and $s_{i'}^{j'}$, such that $L(p_i^0, p_{i'}^0) \cap \{L_1, \ldots , L_M\} = \emptyset$, and $(R \cap e) \subseteq L[p_i^0, p_{i'}^0]$.
\end{proof}

From Lemma \ref{lem:segments-sij} we get that there is a line $L_i$ in $\{L_1, \ldots, L_M\}$ and a segment $s_i^j$ for which $w(\R_i^j) \le \frac{2}{M} w(\R_L)$. We have two cases:
\begin{itemize}
\item $s_j^j$ connects the line $L_i$ with a line from $\L_{0} \cup \Le \cup \{L_j\}_{j < i}$. 

We then continue the path from $L$ only until the point $p_i^0 = L_i \cap s_i^j$, and then extend it by the segment $s_i^j$. Formally, we add the lines $L_1, \ldots , L_{i-1}$, the line $L[p_{i-1}, p_i^0] \subseteq L_i$ and the segment $s_i^j$ into $\Le$.

\item  $s_j^j$ connects the line $L_i$ with a line $L_{i'}$ for $i' > i$. 

We then continue the path from $L$ until the intersection of $L_{i'}$ with $s_i^j$, and then extend it by the segment $s_i^j$. Let $p = L_{i'} \cap s_i^j$. Formally, we add the lines $L_1, \ldots , L_{i'-1}$, the line $L[p_{i'-1}, p] \subseteq L_{i'}$ and the segment $s_i^j$ into $\Le$.
\end{itemize}

We do this procedure iteratively for all endpoints $p$ of a line $L \in \L_{0}$ which are not connected to some other line in $\L_{0} \cup \Le$, i.e., if $\{p\} \cap (\L_{0} \cup \Le \setminus \{L\})=\emptyset$, for the so far computed set $\Le$. 


\section{Proofs from Section \ref{sec:PTAS-large-rectangles}}

\begin{proof}[Proof of Lemma~\ref{l-and-le-properties}]
We will start by showing that the set of lines $\L_{0} \cup \Le$ is nicely connected. From the construction of the lines it is clear that no two lines from $\L_{0} \cup \Le$ overlap or intersect properly. We need to show that for any line $L \in \L_{0} \cup \Le$ and any endpoint $p$ of $L$ there is a line $L' \in \L_{0} \cup \Le$ perpendicular to $L$ such that $L \cap L' = \{p\}$.

For each endpoint $p$ of a line $L \in \L_{0}$ which does not hit a perpendicular line from $\L_{0}\cup\Le$ we added a perpendicular line touching $p$ to the set $\Le$. The path of lines connecting $p$ with a line from $\L_{0}\cup\Le$ is constructed in such a way, that each line added to $\Le$ has both endpoints touching perpendicular lines from $\L_{0}\cup\Le$. If a line from $\Le$ gets extended, it is extended in such a way that the new endpoint touches a perpendicular line from $\L_{0}\cup\Le$. The set of lines $\L_{0}\cup\Le$ is nicely connected.

We will now show an upper bound on $|\Le|$. From Proposition \ref{prop:l-size} the set $\L_{0}$ consists of \mbox{$16(1 /\delta)^{2}+4$} lines. For each endpoint of a line from $\L_{0}$, except from the four lines in $\L_{0}$ bounding the input square, 
we added at most $M+1 = O(1/(\eps \delta^{2}))$ lines to the set $\Le$, which gives $|\Le| = O(1/(\eps \delta^4))$.

We will now upper bound the total weight of rectangles from $\R$ which are cut by a line from $\L_{0}\cup\Le$ parallel to their shorter edge. The only lines from $\L_{0}\cup\Le$ which cut rectangles along their shorter sides, i.e., which cut all blocks corresponding to a given rectangle, are the lines from $\Le$ which correspond to segments $s_{i}^{j}$. Each segment $s_{i}^{j}$ cuts only the set of rectangles $\R_{i}^{j}$, and we add to $\Le$ only such segments $s_i^j$, for which $w(\R_{i}^{j})\le\frac{2}{M}w(\R)$. The number of segments
$s_{i}^{j}$ added to $\Le$ is upper bounded by $32(1 / \delta)^2$. That gives an upper bound of $32(1/\delta)^2 \cdot \frac{2}{M}w(\R) \le \eps w(\R)$ (as $M = 64\frac{1}{\eps}(\frac{1}{\delta})^{2}$) on the total weight of rectangles from $\R$ which are cut by a line from $\L_{0}\cup\Le$ parallel to their shorter edge. 

From the construction of the lines $\Le$ it is clear that all lines from $\Le$ cutting rectangles in $\R$ lie on grid lines, and a line from $\Le$ cannot be contained in the interior of a grid cell.
\end{proof}

\begin{proof}[Proof of Lemma~\ref{lem:corridor-shapes}]

First we prove some additional lemmas.

\begin{lem}\label{lem:bottom-sticking-in}
Let $L \in \L_{0} \cup \Le$ be a line and let $Q$ be a grid cell such that $L$ is a sticking-in line (not necessarily extremal) for an edge $e$ of $Q$, and $L$ does not cross $Q$. Let $p \in Q$ be an endpoint of $L$. Then there exists a line $\bar{L} \in \L_{0}$ perpendicular to $L$ such that $L \cap \bar{L} = \{p\}$, $\bar{L}$ does not end at $p$ and $\bar{L}$ is an extremal sticking-in line for an edge $e'$ of $Q$ perpendicular to $e$.
\end{lem}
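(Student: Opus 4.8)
The plan is to normalize the picture and then walk through the short list of reasons for which the endpoint $p$ can occur, exhibiting $\bar L$ in each case. Assume without loss of generality that $e$ is the bottom edge of $Q$, so every line that is sticking-in for $e$ in $Q$ (in particular $L$) is vertical, touches $e$, is strictly longer than $\delta N$, and avoids all blocks and all lines of $\L_0$. Since $Q$ is convex and $L$ meets $e$, the set $L\cap Q$ is a vertical segment with its lower end on $e$; because $L$ does not cross $Q$, its upper end is a point $(x_L,y^{*})$ with $y^{*}$ strictly below the top edge of $Q$, and since $L$ is maximal this point is exactly the endpoint $p$ in the statement (the other endpoint of $L$ lies below $Q$ because $|L|>\delta N$). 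Moreover $y^{*}-y_Q^{\mathrm{bot}}=|L\cap Q|$ is the common maximum over all sticking-in lines of $e$ in $Q$ that do not cross $Q$, so the two \emph{extremal} sticking-in lines $L_{\min},L_{\max}\in\L_0$ for $e$ in $Q$ have upper endpoints at the same height $y^{*}$, with $x$-coordinates $x_{\min}\le x_L\le x_{\max}$. I would set aside the degenerate case in which $L$ lies on a side edge of $Q$ (so $x_L\in\{x_Q^{\mathrm{left}},x_Q^{\mathrm{right}}\}$) and otherwise assume $p\in\int(Q)$. Maximality of $L$ means that prolonging $L$ upward immediately meets an obstruction; if $L\in\L_0$ this is, by the maximality property of $\L_0$, either a perpendicular (horizontal) block $B$ that $L$ hits at $p$, or a perpendicular horizontal line of $\L_0$ through $p$ — and in the latter case that line may be taken \emph{passing through} $p$, since a line ending at $p$ would not obstruct the prolongation. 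If instead $L\in\Le$, I would first use the construction of $\Le$ in Appendix~\ref{app:construction-le} together with Lemma~\ref{l-and-le-properties} to see that the upper endpoint of a sticking-in line of $\Le$ is stopped in exactly the same two ways, reducing to the two cases below.

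\emph{Line case.} Here $\bar L\in\L_0$ is horizontal, passes through $p=(x_L,y^{*})$, and does not end at $p$. I would argue that $\bar L\cap Q$ must span the whole interval $[x_{\min},x_{\max}]$ at height $y^{*}$: otherwise some valid long vertical line touching $e$, with $x$-coordinate strictly between $x_{\min}$ and $x_{\max}$, could be prolonged above height $y^{*}$, contradicting that $y^{*}$ is the maximum value of $|\cdot\cap Q|$. Since $\bar L$ is maximal and longer than $\delta N$, it therefore reaches one of the two vertical edges $e'$ of $Q$, and so $\bar L$ was among the horizontal candidate lines examined for $e'$ in $Q$ when $\L_0$ was built. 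Finally, $\bar L$ is an \emph{extremal} sticking-in line for $e'$: it maximizes $|\cdot\cap Q|$ among valid long horizontal lines touching $e'$ (a ``deeper'' such line would either cross $L$ \emph{properly}, forbidden for two lines of $\L_0\cup\Le$, or, lying at a height other than $y^{*}$, would again violate the maximality of $y^{*}$), and being in $\L_0$ it must be one of the two extremal ones, since non-extremal sticking-in lines land in $\Le$.

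\emph{Block case.} Here $L$ hits a horizontal block $B$ at $p$ and, without loss of generality, hits no perpendicular $\L_0$-line at $p$, so Lemma~\ref{lem:line-hits-block} gives $p\in\int(Q)$ with $B$ having one end inside $Q$ and the other outside. Since $L$ reaches up to the bottom edge of $B$ while $x_L$ lies in the interior of the width of $B$, and $B$ is a unit slice of a $\delta$-large horizontal rectangle $R$, it follows that $B$ is the \emph{bottom} slice of $R$; hence the bottom edge of $B$ is the bottom edge of $R$, has width $g_R>\delta N$, and crosses a side edge $e'$ of $Q$. Non-overlap of the rectangles forbids any block straddling height $y^{*}$ within the $x$-range of $R$, and no vertical line of $\L_0$ can cross that height there (it would enter $R$); so the maximal horizontal line $\bar L$ lying on the bottom edge of $B$ runs across the whole width of $R$ — it is valid, longer than $\delta N$, touches $e'$, and passes through $p$ — and is therefore again a candidate line for $e'$ in $Q$. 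One then concludes exactly as in the line case that $\bar L$ is an extremal sticking-in line for $e'$ in $Q$, which completes the proof.

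I expect the main obstacle to be precisely the step ``hence $\bar L$ is an \emph{extremal} sticking-in line for $e'$'', which occurs in both cases: one must reconcile the hypothesis that $L$ is only a \emph{possibly non-extremal} sticking-in line for $e$ with the required extremality of the perpendicular line $\bar L$ for the perpendicular edge $e'$, and rule out ``deeper'' horizontal lines touching $e'$. This is where the exact order in which $\L_0$ is assembled (all vertical lines before all horizontal lines; then the smallest- and largest-coordinate lines; then the extremal sticking-in lines) and the property that no two lines of $\L_0\cup\Le$ intersect properly (Lemma~\ref{l-and-le-properties}) must be used with care. The remaining ingredients — handling $L\in\Le$, the case $p\in\partial Q$, and the case $x_L\in\{x_{\min},x_{\max}\}$ — are routine but need to be spelled out.
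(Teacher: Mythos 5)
Your overall route coincides with the paper's: obtain a perpendicular line $\bar{L}$ through $p$ from the obstruction that stops $L$ (equivalently, from nice-connectivity, Lemma~\ref{l-and-le-properties}), show that its maximal extension is long and therefore meets an edge $e'$ of $Q$ perpendicular to $e$, and then argue that it is an extremal sticking-in line for $e'$. The genuine gap sits exactly in the step you flag as the main obstacle, and the sketch you give for it does not close it. To show that $\bar{L}$ is even a sticking-in line for $e'$ you must exclude a ``deeper'' horizontal candidate touching $e'$, and your dichotomy --- such a line either crosses $L$ properly or ``violates the maximality of $y^{*}$'' --- only covers candidates at heights \emph{below} $y^{*}$ (those indeed cannot get past the vertical segment $L$). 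A deeper horizontal candidate at a height \emph{above} $y^{*}$ crosses neither $L$ nor any vertical sticking-in line for $e$, and contradicts nothing about $y^{*}$, so nothing in your argument rules it out. The paper's proof handles precisely this case by analysing the obstruction at the far endpoint of $L_{\max}$ (the maximal block- and $\L_0$-avoiding extension of $\bar{L}$): that obstruction is a long perpendicular block or $\L_0$-line which cannot touch $e$ (otherwise it would yield a candidate for $e$ reaching above $y^{*}$, contradicting that $L$ is sticking-in for $e$), hence it crosses the \emph{top} edge of $Q$ and forms a barrier capping $|L_t\cap Q|$ for every line $L_t$ touching $e'$ above $L_{\max}$; combined with the ``below'' argument this makes $L_{\max}$ the \emph{bottom-most} maximizer for $e'$, which is what extremality means.

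Two further defects. First, the inference ``being in $\L_{0}$ it must be one of the two extremal ones, since non-extremal sticking-in lines land in $\Le$'' is invalid: a non-extremal sticking-in line for $e'$ in $Q$ is merely not added in that particular step, but it can belong to $\L_{0}$ for another reason (say, as an extremal line for the opposite edge of $Q$ or for the neighbouring cell), so membership in $\L_{0}$ does not certify extremality --- one must prove bottom-most-ness directly, as above. Second, your case split on the obstruction at $p$ omits the possibility that $L$ hits a perpendicular line from $\Le$ there; the paper's device of proving that $L_{\max}$ is the bottom-most sticking-in line for $e'$, hence lies in $\L_{0}$, and then using that lines of $\L_{0}\cup\Le$ never overlap to force $\bar{L}=L_{\max}$, is what disposes of this case. (The auxiliary claim that $\bar{L}\cap Q$ spans $[x_{\min},x_{\max}]$ is also unjustified --- the object ending $\bar{L}$ need not produce a valid vertical candidate reaching down to $e$ --- but it is not needed, since any line of $\L_{0}$ through an interior point of $Q$ reaches a perpendicular edge by length alone.)
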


\begin{proof}
From Lemma \ref{l-and-le-properties} the set of lines $\L_{0} \cup \Le$ is nicely connected, which means that there is a line $\bar{L} \in \L_{0} \cup \Le$ perpendicular to $L$ such that $L \cap \bar{L} = \{p\}$. We will show that $\bar{L}$ satisfies the remaining conditions of the lemma statement. Without loss of generality we assume that $e$ is the bottom edge of $Q$.

Let $L_{max}$ be a maximal line containing $\bar{L}$ which does not intersect any blocks or lines from $\L_{0}$. We will show that $L_{max}$ is the bottom-most sticking-in line for an edge $e'$ of $Q$. As $L$ is a sticking-in line for $e$ and $L$ does not cross $Q$ (i.e. $L$ does not touch the edge of $Q$ parallel to $e$), $L$ cannot be extended at $p$. Either $L$ hits $\bar{L}$ at $p$, or $L$ hits a perpendicular block at $p$. In either case $L_{max}$ does not end at $p$.

We first show that $L_{max}$ is a long line (i.e. longer than the grid granularity). If $\bar{L} \in \L_{0}$, then $\bar{L}$ is long and so is $L_{max}$. If $\bar{L} \in \Le$ then, from the construction of $\Le$, $\bar{L}$ goes along a long edge of a block, and $L_{max}$ is long as it contains a long edge of a block. 

\begin{figure}
\begin{centering}
\includegraphics[width=.3\textwidth]{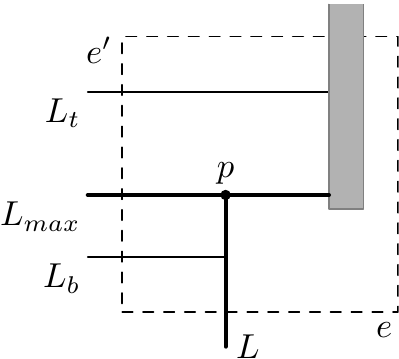}
\par\end{centering}

\caption{\label{fig:sticking-in-line}In the proof of Lemma \ref{lem:bottom-sticking-in} the line $L_{max}$ is the bottom-most sticking-in line for $e'$, and so it belongs to the set $\L_{0}$.}
\end{figure}

$L_{max}$ is not contained in $Q$, i.e., it intersects an edge $e'$ of $Q$ perpendicular to $e$ (see Figure \ref{fig:sticking-in-line}). Let $L_b \in \L_{0} \cup \Le$ be a line which intersects $e'$ below $L_{max} \cap e'$. As $L_b$ cannot intersect $L$ and $L_{max}$ extends beyond $L$, we get $|L_b \cap Q| < |L_{max} \cap Q|$.

If $L_{max}$ crosses $Q$, then it is the bottom-most line intersecting $e'$ and maximizing the length of the intersection with $Q$. Assume that $L_{max}$ does not cross $Q$. As $L_{max}$ is a maximal line which does not intersect any blocks or lines from $\L_{0}$, it ends in $Q$ by hitting a perpendicular line from $\L_{0}$  or a perpendicular block. This line or block does not intersect the bottom boundary of $Q$, as it would yield a long line crossing $e$ which reaches further than $L$, which gives a contradiction, as $L$ is a sticking-in line for $e$. The line or block hit by $L_{max}$ crosses the top edge of $Q$ and does not intersect any lines from $\L_{0} \cup \Le$. Therefore any line $L_t \in \L_{0} \cup \Le$ which intersects $e'$ above $L_{max} \cap e'$ satisfies $|L_t \cap Q| \le |L_{max} \cap Q|$.


We get that $L_{max}$ is the bottom-most long line maximizing the length of the intersection with $Q$, and so it is the bottom-most sticking-in line for $e'$. We get that $L_{max} \in \L_{0}$, and so $\bar{L} = L_{max}$. $\bar{L}$ satisfies all conditions of the lemma statement.
\end{proof}

\begin{lem}\label{lem:faces-line-continuation}
Let $Q$ be a grid cell, and let $e$ be an edge of $Q$. Let $L,L'\in\L_{0}\cup\Le$ be two lines intersecting $Q$, touching $e$ at $p_L$ and $p_{L'}$ respectively, such that there is no line $L'' \in \L_{0}\cup\Le$ which intersects $Q$ and touches $e$ between $p_L$ and $p_{L'}$. Then there is an edge $e' \neq e$ of $Q$ and lines $\bar{L},\bar{L}'\in\L_{0}\cup\Le$ touching $e'$ such that $L\cap\bar{L}\ne\emptyset$,
$L'\cap\bar{L}'\ne\emptyset$. 
\end{lem}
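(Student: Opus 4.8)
The plan is to look at the face of the partition squeezed between $L$ and $L'$ just above $e$ and follow its boundary until it reaches a second edge of $Q$, on which both $L$ and $L'$ must (essentially) terminate.

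First I normalise: without loss of generality $e$ is the bottom edge of $Q$ and $p_L$ lies to the left of $p_{L'}$. Since $L,L'$ touch the horizontal edge $e$ and meet $\int(Q)$, both are vertical segments running upward into $Q$ from $p_L,p_{L'}$. Put $\sigma:=(p_L,p_{L'})\subset e$ and let $C$ be the connected component of $\int(Q)\setminus\bigcup_{M\in\L_{0}\cup\Le}M$ whose closure contains $\sigma$; it is well defined and unique because, by hypothesis, no line of $\L_{0}\cup\Le$ touches $e$ strictly between $p_L$ and $p_{L'}$, so all of $\sigma$ lies on $\partial C$. By Lemma~\ref{l-and-le-properties} the lines $\L_{0}\cup\Le$ are nicely connected, so no line can dangle into $C$; hence $\partial C$ is a closed curve built from alternating maximal sub-segments of lines of $\L_{0}\cup\Le$ and maximal sub-arcs of $\partial Q$ not covered by those lines. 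Orienting $\partial C$ to leave $p_L$ upward along $L$, it runs along a prefix of $L$, makes a (possibly empty) sequence of perpendicular turns, descends along a suffix of $L'$ to $p_{L'}$, and closes along $\sigma$; let $\beta$ be the part of $\partial C$ joining $p_L$ to $p_{L'}$ and avoiding $e$. The target is then: $\beta$ meets some edge $e'\ne e$ of $Q$ along a sub-segment of a line $\bar L$ with $L\cap\bar L\ne\emptyset$, and along a sub-segment of a line $\bar L'$ with $L'\cap\bar L'\ne\emptyset$, for one common $e'$.

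I would then split into cases on how far $L$ and $L'$ reach. If $L$ (or $L'$) crosses $Q$, reaching the edge $e''$ opposite $e$, then $L$ touches $e''$; take $\bar L=L$, $e'=e''$, and argue that $L'$ is forced to reach $e''$ as well, since otherwise the simple closed curve $\partial C$ could not close up between $L$ and $L'$ without an arc of $\partial Q$ on a third edge, incompatible with $\beta$ avoiding $e$ and with the outermost position of the distinguished lines of $\L_{0}$. If instead both $L$ and $L'$ end at interior points $q_L,q_{L'}\in\int(Q)$, I use that the extremal sticking-in lines of every edge of $Q$ were placed into $\L_{0}$ and are therefore maximal: the neighbour of $p_L$ on $e$ bounding $\sigma$ on the left is, among lines of $\L_{0}$ touching $e$, either the leftmost long line or the left extremal sticking-in line, so $L$ is either one of these or is hemmed between such a line and $\partial Q$; in either situation Lemma~\ref{lem:bottom-sticking-in} applied at $q_L$ produces a horizontal line $\bar L\in\L_{0}$ with $L\cap\bar L=\{q_L\}$ that is an extremal sticking-in line for one of the side edges of $Q$, hence touches that side edge, and symmetrically at $q_{L'}$ one gets $\bar L'$ with $L'\cap\bar L'=\{q_{L'}\}$ touching a side edge. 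Finally, since these extremal sticking-in lines are the outermost such lines and bound the single face $C$ from its two sides, the $L$-side and the $L'$-side of $\beta$ cannot escape $Q$ through different side edges, which pins $\bar L$ and $\bar L'$ onto a common edge $e'$ and finishes the proof (the mixed case where exactly one of $L,L'$ crosses $Q$ is handled the same way, with $e'$ forced to be $e''$).

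The main obstacle is exactly this middle step: ruling out that the gap component $C$ is a ``pocket'' touching $\partial Q$ only along $e$, and then certifying that $\beta$ reaches a second edge within a single perpendicular turn of each of $L$ and $L'$ \emph{and} that this second edge is the same on both sides. This is where the construction's insistence on putting the extremal sticking-in lines of every edge of $Q$ into $\L_{0}$ (so that they are maximal) pays off, through Lemma~\ref{lem:bottom-sticking-in}; what remains is a routine but lengthy enumeration of which of the $O(1)$ distinguished lines of $\L_{0}$ (leftmost/rightmost long line, left/right extremal sticking-in line) and which lines of $\Le$ can play the roles of $L$ and $L'$, together with the degenerate sub-cases where $p_L$ or $p_{L'}$ is a corner of $Q$, so that $L$ or $L'$ already lies along a side edge of $Q$.
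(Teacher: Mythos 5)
Your overall strategy --- following the boundary of the gap component $C$ sitting above the segment between $p_L$ and $p_{L'}$, and invoking Lemma~\ref{lem:bottom-sticking-in} at interior endpoints of $L$ and $L'$ --- is the same as the paper's, but the two steps you yourself flag as the crux are exactly where the argument breaks, and in both places the claim you would need is false rather than merely unproved. First, in the mixed case you assert that if $L$ crosses $Q$ then $L'$ is forced to reach the opposite edge $e''$ as well, because otherwise $\partial C$ would need an arc of $\partial Q$ on a third edge. But a component touching $e$ and a \emph{side} edge of $Q$ is a perfectly legitimate configuration --- it is precisely the L-shape case of Lemma~\ref{lem:corridor-shapes}. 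In the paper's treatment of this case, $L'$ ends at an interior point $p'$, the perpendicular line $\tilde{L}'$ at $p'$ runs to the right edge of $Q$, and the common edge $e'$ is that side edge, not $e''$; the witness for $L$ is then not $L$ itself but a sticking-in line $L_R$ for the right edge, which must separately be shown to touch $L$. Your choice $\bar{L}=L$, $e'=e''$ cannot be completed, since no line touching $L'$ need touch $e''$.

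Second, in the case where both $L$ and $L'$ end in $\int(Q)$, you claim the two perpendicular lines produced at their endpoints ``cannot escape $Q$ through different side edges.'' They can: the paper's hardest sub-case (Figure~\ref{fig:corridor-in-cell}c) is exactly the configuration where $\tilde{L}$ touches only the left edge and $\tilde{L}'$ only the right edge. The resolution is not to exclude this configuration but to chase around the face with repeated applications of Lemma~\ref{lem:bottom-sticking-in}: $\tilde{L}$ hits a sticking-in line $L_T$ for the top edge, $L_T$ hits a sticking-in line $L_R$ for the right edge, and $L_R$ is shown to have no possible target other than $L$, so that $\bar{L}:=L_R$ and $\bar{L}':=\tilde{L}'$ share the right edge. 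This chase, together with the verification at each step that the line hit is a sticking-in line for the next edge, is the real content of the lemma and is absent from your proposal. So you have identified the right tools, but both critical cases are resolved by appeal to structural claims that the construction does not satisfy.
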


\begin{proof} First observe that the lemma statement allows that $L=\bar{L}$ or $L'=\bar{L}'$.

Assume w.l.o.g.~that $e$ is the bottom edge of $Q$, $p_L$ is on the left of $p_{L'}$, and that $|L\cap Q| \ge |L'\cap Q|$. 
We have to consider three cases. We start with the most interesting case, where
both $L$ and $L'$ do not cross $Q$. Let $p$ and $p'$ be the endpoints of $L$ and $L'$, respectively, in $Q$. Let $\tilde{L}$ and $\tilde{L}'$ be two lines from $\L_{0} \cup \Le$ perpendicular to $L$ and $L'$ such that $L\cap\tilde{L}=\{p\}$ and $L'\cap\tilde{L}'=\{p'\}$, respectively. From Lemma \ref{l-and-le-properties} such lines exist and they are not contained in $\int(Q)$, i.e., each of them touches an edge of $Q$. If $\tilde{L}$ touches the right edge of $Q$, then $\tilde{L}'$ also touches the right edge of $Q$ and we are done  (see Figure \ref{fig:corridor-in-cell}a). If $\tilde{L}'$ touches the left edge of $Q$, then $|L\cap Q| = |L'\cap Q|$, $\tilde{L} = \tilde{L}'$ and we are done  (see Figure \ref{fig:corridor-in-cell}b).
 
\begin{figure}
\begin{centering}

\centerline{%
\includegraphics[height=0.25\textwidth]{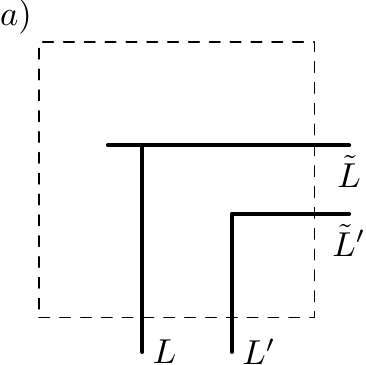}%
\hskip0.05\textwidth
\includegraphics[height=0.25\textwidth]{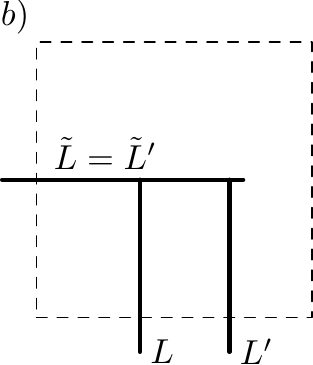}%
\hskip0.05\textwidth
\includegraphics[height=0.25\textwidth]{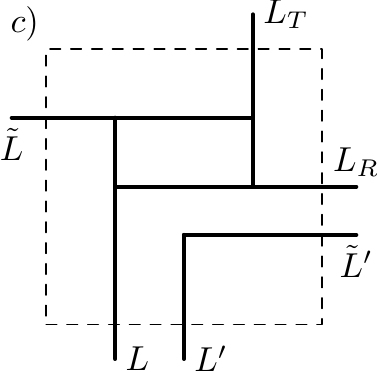}%
}%
\vskip0.03\textwidth
\centerline{%
\includegraphics[height=0.25\textwidth]{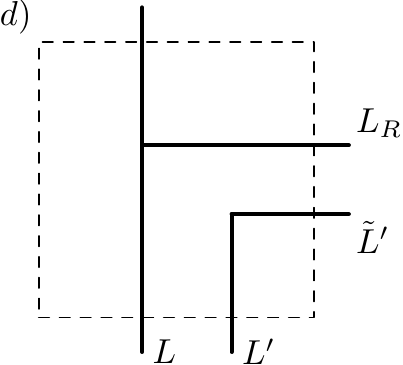}%
\hskip0.05\textwidth
\includegraphics[height=0.25\textwidth]{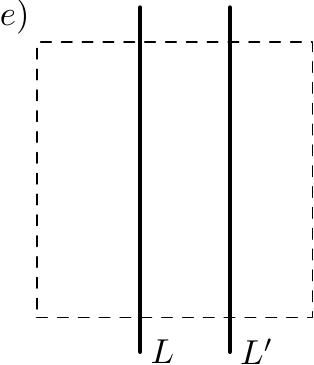}%
}%
\par\end{centering}
\caption{\label{fig:corridor-in-cell} Neighboring lines $L,L' \in \L_{0} \cup \Le$ touch perpendicular lines which intersect the same edge of $Q$.}
\end{figure}

The only remaining possibility is that $\tilde{L}$ touches only the left edge of $Q$, and $\tilde{L'}$ only the right edge of $Q$  (see Figure \ref{fig:corridor-in-cell}c). As there are no edges in $\L_{0} \cup \Le$ intersecting $Q$ and touching $e$ in between $p_L$ and $p_{L'}$, $L$ is a sticking-in line for the edge $e$ of $Q$. From Lemma \ref{lem:bottom-sticking-in} $\tilde{L}$ is the bottom-most sticking-in line for the left edge of $Q$, and it does not end at $p$. $\tilde{L}$ does not touch the right edge of $Q$, and applying Lemma \ref{lem:bottom-sticking-in} to $\tilde{L}$ gives us that $\tilde{L}$ hits a perpendicular sticking-in line $L_T$ in $Q$. $L_T$ does not touch the bottom edge $e$ of $Q$, as we would have $|L_T \cap Q| > |L \cap Q|$, and $L$ is a sticking-in line. $L_T$ touches the top edge of $Q$, and $L_T$ is a sticking-in line for this edge. As $L_T$ ends in $Q$, applying Lemma \ref{lem:bottom-sticking-in} to $L_T$ gives, that it hits a perpendicular (i.e. touching the right edge of $Q$) 
sticking-in 
line $L_R$. Either $L_R = \tilde{L}'$, or $L_R$ is above $\tilde{L}'$, so $L_R$ does not hit $L'$, or any line to the right of $L'$. Applying Lemma \ref{lem:bottom-sticking-in} to $L_R$ gives that $L_R$ hits a perpendicular sticking-in line, and the only candidate for such a line hit by $L_R$ is $L$. $L$ touches $L_R$ and $L_R$ touches the right edge of $Q$, and we are done.

In the second case the line $L$ touches the top edge of $Q$, and $L'$ does not. Let $\tilde{L}' \in \L_{0} \cup \Le$ be the perpendicular line touching $L'$ at its endpoint $p' \in Q$. From Lemma \ref{l-and-le-properties} such line exists and is not contained in $\int(Q)$. As $L$ is to the left of $L'$, $\tilde{L}'$ touches the right edge of $Q$ (see Figure \ref{fig:corridor-in-cell}d). Let $L_R$ be a sticking-in line for the right edge of $Q$. Such line exists, as the maximal line containing $\tilde{L}'$ is a candidate for it. We will show that $L_R$ touches $L$. If $L_R$ crosses $Q$, then $L_R$ must touch $L$ (and either $L$ or $L_R$ goes along an edge of $Q$). If $L_R$ does not cross $Q$, from Lemma \ref{lem:bottom-sticking-in} we get that $L_R$ hits a perpendicular sticking-in line in $Q$. As $L$ is the rightmost line crossing $Q$ (all lines to the right of $L$ cannot exceed $\tilde{L}'$), it is the rightmost sticking-in line and $L_R$ touches $L$. We set $\bar{L}=L_R$.

In the last case, when both $L$ and $L'$ touch the upper edge of $Q$, the claim is immediate (see Figure \ref{fig:corridor-in-cell}e).
\end{proof} 

First we will consider the case when $C$ has non-empty intersection with some block $B \in \B$ contained in $F$. Let $e$ be an edge of $Q$ such that $e \cap \int(B) \neq \emptyset$. Assume w.l.o.g. that $e$ is the bottom edge of $Q$ (see Figure \ref{fig:corridor-in-cell-2}a). Let $L, L' \in \L_{0}\cup\Le$ be lines which intersect $Q$ and touch $e$ at some points $p_L$ and $p_{L'}$, respectively, such that $p_L$ is to the left of $e \cap \int(B)$, $p_{L'}$ is to the right of $e \cap \int(B)$, and no line from $\L_{0}\cup\Le$ which intersects $Q$ touches $e$ in between $p_L$ and $p_{L'}$. Such lines exist, as no line from $\L_{0}\cup\Le$ intersects $e$ inside $e \cap \int(B)$, the leftmost long line intersecting $Q$ and touching $e$ (which belongs to $\L_{0}$) either contains the left edge of $B$ or is to the left of it, and the rightmost long line intersecting $Q$ and touching $e$ (which also belongs to $\L_{0}$) either contains the right edge of $B$ or is to the right of it.

\begin{figure}
\begin{centering}

\centerline{%
\includegraphics[height=0.3\textwidth]{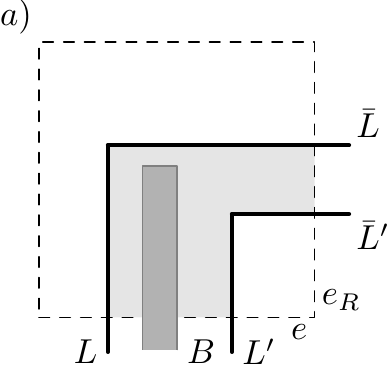}%
\hskip0.05\textwidth
\includegraphics[height=0.3\textwidth]{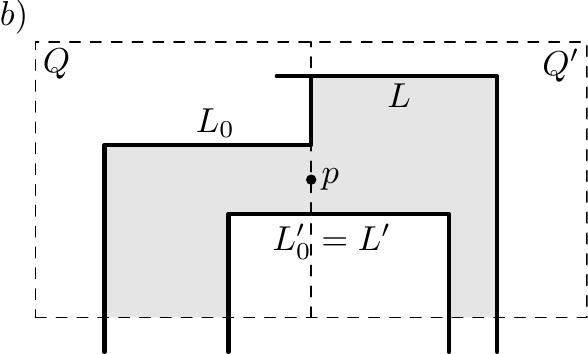}%
}%

\par\end{centering}
\caption{\label{fig:corridor-in-cell-2} A connected component of a face $F \in \F_+(\L_{0} \cup \Le)$ within a grid cell (denoted by a shaded area) must have a simple shape, i.e., it is either a rectangle or an L-shape.}
\end{figure}
  
Parts of the lines $L, L'$ lie on the boundary of $C$. Denote by $\bar{L}$ and $\bar{L}'$ the lines given by applying Lemma~\ref{lem:faces-line-continuation} to $L$ and $L'$. If $\bar{L}$ and $\bar{L}'$ both intersect the top edge of $Q$ then the claim follows and in particular $\int(C)$ is the interior of a rectangle. Otherwise, assume w.l.o.g.~that they intersect the right edge $e_{R}$
of $Q$ and assume w.l.o.g.~that $\bar{L}$ is the bottommost line
touching $L$ and $e_{R}$ and $\bar{L}'$ is the topmost line
touching $L'$ and $e_{R}$. From Lemma \ref{l-and-le-properties} 
the set of lines $\L_{0} \cup \Le$ is nicely connected, and by construction, all lines in $\L_0 \cup \Le$ with non-empty intersection with
$\int(Q)$ for some grid cell $Q$ touch the boundary of Q.
Hence, there can be no line in $\L_{0}\cup\Le$ intersecting
$e_{R}$ between $e_{R}\cap\bar{L}$ and $e_{R}\cap\bar{L}'$. Hence, the
claim follows.

We already know that the lemma holds for any connected component $C$ of $F \cap Q$ for any grid cell $Q$ such that $C$ has non-empty intersection with $\int(B)$ for some block $B \in \B$ contained in $F$. Now we will show that if the lemma is satisfied for some connected component $C$ of $F \cap Q$, then it is also satisfied for a connected component $C'$ of $F \cap Q'$ if $C \cap C' \neq \emptyset$ (see Figure \ref{fig:corridor-in-cell-2}b). That will prove the lemma.

Let $e=Q \cap Q'$, and let $p \in C \cap C'$. Let $L_0, L'_0 \in \L_{0} \cup \Le$ be lines bounding $C$ and touching $e$. From the construction above we know that such lines exist. We want to show that there are two lines $L, L' \in \L_{0} \cup \Le$ intersecting $Q'$ and touching $e$ such that $p \in L(e \cap L, e \cap L')$ and there are no lines intersecting $Q'$ and touching $e$ in $L(e \cap L, e \cap L')$. Then, by proceeding exactly as in the first case, we prove the lemma.

Assume w.l.o.g. that $e$ is a vertical edge and that $L_0$ is above $L'_0$. We show only that there is a line $\bar{L}$ intersecting $Q'$ and touching $e$ above $p$ (with a similar reasoning one can show that there is a line intersecting $Q'$ and touching $e$ below $p$). Let $\bar{L}_0$ be the top-most horizontal line in $\L_0$ intersecting $Q$ and touching $e$. We will first show that such line exists and it is not below $L_0$. If $L_0 \in \L_{0}$, then either $\bar{L}_0 = L_0$ or $\bar{L}_0$ is above $L_0$ and we are done. Otherwise $L_0 \in \Le$ and, from the construction of $\Le$, $L_0$ goes along a long edge of a block, and the maximal line $L_{max}$ containing $L_0$ which does not intersect blocks and lines from $\L_0$ is long. As $L_{max} \notin \L_{0}$, there must be a line $\bar{L}_0 \in \L_{0}$ which is above $L_{max}$.

If $\bar{L}_0$ intersects $Q'$, we set $\bar{L}:=\bar{L}_0$ and we are done. Otherwise, observe that $\bar{L}_0 \in \L_0$ and hence it is maximal. We claim that $\bar{L}_0$ hits a perpendicular line $L_T \in \L_0$ at $\bar{L}_0 \cap e$ (and, as the name suggests, it will turn out that $L_T$ crosses the top edge of $Q'$). 
As $\bar{L}_0$ is maximal, it has to hit a perpendicular block $B$ or a perpendicular line from $\L_{0} \cup \Le$ at its endpoint in $Q'$. If the first case occurs, then the maximal long line going along the left edge of $B$ is the leftmost long line intersecting the top edge of $Q'$, and it belongs to $\L_{0}$. We get that $\bar{L}_0$ hits a perpendicular line $L_T \in \L_{0} \cup \Le$ at its endpoint in $Q'$. We cannot have $L_T \in \Le$, as the maximal long line containing $L_T$ is the leftmost long line intersecting the top edge of $Q'$, and so it belongs to $\L_{0}$. We have that  $L_T \in \L_{0}$.

As $p\in C\cap C'$ we have that $p \notin L_T$. Thus, $L_T$ ends
above $p$ by hitting 
a perpendicular line $\bar{L} \in \L_{0} \cup \Le$ or
a perpendicular block $B'$.
In the first case we are done. In the second case,
the topmost long line $\tilde{L}$ intersecting $Q'$ and touching $e$ contains the upper edge of $B'$ and crosses $e$ above $p$. Hence, we can set $\bar{L}:=\tilde{L}$ and we are done.
\end{proof}

\begin{proof}[Proof of Lemma~\ref{lem:corridor-at-boundaries}]
Let $p \in C\cap e$. Let $C'$ be a connected component of $F \cap Q'$ containing $p$. We get that $C \cap C' \neq \emptyset$, and so there exists a component $C'$ which satisfies the desired properties.

\begin{figure}
\begin{centering}
\includegraphics[height=.3\textwidth]{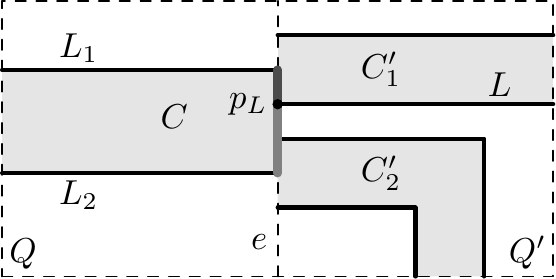}
\par\end{centering}

\caption{\label{fig:no-forks} Lemma~\ref{lem:corridor-at-boundaries} shows that it cannot happen that a connected component $C$ of $F \cap Q$ "forks" into two connected components $C_1'$ and $C_2'$ in a neighboring grid cell $Q'$, as shown in this figure. In particular, one of the gray segments on $e$ must be contained in $\L_{0} \cup \Le$.}
\end{figure}

We now show that $C'$ is unique. Assume otherwise, i.e. that there are two connected components $C_1'$ and $C_2'$ of $F \cap Q'$ which have non-empty intersection with $C$. Let $L \in \L_{0} \cup \Le$ be a line intersecting $Q'$, such that $L \cap e$ is between $C_1' \cap C$ and $C_2' \cap C$ (see Figure \ref{fig:no-forks}). Such line exists, as $C_1' \cap Q'$ and $C_2' \cap Q'$ are not connected. Let $L_1, L_2 \in \L_{0} \cup \Le$ be the lines intersecting $Q$, touching $e$ and bounding $C \cap Q$. Then there is no line $L_3 \in \L_{0} \cup \Le$ which intersects $Q$ and touches $e$ between $L_1 \cap e$ and $L_2 \cap e$. As $L \cap e$ is between $C_1' \cap C$ and $C_2' \cap C$, it holds that $L \cap e$ is between $L_1 \cap e$ and $L_2 \cap e$. That gives us that $L$ cannot intersect $Q$, and so $L$ has an endpoint $p_L \in e$. We will now show that 
one of the segments $L[p_L,L_1 \cap e]$ and $L[p_L,L_2 \cap e]$ is contained in $\L_{0} \cup \Le$,
which gives contradiction, as it requires $C_1' \cap C = \emptyset$ 
or $C_2' \cap C = \emptyset$. Hence, the component $C'$ is unique.

We will now show the following lemma.

\begin{lem}\label{lem:l-on-grid-boundary}
Let $Q$ and $Q'$ be two neighboring grid cells and let $e = Q \cap Q'$. Let $s \subseteq e$ be a maximal segment of $\L_{0} \cup \Le$ contained in $e$, and assume that $s$ does not contain any endpoint of $e$. Then $s$ is incident with lines $L, L' \in \L_{0} \cup \Le$ (where possibly $L = L'$) such that $L$ intersects $Q$ and $L'$ intersects $Q'$.
\end{lem}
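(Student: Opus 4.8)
The plan is to reduce everything to the two endpoints of $s$. Assume w.l.o.g.\ that $e$ is a vertical edge with $Q$ to its left and $Q'$ to its right, and write $s=\{a\}\times[c_{1},c_{2}]$; by hypothesis $b<c_{1}<c_{2}<b+\delta N$, where $[b,b+\delta N]$ is the common vertical extent of $Q$ and $Q'$. Let $p=(a,c_{2})$ and $q=(a,c_{1})$ be the two endpoints of $s$. First I would observe that any line of $\L_0\cup\Le$ that is incident with $s$ and also intersects $\int(Q)$ (respectively $\int(Q')$) must be \emph{horizontal}: a vertical line on the grid line $x=a$ enters neither $\int(Q)$ nor $\int(Q')$, and since no two lines of $\L_0\cup\Le$ overlap, the only remaining possibility is a horizontal line touching $e$ at a point of $s$ and extending left into $Q$ (respectively right into $Q'$); such a line really does enter the corresponding open cell because every point of $s$ has $y$-coordinate strictly between $b$ and $b+\delta N$.

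Next I would invoke that $\L_0\cup\Le$ is nicely connected (Lemma~\ref{l-and-le-properties}). The topmost sub-segment of $s$ lying in some line of $\L_0\cup\Le$ has $p$ as an endpoint, so there is a horizontal line $W_{p}\in\L_0\cup\Le$ meeting that vertical line exactly in $\{p\}$; symmetrically there is a horizontal $W_{q}$ at $q$. Each of $W_{p},W_{q}$ either crosses $e$ at its point, or extends into exactly one of $\int(Q)$, $\int(Q')$. This settles the easy cases: if $W_{p}$ or $W_{q}$ crosses $e$ we take $L=L'$ to be that line; if $W_{p}$ and $W_{q}$ extend into opposite cells we take one for $L$ and the other for $L'$. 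The remaining (hard) case is, up to the symmetry $Q\leftrightarrow Q'$, that $W_{p}$ and $W_{q}$ both extend only into $\int(Q)$, and I must then produce a line of $\L_0\cup\Le$ incident with $s$ reaching into $\int(Q')$.

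For the hard case the plan is to analyze \emph{why} $s$ cannot be prolonged upward past $p$ along $e$. Since $p$ lies in the relative interior of $e$, Lemma~\ref{lem:line-hits-block} forbids the degenerate situation in which a line of $\L_0$ ending at $p$ fails to hit a perpendicular line there (its conclusion $p\in\int(\cdot)$ is impossible for a point on a grid edge); hence the topmost constituent line of $s$, if it lies in $\L_0$, must hit a horizontal line of $\L_0\cup\Le$ at $p$, and by the symmetric argument the same holds at $q$. Combining this with the selection rules for $\L_0$ applied to $e$ regarded as an edge of $Q'$ --- the extremal long lines and the extremal sticking-in lines for $e$ in $Q'$ belong to $\L_0$ --- together with Lemma~\ref{lem:bottom-sticking-in}, which propagates a sticking-in line around the corners of a cell, I would show that the horizontal line supporting the top (or the bottom) of $s$ is in fact a long line that either crosses $e$ or sticks into $Q'$, contradicting the hard-case assumption and supplying the desired $L'$.

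The step I expect to be the main obstacle is exactly this hard case: ruling out that the face on the $Q'$-side of $s$ wraps around both endpoints of $s$ without any line of $\L_0\cup\Le$ incident with $s$ poking into $\int(Q')$. Making this rigorous requires careful bookkeeping of which line (or which block edge) of the construction terminates the constituent lines of $s$ at $p$ and at $q$, and a double application of the extremal/sticking-in rules of $\L_0$ to the two cells sharing $e$; the rest of the argument is routine.
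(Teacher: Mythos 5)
Your reduction to the two endpoints $p,q$ of $s$ is fine as far as it goes (the sought lines must indeed be perpendicular to $e$, and nice-connectedness gives you perpendicular lines $W_p,W_q$ at the endpoints of the topmost and bottommost constituent lines of $s$), and the easy cases are correctly dispatched. But the ``hard case'' you isolate is the entire content of the lemma, and your sketch for it has a genuine gap -- in fact it aims at the wrong target. If $s$ is built from several of the shortcut segments $s_i^j$, the line reaching into $\int(Q')$ may be incident with $s$ only at an \emph{interior} point of $s$ (the shared endpoint of two consecutive shortcut segments, or the crossing point $p_i^0$ of a truncated path line), while both $W_p$ and $W_q$ genuinely extend only into $\int(Q)$. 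So the statement you propose to prove in the hard case -- that $W_p$ or $W_q$ itself crosses $e$ or sticks into $Q'$ -- is not what you need and is not true in general. Moreover, the tools you cite do not launch: the extremal/sticking-in rules and Lemma~\ref{lem:bottom-sticking-in} apply to lines that maximize penetration into a cell among lines touching a given edge, whereas $W_p,W_q$ may be short lines of $\Le$ running along block edges with no extremality property, and Lemma~\ref{lem:line-hits-block} presupposes a line with one endpoint outside the cell that avoids all blocks and $\L_0$-lines, which is not something you have established for the constituents of~$s$.

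The missing idea is the structural fact the paper starts from: since every line of $\L_0$ is long (length $>\delta N$) and maximal, and the non-shortcut lines of $\Le$ follow long block edges, a maximal covered segment $s\subseteq e$ avoiding both endpoints of $e$ must consist \emph{entirely} of shortcut segments $s_i^j$. The paper then orders these by the time they were added to $\Le$ and looks at the first one, $s_{i_0}^{j_0}$: by construction one of its ends sits at the point where a (truncated) path line meets $e$, and that path line intersects one of the two cells for free; the other end abuts a perpendicular line $L'$, and a case analysis on whether $L'\in\Le$ (it would have to be a shortcut added \emph{earlier}, contradicting minimality) or $L'\in\L_0$ (it would have to hit a perpendicular block whose long edge forces an extremal line of $\L_0$ through that point, a contradiction) shows $L'$ must enter the opposite cell. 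Without this decomposition of $s$ into time-ordered shortcut segments and the anchoring of each at a path-line crossing, I do not see how to close your hard case.
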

\begin{proof}
From the construction of $\L_{0}$ and $\Le$, the segment $s$ consists of one or multiple segments $s_i^j$, as any other line from $\L_{0} \cup \Le$ would touch an endpoint of $e$. Let $s_{i_0}^{j_0} \subseteq s$ be the first one added to $\Le$. W.l.o.g. assume that in the construction of the path, the line $L \in \Le$ preceding $s_{i_0}^{j_0}$ on the path intersects $Q$. Let $L' \in \L_{0} \cup \Le$ be the line perpendicular to $e$ which is at the other end of $s_{i_0}^{j_0}$. If $L'$ intersects $Q'$, we are done. Assume, for contradiction, that $L'$ intersects $Q$ and has an endpoint at $p' \in e$. We have that $L'$ does not hit a perpendicular line from $\L_{0} \cup \Le$ at $p'$ as otherwise such a line would be contained in the leftmost long line crossing the top or the bottom edge of $Q'$ which are by definition in $\L_0$ and the segment $s_{i_0}^{j_0}$ has by construction non-empty intersection with all lines in $\L_{0} \cup \Le$.
Assume that $L' \in \Le$. From the construction of $\Le$, in particular from Lemma \ref{lem:line-hits-block}, we get that the successor of $L'$ on the path must be a segment $s_i^j$ (as if $L'$ was not "shortened", it would end in the interior of a grid cell). Then $s_{i_0}^{j_0}$ is neighboring to the segment $s_{i}^{j}$, which has been added to $\Le$ before $s_{i_0}^{j_0}$. As then $s_{i}^{j} \subseteq s$ from the maximality of $s$, we get a contradiction, as  $s_{i_0}^{j_0}$ was the first segment from $s$ added to $\Le$.

Assume that $L' \in \L_{0}$. Then, as $L'$ is maximal and it does not hit a perpendicular line from $\L_{0} \cup \Le$ at $p'$, $L'$ hits a perpendicular block $B \in \B$ at $p'$. The long edge of $B$ containing $p'$ is then an extremal long line crossing an edge of $Q$ perpendicular to $e$, and do it belongs to $\L_{0}$. We get a contradiction, as $L'$ does not hit a perpendicular line from $\L_{0}$ at $p'$.

We obtain that $L'$ intersects $Q'$, which proves the lemma.
\end{proof}

As $p_L$ is an endpoint of $L$, $L$ touches a perpendicular line from $\L_{0} \cup \Le$ at $p_L$. 
Let $s$ be a maximal segment of $e$ containing $p_L$ and contained in $\L_{0} \cup \Le$. From Lemma \ref{lem:l-on-grid-boundary} segment $s$ contains an endpoint of $e$, or $s$ is incident with a line intersecting $Q$. In any case, one of the segments $L[p_L,L_1 \cap e]$, $L[p_L,L_2 \cap e]$ is contained in $s$, i.e. it is contained in $\L_{0} \cup \Le$.
\end{proof}

\begin{proof}[Proof of Lemma~\ref{lem:maze-properties}]
From Proposition \ref{prop:l-size} we get that $|\L_{0}| = O((\frac{1}{\delta})^{2})$. From Lemma \ref{l-and-le-properties} $|\Le|\le\frac{1}{\eps}\cdot(\frac{1}{\delta})^{O(1)}$.
For any line $L \in \L_{0} \cup \Le$ the number of rectangles intersected, but not cut by $L$ is at most $2$. As the length of $L$ is at most $N$, the number of rectangles $R$ cut by $L$ such that $|L \cap R| > \delta N$ is smaller than $1/\delta$. The number of circumvented polygons is therefore at most $\frac{1}{\eps}\cdot(\frac{1}{\delta})^{O(1)}$.
Circumventing a polygon generates four lines in $\L$, and possibly splits some lines from $\L_{0}\cup\Le$ in two. The number of lines in $\L$ is at most $\frac{1}{\eps}\cdot(\frac{1}{\delta})^{O(1)}$.

The added lines do not intersect rectangles. The only rectangles intersected by a line in $\L$ are the rectangles cut by a line from $\L_{0}\cup\Le$ parallel to their shorter edge. From Lemma \ref{l-and-le-properties} the total weight of such rectangles is upper bounded by $\eps\cdot w(\R)$.
\end{proof}

\begin{figure}
\centerline{\includegraphics{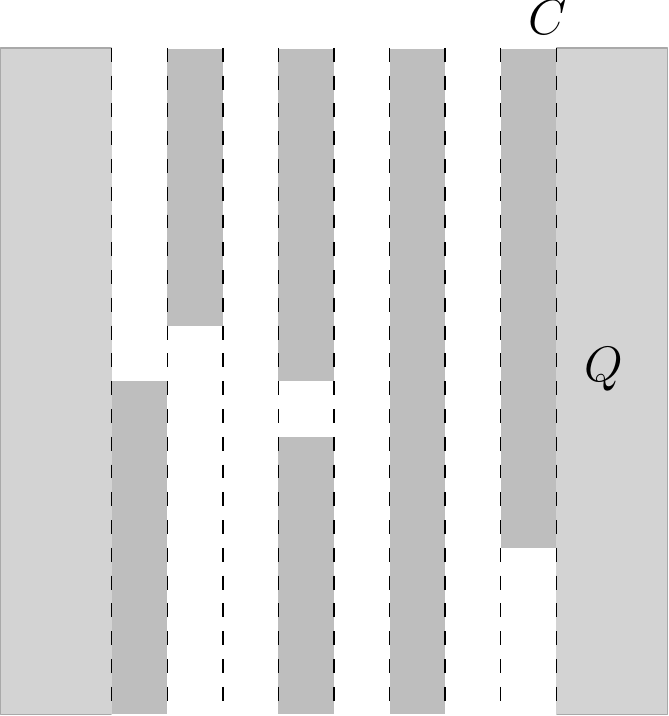}\hskip0.05\textwidth\includegraphics{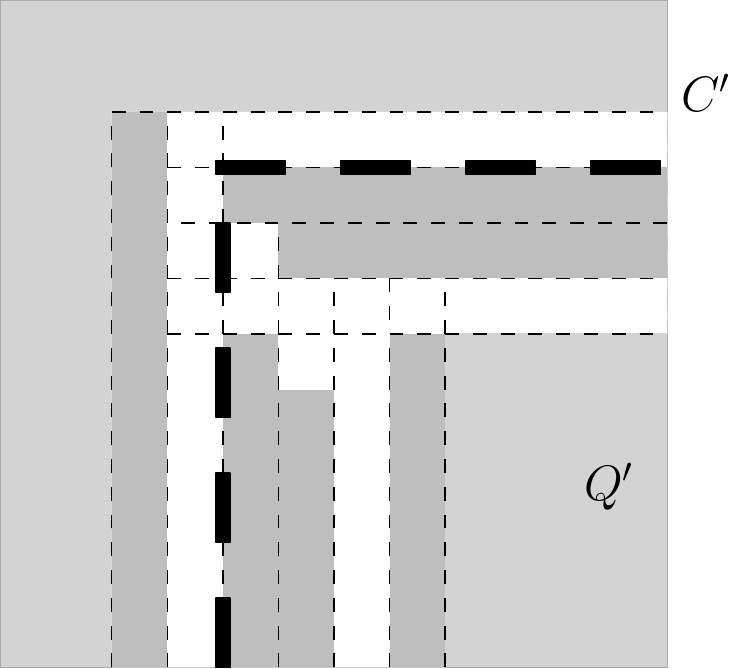}
}

\caption{\label{fig:slice-corridors}Left: the dashed lines denote the lines in the set $\L(C)$ for the
component $C$ in the cell~$Q$. Right: the dashed lines represent
a ground set from which the pairs in the set $\L(C)$ are created.
The bold pair of lines represents an example of an element in $\L(C)$.}
\end{figure}

\begin{proof}[Proof of Lemma~\ref{lem:dp-for-paths}]
We first prove the lemma for the setting that each rectangle is a
block, i.e., for each $R_i$ we have $h_{i}=1$ or $g_{i}=1$. Consider
a face $F \in \F_+(\L)$ which is homeomorphic to a line. For each grid cell
$Q$ and for each connected component $C$ of $F\cap Q$ we will define a set of lines $\L(C)$. Due to Lemma~\ref{lem:corridor-shapes}
the component $C$ is either a rectangle or an L-shape (note that here we can have a "degenerated L-shape" with only four edges). If $C$ is
a rectangle, we consider any edge $e$ of $Q$ such that $C \cap e \neq \emptyset$, and we define $\L(C)$ to be the set of all maximal
lines perpendicular to $e$, with integer endpoints, which are contained in $C$ or in the boundary of $C$ (see
Figure~\ref{fig:slice-corridors}). Observe that if all rectangles are blocks,
the lines in $\L(C)$ do not intersect any rectangles. Now consider
the case when $C$ is an L-shape and let $e,e'$ be the (perpendicular) edges of $Q$
such that $C\cap e\ne\emptyset\ne C\cap e'$. (Notice that we can have a special case that $C$ intersects only one edge $e$ of $Q$, i.e., the face $F$ does not extend beyond $C$. However, then we take as $e'$ the edge of $Q$ perpendicular to $e$ which contains the boundary edge of $C$ perpendicular and non-adjacent to $C \cap e$. Intuitively, that is the edge where the face would continue beyond $C$ if it was not blocked by a line from $\L$.) We define $\L(C)$ to
be the set of all pairs of straight lines $(L,L')$ with integer endpoints contained
in $C$, which do not intersect any blocks, and such that $L$ is perpendicular to $e$ and $L'$ to $e'$, and there is a point $p$ such that $L$ has one endpoint at $e$ and the other one at $p$, and $L'$ has one endpoint at $e'$ and the other one at $p$ (see Figure~\ref{fig:slice-corridors}). 

Next, we define a family of faces $\F(F)$ contained in the face $F$. Each face $F'\in\F(F)$
will have bounded complexity (i.e., at most $\frac{1}{\eps}\cdot(\frac{1}{\delta})^{O(1)}$
boundary lines). We will construct $\F(F)$ in such a way that 
$F \in \F(F)$ and each
face in $\F(F)$ which is not a rectangle of unit height or width 
can be decomposed into a bounded number of subfaces
in $\F(F)$ without intersecting any block. During the recursion of
GEO-DP, when parametrized with large enough $k$, the algorithm will consider exactly this decomposition.
Thus, GEO-DP optimally solves the subproblem induced by the face $F$.

\begin{figure}
\centerline{\includegraphics{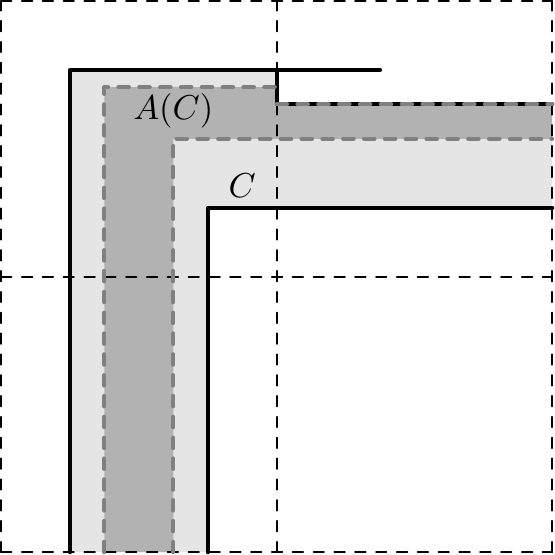}
}
\caption{\label{fig:subfaces}An element of $\F(F)$ (dark gray) for a face $F$ (light gray). The element of $\F(F)$ is a union of subareas $A(C)$, one for each connected component $C$ of $F$ within a grid cell. The chosen subareas $A(C)$ for different components $C$ are consistent.}
\end{figure}

Now we define the family $\F(F)$. For any grid cell $Q$ and for any connected component $C$ of $F \cap Q$
we select two elements from the set of lines $\L(C)$, allowing to select the same
element twice,
but not allowing the two chosen elements to intersect properly. Let $A(C)$ denote the subarea of $C$ which is strictly between
the two selected elements from $\L(C)$ (i.e., the chosen elements from $\L(C)$ do not belong to $A(C)$), see Figure~\ref{fig:subfaces}. 
We require
the selected elements for the different components to be \emph{consistent}, meaning that for any two connected components $C, C'$ of $F$ within some grid cells such that $C \cap C' \neq \emptyset$ we have 
$A(C) \cap C \cap C' = A(C') \cap C \cap C'$
(i.e., the subareas chosen for different grid cells match at the boundaries of the grid cells), and $\bigcup_{C}A(C)$ is connected.
For any choice of consistent elements for all
connected components $C$
we add $F':=\bigcup_{C}A(C)$ to $\F(F)$.
Due to the definition of $\F(F)$, every
face in $\F(F)$ which is not a rectangle of unit height or width can be decomposed in two disjoint elements of $\F(F)$. Every face in $\F(F)$ which is a rectangle of unit height or width contains at most $\frac{1}{\delta}$ blocks, and GEO-DP finds an optimal solution for it.

The decomposition can be seen as follows. Let $F'\in \F(F)$. Take a component $C$ of $F'$ within a grid cell $Q$ which is a dead-end, i.e, there is only one edge $e$ of $Q$ such that $C\cap e \ne \emptyset$. Draw a line $L_0$ splitting $C$ into two components without intersecting any block, such that one end of the line touches the boundary of $F'$ (if $C$ is an L-shape we might need two lines for that). We take the loose end of $L_0$ and extend it until we touch the boundary of $F'$ or we hit a perpendicular block. In case we touch the boundary of $F'$, we are done. If $L_0$ hits a block $B$ at some point $p$, we continue similarly as in case of loose ends of the lines in $\L_{0}$. 
We draw a new line $L_1$ starting at $p$ and following the edge of $B$ so that we cross a grid cell boundary. We continue iteratively until we draw a line $L_m$ whose end touches the boundary of $F'$. Then the lines $L_0,...,L_m$ define a path splitting $F'$ into two subpaths. 
See Figure~\ref{fig:slice-corridors-cycles}.

It remains to upper bound the complexity of the faces in $\F(F)$, i.e.,
bound the number of their boundary edges. By definition, the latter quantity
is in the order of the number of connected components of $F\cap Q$
for all grid cells $Q$. The reason is that for each face $F'\in\F(F)$
the boundary of a connected component of $F'\cap Q$ (for any grid cell $Q$) has only a constant number edges.
As the number of lines in $\L$ is upper bounded by $\frac{1}{\eps}\cdot(\frac{1}{\delta})^{O(1)}$ (see Lemma~\ref{lem:maze-properties})
and there are $O((\frac{1}{\delta})^{2})$ grid cells, the number
of such components $C$ is upper bounded by $\frac{1}{\eps}\cdot(\frac{1}{\delta})^{O(1)}$. 

We conclude that if all rectangles are blocks, GEO-DP finds an optimal
solution for $F$ if $k\ge\frac{1}{\eps}\cdot(\frac{1}{\delta})^{O(1)}$.
For the case of arbitrary rectangles we observe that the boundary
of each face in $\F(F)$ intersects rectangles from $\R$ contained in $F$ only parallel
to their longer edges. Hence, for any face $F'\in\F(F)$ there are only
$\frac{1}{\eps}\cdot(\frac{1}{\delta})^{O(1)}$ rectangles $\R(F')$
which are contained in $F$ and intersected by the boundary edges of $F'$. Hence, whenever in the above argumentation
we decompose a face $F'\in\F(F)$ into two disjoint elements $F_{1},F_{2}\in\F(F)$,
we can instead argue that the face $F'\setminus\R(F')$ is decomposed
into the faces $F_{1}\setminus\R(F_{1})$ and $F_{2}\setminus\R(F_{2})$
and the at most $\frac{1}{\eps}\cdot(\frac{1}{\delta})^{O(1)}$ rectangles
$(\R(F_{1})\cup\R(F_{2}))\setminus\R(F')$. Also, again each face $F\setminus\R(F)$
has a boundary with at most $\frac{1}{\eps}\cdot(\frac{1}{\delta})^{O(1)}$
edges. 
Note that in this case $F\setminus\R(F)$ can consist of multiple (at most $\frac{1}{\eps}\cdot(\frac{1}{\delta})^{O(1)}$) 
connected components and GEO-DP recurses on each of them separately.
Hence, in the case of arbitrary rectangles, GEO-DP parametrized with $k\ge\frac{1}{\eps}\cdot(\frac{1}{\delta})^{O(1)}$ finds
an optimal solution for $F$.
\end{proof}

\begin{proof}[Proof of Lemma~\ref{lem:dp-for-cycles}]

Like in the proof of Lemma~\ref{lem:dp-for-paths} above, let us
first assume that all rectangles are blocks. Consider a face
$F\in\F_{+}(\L)$ which is homeomorphic to a cycle. 
We now describe a procedure to draw a path consisting of a set of lines
$L_{0},L_{1},...,L_{\ell}$, which will either add further structure to $F$ so that it
becomes a path-face
or will subdivide $F$ into a cycle-face and a path-face.
The endpoints of the lines will have integer coordinates, and their number $\ell$ can be arbitrarily large (in particular, larger than $\frac{1}{\eps}\cdot(\frac{1}{\delta})^{O(1)}$ etc.). 
The lines will not intersect any blocks contained in $F$.

\begin{figure}
\begin{centering}
\includegraphics[height=0.25\textwidth]{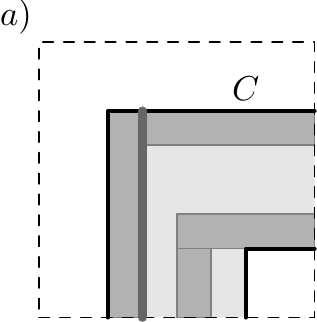} \hskip0.05\textwidth
\includegraphics[height=0.25\textwidth]{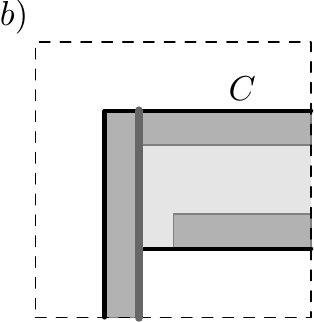} 
\par\end{centering}

\caption{\label{fig:slice-cycle-step-one}Subdividing an L-shape into a
rectangle and an L-shape (case $a)$) or into two rectangles (case
$b)$). The splitting line $L$ is depicted in dark gray. Notice that $L$ does not intersect any blocks (pictured as gray rectangles).}
\end{figure}

Consider
an arbitrary connected component $C$ of $F \cap Q$ for some grid cell $Q$, such that $C$ is an L-shape. Such a component
exists since otherwise $F$ would not be a cycle. It is always possible
to draw a line $L$ within $C$ which does not intersect any blocks
and which subdivides $C$ into a rectangle and an L-shape or into two rectangles (see Figure \ref{fig:slice-cycle-step-one}). In the second case the line $L$ transforms $F$ into a path-face and we are done. Now consider the first case. We define $L_{0}$ as a maximal line containing $L$
which does not
intersect any block and which does not cross the boundary of $F$ or overlap an edge from the boundary of $F$. If both endpoints of $L_{0}$ touch the boundary
of $F$, we are done. Otherwise, we continue very similarly as
in the construction of the lines $\Le$. Let $p_{0}$ be an endpoint
of $L_{0}$ which hits a perpendicular block $B \in \B$. Observe that this
must happen within a connected component $C'$ of $F$ where $C'$
is an L-shape. As $B$ is large, one of its endpoints must lie outside of
$C'$. Denote by $L_{\max}$ the maximal line within $F$ containing
$p_{0}$ and going along the edge of $B$ which does not intersect any blocks. Denote by $p_{1}$
the endpoint of $L_{\max}$ which is outside of $C'$. We define $L_{1}:=L[p_{0},p_{1}]$
and continue iteratively. Observe that in contrast to the definition
of the lines $\Le$, we do not stop after some fixed number of iterations,
but continue until the added line $L_{\ell}$ either touches the boundary
of $F$, or hits some line $L_{i}$ with $i<\ell$. Let $\L_{F}:=\{L_{0},...,L_{\ell}\}$.
If the path touches both boundaries of the cycle-face $F$, it transforms $F$ into a path-face. Otherwise (i.e., if the path touches the same boundary twice or ends by hitting itself), the path subdivides $F$ into a path-face $F'$ and a cycle-face $F''$.

As in the proof
of Lemma~\ref{lem:dp-for-paths}, we can formulate $F''$ as the union
of sets $A(C)$, where for each component $C$ we select two elements
from the set $\L(C)$. For each component $C$ we do this as follows.
If $C$ is not intersected by the path $\L_{F}$, we choose the two elements of $\L(C)$ going along the boundaries of $C$. Otherwise, if 
$L_\ell$ does not end in $C$, we select the element in $\L(C)$ which is given by the last line(s) in $\L_F$ crossing $C$ 
and one element $\L(C)$ which describes the boundary of $C$ not touched by $\L_{F}$.
If $L_\ell$ ends in $C$, we also select the boundary of $C$ not touched by $\L_{F}$, and additionally an element in $\L(C)$ which consists of $L_\ell \cap C$ and a segment of of $L_i \cap C$ where $L_i$ denotes the penultimate line in $\L_{F}$ crossing $C$.
With a similar reasoning as in the proof of
Lemma~\ref{lem:dp-for-paths}, we can then upper bound the complexity of
the boundaries of $F''$ by $\frac{1}{\eps}\cdot(\frac{1}{\delta})^{O(1)}$.
See Figure~\ref{fig:slice-cycle-faces-many-paths} for an example of the described operation.

\begin{figure}
\centerline{\includegraphics[scale=0.7]{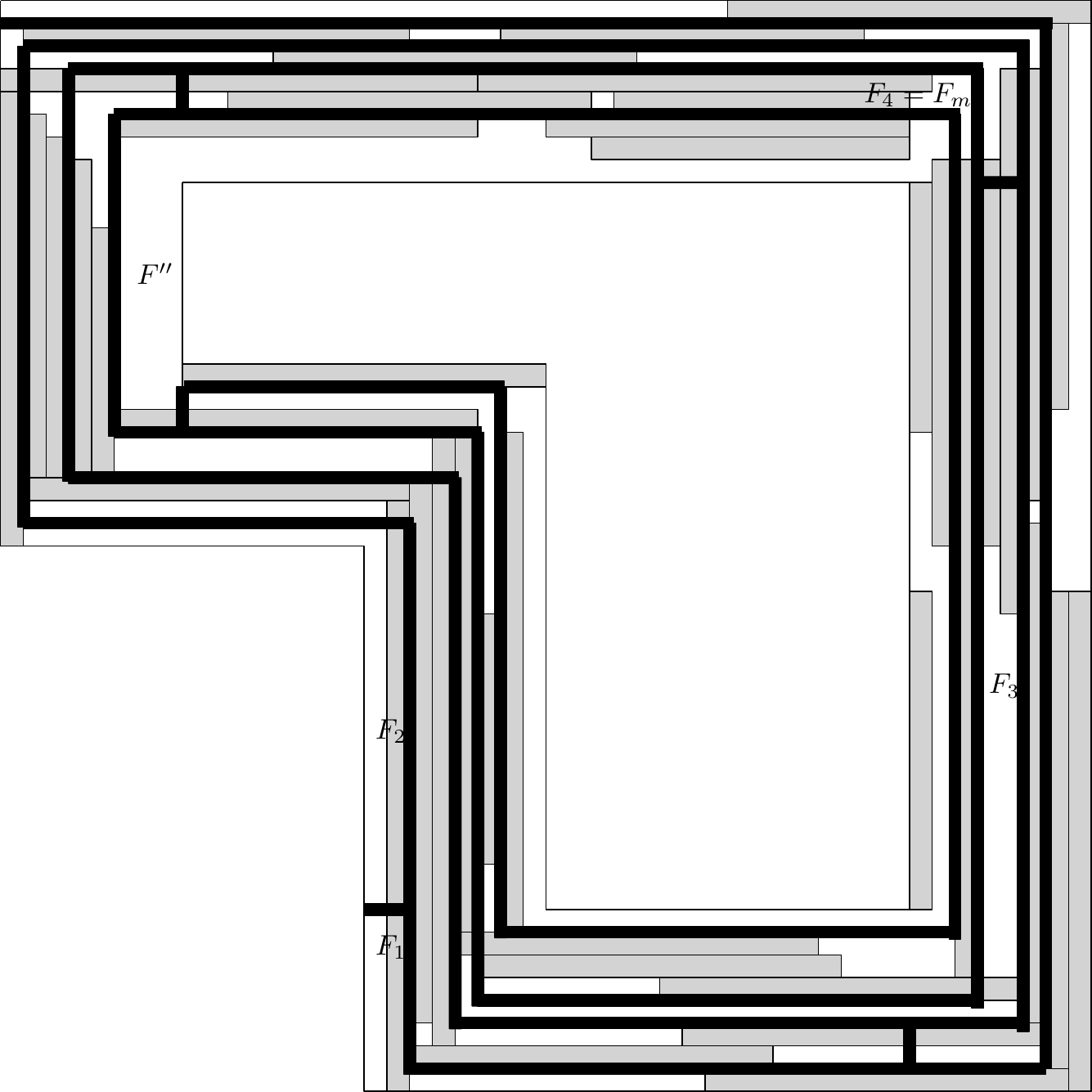}
}
\caption{\label{fig:slice-cycle-faces-many-paths}Partitioning a cycle-face into a set of paths $F_{1},...,F_{m}$ and a cycle face $F''$.}
\end{figure}

Now let us focus on $F'$. 
In case the boundary of $F'$ is more complex than our upper bound of
$\frac{1}{\eps}\cdot(\frac{1}{\delta})^{O(1)}$ edges allows,
we split $F'$ into a set of consecutive paths $F_{1},...,F_{m}$, each of them
having a boundary whose complexity is at most $\frac{1}{\eps}\cdot(\frac{1}{\delta})^{O(1)}$.
We perform these necessary cuts along grid lines. 
Each connected component of $F' \cap Q$ is a rectangle or an L-shape (where we treat a rectangular component $C$ within a cell $Q$ like an L-shape if $C\cap e \ne \emptyset \ne C \cap e'$ for two perpendicular
edges $e,e'$ of $Q$).
Let $C_1,\ldots,C_{m'}$ be the set of all consecutive connected components of $F'$ within single grid cells such that $C_i$ is an L-shape. For each $1/\eps$-th component $C_i$, starting with a random offset, we cut $F'$ inside $C_i$, along a grid cell boundary (intuitively, we cut $F'$ after every $1/\eps$ bends). There is an offset for which the total weight of intersected blocks is at most
a $\eps$-fraction of the total weight of the blocks in $F'$. 
Each resulting path $F_{i}$ can be expressed as the union of a set
of at most $O(1/(\eps \delta))$ components, where each component is a rectangle or an L-shape within some component $C$ of $F$, and therefore the complexity of $F_{i}$ is at most $\frac{1}{\eps}\cdot(\frac{1}{\delta})^{O(1)}$.

One can show, in the same way as when upper bounding the complexity of the boundary of $F''$, that for any $m'\le m$ the total boundary of each set
$F_{m'}\cup\partial F_{m'}\cup F_{m'+1}\cup\partial F_{m'+1}\cup \ldots \cup F_{m}\cup\partial F_{m} \cup F'' \cup \partial F''$
has at most $\frac{1}{\eps}\cdot(\frac{1}{\delta})^{O(1)}$
edges. Hence, GEO-DP parametrized with $k\ge\frac{1}{\eps}\cdot(\frac{1}{\delta})^{O(1)}$
tries to partition $F$ step by step, in the $i$-th step splitting off the path $F_{i}$ from the remaining area of $F$. Finally, it partitions $F$ into path-faces $F_{1},...,F_{m}$ and a cycle-face $F''$.
Knowing from the proof of Lemma~\ref{lem:dp-for-paths} that GEO-DP
solves the subproblem for each path optimally if $k$ is sufficiently
large compared to the length of the path, we conclude that GEO-DP parametrized with $k\ge \frac{1}{\eps} \cdot(\frac{1}{\delta})^{O(1)}$
computes a solution for $F'$ with weight at least $(1-\eps)w(F')$. By continuing
with the same arguments for $F''$, in case $F'' \neq \emptyset$, we get that GEO-DP computes a solution
for $F$ whose weight is at least $(1-\eps)w(F)$. 

With similar adjustments
as in the proof of Lemma~\ref{lem:dp-for-paths} we show that the
same is true in the setting of arbitrary rectangles at the cost of
an increase in complexity by a factor of $(\frac{1}{\delta})^{O(1)}$.\end{proof}

\begin{proof}[Proof of Theorem~\ref{thm:ptas-for-large}]
Fix $k:=(1/\eps)(1/\delta)^{O(1)}$.
For each set of rectangles $\R$ and for each polygon in the original input instance there is a corresponding polygon, containing
the same subset of rectangles,
in the instance obtained after the preprocessing performed by GEO-DP (which ensures that
all corners of rectangles have coordinates within $\{0,...,2n-1\}$, where $n=|\R|$).
Hence, it suffices to show that GEO-DP, when executing without the preprocessing routine, achieves the claimed approximation ratio 
on the original input instance.

Let $\R$ be an optimal solution for a given instance of the problem, and let $\L$ be the set of lines constructed for $\R$. From Lemma \ref{lem:maze-properties} we have $|\L|\le (1/\eps)(1/\delta)^{O(1)}$, and so $\L$ partitions the original input square into at most $k$ faces, and GEO-DP will consider such a partition of the input square.
From Lemma \ref{lem:maze-properties}, the total
weight of intersected rectangles from the optimal solution $\R$ in this partition is upper
bounded by $\eps\cdot w(\R)$. When recursing on each resulting face,
due to Lemmas~\ref{lem:dp-for-paths} and \ref{lem:dp-for-cycles}
GEO-DP obtains a $(1+\eps)$-approximative solution for each subproblem.
Thus, in total we obtain an approximation ratio of $1+O(\eps)$.

When GEO-DP performs the preprocessing first, in the modified input instance it will consider the same recursive partitioning of the input square, obtaining the same approximation ratio. From Proposition \ref{prop:GEO-DP-running-time}, the running time of GEO-DP parametrized with $k=(1/\eps)(1/\delta)^{O(1)}$ is upper bounded by $n^{(1/\eps)^2(1/\delta)^{O(1)}}$, and so it yields a PTAS for $\delta$-large rectangles for any constant $\delta > 0$.
\end{proof}

\begin{proof}[Proof of Corollary~\ref{cor:PTAS-roughly-same-size}] 
Suppose that we are given
an input  instance for which there is a value $K$ such that for any rectangle $R_i$ we have $\max\{g_{i},h_{i}\}\in[K,1/\delta\cdot K]$,
and all input coordinates are within $\{0,...,N\}$ for some integer
$N$. We define a grid with offset $a\in\mathbb{N}$ whose grid
cells have height and width $1/\delta\cdot K/\eps$, i.e., we define
a cell $[a+i\cdot1/\delta\cdot K/\eps,a+(i+1)\cdot1/\delta\cdot K/\eps]\times[a+j\cdot1/\delta\cdot K/\eps,a+(j+1)\cdot1/\delta\cdot K/\eps]$
for each $i,j\in\{-1,...,\left\lceil N\eps/(1/\delta \cdot K)\right\rceil \}$. 

Since $\max\{g_{i},h_{i}\}\le (1/\delta)\cdot K$ for each rectangle $R_{i}$,
when choosing a random offset $a\in\{0,...,\left\lceil 1/\delta\cdot K/\eps\right\rceil \}$,
in expectation the intersected rectangles from an optimal solution
$OPT$ have a total weight of at most $O(\eps)\cdot w(OPT)$. We take
this random offset and consider the resulting subproblems in each
grid cell which contains at least one input rectangle (there can be
at most $n$ such grid cells), i.e., each subproblem consists of the original instance
restricted to the rectangles contained in the respective grid cell. 
Since each grid cell has height and width $1/\delta\cdot K/\eps$
and $\max\{g_{i},h_{i}\}\in[K,1/\delta\cdot K]$ for each rectangle
$R_{i}$, we conclude that each grid cell constitutes a subinstance
in which all rectangles are $(\eps\cdot \delta)$-large. 
During the recursion process, GEO-DP guesses exactly this subdivision 
(not all at once but e.g., in a quad-tree fashion).
Using Theorem~\ref{sec:PTAS-large-rectangles} we know that
GEO-DP computes a $(1+\eps)$-approximation for each of them, given
that $k\ge (1/\eps)(\eps\cdot \delta)^{O(1)}$.
Altogether, we obtain
a $(1+O(\eps))$-approximation algorithm for the overall problem.
\end{proof}

\end{document}